\newtheorem{proposition}{Proposition}
\newtheorem{theorem}{Theorem}
\newtheorem{lemma}{Lemma}
\newcommand{\sumkkjj}{\sum_{k,k'=1}^d \sum_{j,j'=1}^n}
\newcommand{\Ikkjj}{I(x_{k,j} = x_{k',j'})}
\newcommand{\cphio}{\ | \ \phi_1 \ne 1}
\newcommand{\cRallB}{\ | \ R,\{B_i\}}
\newcommand{\p}{p_1}
\newcommand{\q}{p_2}
\newcommand{\RB}{\frac{R}{B}}
\newcommand{\RLB}{\frac{RL}{B}}
\newcommand{\LPLSL}{\lim_\text{LPLS}}
\renewcommand{\P}{\mathcal{P}}
\newcommand{\brand}{\hat{b}}
\newcommand{\Ih}{\mathcal{I}_{h}^\epsilon}
\newcommand{\cIh}{\ | \ \frac{RL}{B} \in \mathcal{I}_{h}^\epsilon}
\newcommand{\F}{\mathcal{F}}
\renewcommand{\P}{\mathcal{P}}
\newcommand{\PL}{\mathcal{P^\text{lab}}}
\newcommand{\D}{\mathcal{D}}
\newcommand{\KC}{\text{KC}}
\newcommand{\IM}{\text{IM}}
\renewcommand{\SS}{\text{SS}}
\newcommand{\G}{\mathcal{G}}
\newcommand{\GKC}{\text{G/KC}}
\newcommand{\Td}{T_\text{scat}}
\newcommand{\SR}{\Delta_\text{sample}}
\newcommand{\Torus}{\mathbb{T}^2}
\newcommand{\Dsample}{\Delta_\text{sample}}
\newcommand{\Emut}{E_\text{mut}}
\newcommand{\calA}{\mathcal{A}}
\newcommand{\SPL}{\bar{\Xi}}
\newcommand{\SP}{\Xi}
\newcommand{\srand}{\hat{s}}
\newcommand{\DNA}{\mathcal{S}}
\newcommand{\IS}{\text{WC}}
\newcommand{\gen}{\text{gen}}
\newcommand{\AV}{\text{ave}}
\begin{document}

\title{The Distribution of $F_{st}$ and other Genetic Statistics for a Class of Population Structure Models }


\author{Sivan Leviyang\thanks{Georgetown University.  Department of Mathematics.  sr286@georgetown.edu}}

\maketitle



\begin{abstract}
We examine genetic statistics used in the study of structured populations.  In a 1999 paper, Wakeley observed that the coalescent process associated with the finite island model can be decomposed into a scattering phase and a collecting phase.  In this paper, we introduce a class of population structure models, which we refer to as G/KC models, that obey such a decomposition.  In a large population, large sample limit we derive the distribution of the statistic $F_{st}$ for all G/KC models under the assumptions of strong or weak mutation.  We show that in the large population, large sample limit the island and two dimensional stepping stone models are members of the G/KC class of models, thereby deriving the distributions of $F_{st}$ for these two well known models as a special case of a general formula.  We show that our analysis of $F_{st}$ can be extended to an entire class of genetic statistics, and we use our approach to examine homozygosity measures.  Our analysis uses coalescent based methods.

\end{abstract}





\section{Introduction}
\setcounter{equation}{0}
\setcounter{lemma}{0}

	Biological populations are often divided into subpopulations between which migration is restricted.   Such populations, referred to as structured populations, have been an important area of population genetics research since the 1930s \cite{Wright_1931_Genetics}.  In application, various statistics based on genetic data are used in hypothesis testing to understand structured populations.   An example of such a genetic statistic is $F_{st}$.  $F_{st}$, which we define precisely below, is used to test for the presence of population structure and to estimate migration rates \cite{Weir_1984_Evolution, Weir_2002_Ann_Rev_Genet, Wright_1942_Genetics}.  

	The analysis of $F_{st}$ has a long history that reflects the history of population genetics.  $F_{st}$ was introduced by Wright in the context of single locus, biallelic data \cite{Wright_1942_Genetics}.  Over time, $F_{st}$ was generalized to multiple loci, multiple allele data (e.g. \cite{Nei_1973_Proc_Nat_Acad_Sci, Weir_1984_Evolution}) and to sequence data  (e.g. \cite{Lynch_1990_Mol_Bio_Evol}).  Initially, Wright considered $F_{st}$ under the infinite island model for population structure.  Over time, $F_{st}$ was analyzed under the finite island model  (e.g. \cite{Nei_1977_TPB, Takahata_1983_Genetics}), stepping stone models (e.g. \cite{Crow_1984_PNAS}), and some more general population structure models (e.g. \cite{Wilkinson_Herbots_1998_J_Math_Bio}).  The method of analysis of $F_{st}$ moved from frequency based methods to coalescent methods (e.g. \cite{Cox_2002_Ann_Appl_Prob, Nei_1977_TPB, Slatkin_1991_Genet_Res_Camb}).
	
	But today, the distribution of $F_{st}$ is still poorly understood.    The distribution of $F_{st}$ is known only for the island model in the case of single locus, multiallelic data \cite{Leviyang_2008}.    How the distribution of $F_{st}$ changes under different models of population structure and genetic data is not known.    $F_{st}$, in all its forms, is just one example of a general problem.  We know very little about the distribution of genetic statistics under population structure, and what we know about these statistics is confined to very specific models.  In application, this lack of knowledge has important consequences.  First, since distributions are not known, the construction of confidence intervals can only be done through resampling techniques \cite{Weir_Book_Genetic_Data_Analysis_II}.  Second, since results are not generalizable beyond specific models, hypothesis tests assume a null hypothesis that includes a specific form of population structure.  By including such assumptions the utility of hypothesis testing is severely limited \cite{Whitlock_1999_Heredity}.

	In this paper we address some of these issues by analyzing $F_{st}$ and other genetic statistics over a class of population structure models which we call G/KC models.  G/KC models are limiting versions of models that obey the scattering-collecting phase decomposition introduced by Wakeley \cite{Wakeley_1999_Genetics}.  We consider a large population, large sample limit, thereby removing statistical variance and focusing on evolutionary variance (see \cite{Weir_Book_Genetic_Data_Analysis_II} for a discussion of this issue).  In this setting, we derive a formula for the distribution of $F_{st}$  for any G/KC model under the assumption of weak or strong mutation.  We show that in the large population, large sample limit, the island and two-dimensional stepping stone models correspond to certain G/KC models, thereby deriving the distribution of $F_{st}$ for both the island and stepping stone models as a special case of the more general formula for G/KC models.  We further show that our approach to the analysis of $F_{st}$ can be applied to a whole class of genetic statistics which we refer to as diversity measures and of which $F_{st}$ is an example.  In proving our results we assume a haploid population of constant size under a Moran mating scheme.
	
	Our analysis uses coalescent based methods, see \cite{Durrett_Book_Probability_Models} for a good introduction.  With this in mind, we describe the island, stepping stone, and G/KC models by specifying their coalescent processes.  We consider the island and two dimensional stepping stone models because of their central role in population genetics.  Other models can be analyzed by our methods, see \cite{Matsen_2006_Genetics} for a whole class of such models.
	
	The rest of this paper is organized as follows.  In section \ref{S:Basic_Defs} we introduce basic definitions that we need to present our results.  In section \ref{S:results} we present our results.  In section \ref{S:applications} we apply our results in several different settings of practical interest.  We discuss $F_{st}$ under a single locus, infinite allele model, under a mutilocus, biallelic model, and under an infinite sites model.  We also use our results to compare homozygosity measures under the island and stepping stone models.  Sections \ref{S:convergence}-\ref{S:weak} contain the proofs of the theorems stated in section \ref{S:results}.  Section \ref{S:convergence} connects the G/KC coalescent to the island and stepping stone model coalescents, while sections \ref{S:strong} and \ref{S:weak} prove results concerning $F_{st}$.

\section{Diversity Measures and Coalescent Models}  \label{S:Basic_Defs}
\setcounter{equation}{0}
\setcounter{lemma}{0}

	In this section we introduce some basic definitions.  In subsection \ref{S:DM} we give a general definition for diversity measures and the diversity measure $F_{st}$ in particular.  In subsection \ref{S:CM} we introduce the island and stepping stone model coalescent processes along with the Kingman coalescent.  Finally in subsection \ref{S:GKC} we introduce the G/KC coalescent.

\subsection{Diversity Measures}  \label{S:DM}
\setcounter{equation}{0}

	We consider a population that is separated into $D$ subpopulations.  We refer to these subpopulations as demes.  Each deme is composed of $N$ individuals and the population size of each deme is fixed at $N$ over all times.  At time $0$, we sample individuals from $d$ demes.  From each sampled deme we sample $n$ individuals.  So we sample $nd$ individuals in all.

	From each sampled individual we obtain a genetic state.  Let $\DNA$ be the set of all mappings from $\mathbb{N} \to [0,1]$.  A genetic state $\xi$ is an element of $\DNA$.   Set
\begin{equation}
x_{k,j}^\gen = \text{ genetic state of } j\text{th sampled individual in }k\text{th sampled deme}. 
\end{equation}
Then $x_{k,j}^\gen \in \DNA$ and $x_{k,j}^\gen(i) \in \{0,1\}$.  We say that $G$ is a diversity measure if it is a bounded function of $x_{k,j}^\gen$ over $k=1,2,\dots,d$ and $j=1,\dots,n$ that is symmetric in $j$ for fixed $k$ and symmetric in $k$ for fixed $j$.

	Let $\chi()$ be the indicator function (i.e. $\chi(\text{true}) = 1, \chi(\text{false}) = 0)$.   We introduce two specific diversity measures on which our technical analysis focuses:  the homozygosity measures $\phi_0$,$\phi_1$ and $F_{st}$.  We use the definition and notation given by Nei in \cite{Nei_1973_Proc_Nat_Acad_Sci}.

\begin{flushleft}
\textit{Homozygosity Measures:}\\
\end{flushleft}

\begin{equation}   \label{E:formula_phi_0_k}
\phi_{0,k} = \frac{1}{n^2} \sum_{j,j' = 1}^n \chi(x_{k,j}^\gen = x_{k,j'}^\gen).
\end{equation}

\begin{equation}   \label{E:formula_phi_0}
\phi_0 = \frac{1}{d} \sum_{k=1}^d \phi_{0,k}.
\end{equation}

\begin{equation}  \label{E:formula_phi_1}
\phi_1 = \frac{1}{d^2} \sum_{k,k'=1}^d \frac{1}{n^2} \sum_{j,j' = 1}^n \chi(x_{k,j}^\gen = x_{k',j'}^\gen).
\end{equation}

\begin{flushleft}
\textit{$F_{st}$:}\\
\end{flushleft}

For $\phi_1 \ne 1$
\begin{equation}  \label{E:F_st_IA}
F_{st} = \frac{\phi_0 - \phi_1}{1 - \phi_1}.
\end{equation}

\subsection{Coalescent Models}  \label{S:CM}
	
	We model the evolution of a structured population by specifying a coalescent process.  Coalescent processes are Markov jump processes.  We start by defining the state space for these coalescent processes.  We use the notation found in \cite{Limic_2006_Elec_J_Prob}.
	
	Let $\G = \{g_1, g_2, \dots, g_D\}$.  $\G$ represents the demes composing the population.    Let $\F = \bigcup_{k=1}^d \bigcup_{j=1}^n \{x_{k,j}\}$.  $\F$ is the set of all individuals sampled from the population.  Note that $x_{k,j}$ is simply an element of $\F$ serving to represent the $j$th sampled individual from the $k$th sampled deme as oppose to $x_{k,j}^\gen$ which represents genetic data.  Let $\F_k =  \bigcup_{j=1}^n \{x_{k,j}\}$.  $\F_k$ is the set of individuals sampled from the $k$th sample deme.  Let $\P$ be the set of partitions of $\F$.  A partition of $\F$ corresponds to a collection of disjoint sets $ E_1, E_2, \dots, E_m$ such that $\bigcup_{i=1}^m E_i = \F$.  We specify $\pi \in \P$ by $\pi = \{E_1, E_2, \dots, E_m\}$, and refer to the $E_i$ as the \textit{blocks} of $\pi$.   Let  $\PL$ be the set of partitions of $\F$ in which each block is assigned a label from $\G$.  That is,
\begin{equation}
\PL = \{\{(E_1, g_1), (E_2, g_2), \dots, (E_m, g_m)\} : \bigcup_{i=1}^m E_i = \F, g_i \in \G\}
\end{equation}
Intuitively, $g_i$ is the deme occupied by block $E_i$.  For $\pi \in \PL$ we let $|\pi|$ represent the number of blocks forming $\pi$.  We define a coalescent process as a Markov process in which only two type of state jumps are possible.
\begin{enumerate}
\item A labeled block $(E,a)$ may change to $(E, a')$.  This is a migration event.
\item Two blocks $(E_1,a)$ and $(E_2,a)$ may combine to form a single block $(E_1 \cup E_2, a)$.  This is a coalescent event.
\end{enumerate}

	  We let $\Pi(t)$ represent the state of a coalescent process at time $t$.  So $\Pi(t) \in \PL$.   The different coalescent processes are specified through their different transition probabilities.  We first consider three standard coalescent processes: the Kingman coalescent, island model coalescent, and stepping stone model coalescent.
	
\begin{flushleft}
\textit{Kingman Coalescent}:
\end{flushleft}
	
	We denote the Kingman coalescent by $\Pi_\KC(t)$.  In the Kingman coalescent we have $D=1$ and so we can ignore the labels of the blocks.  The jump rates of $\Pi_\KC(t)$ are given by the following rule:
\begin{flushleft}
Two blocks $\{E_i\}$ and $\{E_j\}$ coalesce into $\{E_i \cup E_j\}$ at rate $1$.
\end{flushleft}

\begin{flushleft}
\textit{Island Model Coalescent}:
\end{flushleft}

	We denote the island model coalescent by $\Pi_\IM(t)$.  In this model we set $\G = \{1,2,\dots,D\}$.   The jump rates of $\Pi_\IM(t)$ are given by the following rule:
\begin{enumerate}
\item The labeled block $\{E_i, a_i\}$ migrates to $\{E_i, a_i'\}$ at rate $\frac{m}{D}$.
\item Two labeled blocks $\{E_i, a\}$ and $\{E_j, a\}$ coalesce into $\{E_i \cup E_j, a\}$ at rate $\frac{1}{N}$.
\end{enumerate}
$m$ is the migration rate.  The island model is a completely symmetric model, a migrant is equally likely to migrate to any deme.

\begin{flushleft}
\textit{Stepping Stone Coalescent}:
\end{flushleft}

	We denote the stepping stone model coalescent by $\Pi_\SS(t)$.  In this model we let $\G$ be the lattice in $\mathbb{Z}^2$ specified  by $[0, 1,2,\dots,W-1] \times [0,2,\dots,W-1]$.  To make a connection to the island model case we set $D = W^2$.  We think of $\G$ as a torus.  The neighbor demes of deme $(i,j)$ are $(i+1,j), (i-1,j), (i,j+1), (i,j-1)$ where the arithmetic is modulo $W$.   The jump rates of $\Pi_\SS(t)$ are given by the following rules:
\begin{enumerate}
\item A block $E_i$ migrates from its current deme to a neighboring deme at rate $\frac{m}{4}$.  
\item If two blocks, $E_i$ and $E_j$, occupy the same deme then they coalesce at rate $\frac{1}{N}$.  
\end{enumerate}

	In all the models we consider, genetic diversity is created by mutations.  To model mutation, we assume that blocks experience mutations at rate $\mu$. At $t=0$, we set $x_{k,j}^\gen(i) = 0$ for all $k,j,i$.  We let $e(t)$ be the mutation counter.  That is, $e(0) = 0$ and every time a mutation occurs $e(t)$ is incremented by $1$.  When a block, say $E$, mutates we set $x_{k,j}^\gen(e(t)) = 1$ for every $x_{k,j} \in E$.  Often, in the case of the Kingman coalescent we will make the mutation rate explicit by writing $\Pi_\KC(t,\mu)$.  For the island and stepping stone model coalescents we define $\theta = \mu N D$.  
	
	While diversity measures are defined as functions on the $x_{k,j}^\gen$, the value of each $x_{k,j}^\gen$ is determined by the underlying coalescent.  For this reason we write $G(\Pi(t))$ to mean $G$ under the coalescent process $\Pi(t)$.
	
\subsection{The G/KC coalescent}  \label{S:GKC}

	In \cite{Wakeley_1999_Genetics}, Wakeley pointed out that the dynamics of $\Pi_\IM(t)$ can be decomposed into two phases: a scattering phase and a collecting phase.  The scattering phase describes the initial phase of $\Pi_\IM(t)$ in which blocks migrate away their start demes until every block occupies a separate deme.   Then, in the collecting phase,  blocks that occupy separate demes migrate to common demes and coalesce until a single block remains.  As Wakeley pointed out, the collecting phase is well modeled by the Kingman coalescent.
	
	We distill three key components of the scattering-collecting decomposition that can be applied in a more general setting than the island model.
	
\begin{enumerate}
\item During the scattering phase, no two blocks that contain individuals from separate sampled demes coalesce.
\item The scattering phase occurs on a much faster time scale then the collecting phase.
\item During the collecting phase, the coalescent is well described by the Kingman coalescent
\end{enumerate}

	We introduce a coalescent process that is a limiting version of these three requirements.  We refer to this coalescent as the G/KC coalescent and denote it $\Pi_\GKC(t)$.  Like $\Pi_\KC(t)$, the blocks of $\Pi_\GKC(t)$ are not labeled.  To define $\Pi_\GKC$ we specify a random partitioning of each $\F_k$.   More precisely, we assume that $\F_k$ is partitioned into $B_k$ blocks, $E_{k,1}, E_{k,2}, \dots, E_{k,B_k}$. We set $b_{k,j} = \frac{|E_{k,j}|}{n}$ which implies $b_{k,1} + b_{k,2} + \dots + b_{k,B_k} = 1$.  For $k=1,2,\dots,d$, the tuples $(B_k, b_{k,1}, \dots, b_{k,B_k})$ are i.i.d.  Since diversity measures are symmetric in the individuals forming each $\F_k$, we need only specify $|E_{k,j}|$.  
	
	The random partitioning is then used to define the initial condition of the $G/KC$ coalescent.
\begin{equation}
\Pi_\GKC(0) = \cup_{k=1}^d \cup_{j=1}^{B_k} \{E_{k,j}\}.
\end{equation}
The dynamics of the G/KC coalescent are given by a Kingman coalescent with mutation rate $r$.  That is, for some $r > 0$
\begin{equation}
\Pi_\GKC(t) = \Pi_\KC(t;r)
\end{equation}
$\Pi_\GKC$ is simply the Kingman coalescent run at mutation rate $r$ with a random initial partitioning of the $\F_k$.  The G/KC coalescent is specified by $r$ and the distribution of $(B_k, b_{k,1}, \dots, b_{k,B_k})$.

	The G/KC coalescent is a limiting version of Wakeley's scattering-collecting decomposition.  The scattering phase, which occurs on a fast time scale for $\Pi_\IM(t)$, is instantaneous in $\Pi_\GKC(t)$ and is completely general in its distribution (hence the G in G/KC).  The collecting phase, which occurs on a slower time scale, is described by the Kingman coalescent (hence the KC in G/KC).
	
	To each G/KC coalescent we associate scattering probabilities.  Let $I,j_1, \dots, j_I$ be positive integers and set $J=j_1 + j_2 + \dots + j_I$.  Suppose we select $J$ individuals from $\F_k$.  Then $\SP(j_1,j_2,\dots,j_I)$ is the probability that the $J$ individuals are partitioned into $I$ sets of size $j_1, j_2, \dots, j_I$ by the blocks $E_{k,1}, E_{k,2}, E_{k,B_k}$.  We refer to $\SP(j_1,j_2,\dots,j_I)$ as a scattering probability.

\section{Results}  \label{S:results}
\setcounter{equation}{0}
\setcounter{lemma}{0}

	We consider diversity measures in the large population, large sample (LPLS) limit which we write as $\LPLSL$ and define as follows.  In the LPLS limit we take $N, D, n, d \to \infty$.  The limit requires some further assumptions depending on the coalescent process we are considering. When we consider the island model, we set $\Gamma = Nm$ and assume that $\Gamma, \theta$ are held fixed while $\frac{(nd)^2}{\sqrt{D}} \to 0$, $\frac{\log^2(n)}{d} \to 0$.   In the case of the stepping stone model we follow \cite{Cox_2002_Ann_Appl_Prob} by setting $\alpha = \frac{2\pi Nm}{\log W}$.   We then fix $\alpha, \theta$  while $\frac{(nd)^2 \log \log W}{\sqrt{\log W}} \to 0$. We also require that sample demes are separated by a distance of at least $\Dsample = \frac{W}{\sqrt{\log W}}$.   In considering G/KC coalescents, we fix $r$, assume the tuples $(B_k, b_{k,1}, \dots, b_{k,B_k})$ converge in distribution, and take $\frac{E[B_1^2]}{d} \to 0$.   Since $(B_k, b_{k,1}, \dots, b_{k,B_k})$ converges, the limit of $\SP$ exists and we set $\SP \to \SPL$.  Whenever we refer to a limit, we are considering the LPLS limit unless we specify otherwise.  
	
	Our first two results demonstrates that the analysis of diversity measures under the island  or stepping stone model coalescents can be reduced to the analysis of diversity measures for G/KC coalescents.   Define
\begin{equation}
\Upsilon_j = \beta_j \prod_{i=1}^{j-1} (1 - \beta_{i})
\end{equation}
and where the $\beta_j$ are i.i.d as Beta$[1, 2\Gamma]$.  Then we have the following result.
	
\begin{theorem}  [Island Model Convergence] \label{T:IM_convergence}
Let G be a diversity measure.  Then,
\begin{equation}  \label{E:IM_G_convergence}
\LPLSL G(\Pi_\IM(t)) = \LPLSL G(\Pi_\GKC(t))
\end{equation}
where $r = \theta \frac{1 + 2\Gamma}{2\Gamma}$, $B_k \to \infty$, and for fixed $J$,
\begin{equation}
(b_{k,1}, b_{k,2}, \dots, b_{k,J}) \to 
	 (\Upsilon_1, \Upsilon_2, \dots, \Upsilon_J)
\end{equation}
\end{theorem}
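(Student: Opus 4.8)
\section*{Proof proposal}

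The plan is to realize Wakeley's scattering--collecting decomposition as an explicit approximation of $\Pi_\IM(t)$ and to control the resulting errors in the LPLS limit. Since a diversity measure $G$ is a bounded function of the genetic states $x_{k,j}^\gen$, and these states are determined by the genealogy of the sampled lineages together with the Poisson mutations placed along it, it suffices to build a coupling under which the genealogy-plus-mutation structure of $\Pi_\IM(t)$ agrees, with probability tending to $1$, with that of $\Pi_\GKC(t)$ for $r = \theta\frac{1+2\Gamma}{2\Gamma}$ and scattering partition given by the $\Upsilon_j$. Boundedness of $G$ then upgrades this agreement of genealogies to the equality in distribution \eqref{E:IM_G_convergence}. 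I would therefore spend the proof constructing the coupling and bounding the discrepancy, deferring the actual computation of the law of $G$ to the later sections.

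First I would separate the two time scales. With $\Gamma = Nm$ fixed, a single lineage emigrates at total rate $\approx m = \Gamma/N$ and a pair in a common deme coalesces at rate $1/N$, so all within-deme activity is resolved on the $O(N)$ scale; by contrast, two lineages occupying distinct demes must first migrate into a common deme, an event of rate $O(m/D)=O(\Gamma/(ND))$, so coalescence between scattered lineages lives on the $O(ND)$ scale. Rescaling time by $ND$, the scattering phase becomes instantaneous as $D\to\infty$, which is exactly the limiting feature built into $\Pi_\GKC$.

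Next I would pin down the scattering partition. Restricting to the $n$ lineages of a single sampled deme and running time at the $O(N)$ rate, the within-deme process is a coalescent-with-killing in which each pair coalesces at rate $1$ and each lineage is killed (emigrates, never to return) at rate $\Gamma$; with $D\to\infty$ the probability that an emigrant returns, or that lineages originating in two distinct sampled demes interact during scattering, is negligible. The latter event is what the hypothesis $\frac{(nd)^2}{\sqrt D}\to 0$ is designed to control, since there are $O((nd)^2)$ pairs and each collides during the short scattering window with probability $O(D^{-1/2})$. This killing-coalescent is precisely the genealogical representation of the Ewens sampling formula: emigration plays the role of mutation, and matching the per-lineage killing rate $\Gamma$ against the Ewens rate $\theta_E/2$ gives $\theta_E = 2\Gamma$. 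Standard results on the Ewens/GEM correspondence then yield $B_k\to\infty$ (with $B_k\approx 2\Gamma\log n$, which incidentally matches the island hypothesis $\frac{\log^2 n}{d}\to 0$ to the G/KC hypothesis $\frac{E[B_1^2]}{d}\to 0$) and the convergence of the size-biased block frequencies to the stick-breaking weights $\Upsilon_j=\beta_j\prod_{i<j}(1-\beta_i)$ with $\beta_j\sim\mathrm{Beta}[1,2\Gamma]$, identifying $\SP\to\SPL$.

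Finally I would compute the collecting-phase rate via an averaging argument. Two scattered lineages enter a common deme at rate $\approx 2m/D$ and, once together, coalesce before separating with probability $\frac{1/N}{1/N+2m}=\frac{1}{1+2\Gamma}$, so the effective pairwise coalescence rate is $\frac{2\Gamma}{ND(1+2\Gamma)}$; since mutations fall at rate $\mu=\theta/(ND)$ per lineage, the ratio of mutation rate to effective coalescence rate is $r=\frac{\mu ND(1+2\Gamma)}{2\Gamma}=\theta\frac{1+2\Gamma}{2\Gamma}$, exactly the stated value, and the collecting genealogy converges to a Kingman coalescent run at this rate. I expect the main obstacle to be the simultaneous rigorous justification of these two approximations: showing that the quasi-stationary averaging reduction to the effective Kingman coalescent holds uniformly over the growing number of collecting lineages, and that the scattering and collecting coupling errors remain negligible when the $n\to\infty$ (GEM) and $d\to\infty$ (Kingman) limits are taken together under the stated rate conditions. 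The bookkeeping of these coupling errors, rather than either rate computation, should be the technical heart of the argument.
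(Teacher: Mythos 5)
Your proposal follows essentially the same route as the paper: the paper also works through Wakeley's decomposition by fixing a scattering window (it takes $\Td = N\sqrt{D}$), proving an independence condition with exactly your $O((nd)^2/\sqrt{D})$ pair-collision bound, showing the collecting phase couples to a Kingman coalescent via the same semi-scattered-state averaging that yields the coalescence probability $\frac{1}{1+2\Gamma}$ and hence $r = \theta\frac{1+2\Gamma}{2\Gamma}$, and identifying the scattering partition through the Ewens sampling formula (via Hoppe's urn and a Donnelly--Tavare theorem, equivalent to your killing-coalescent representation), including the check $E[B_1^2]/d = O(\log^2 n/d) \to 0$. The only step you leave implicit that the paper states separately is that no mutations fall during the scattering window, a one-line estimate ($\mu \cdot nd \cdot \Td \to 0$) already contained in your time-scale separation.
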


\begin{theorem}  [Stepping Stone Model Convergence] \label{T:SS_convergence}
\begin{equation}  \label{E:SS_G_convergence}
\LPLSL G(\Pi_\SS(t)) = \LPLSL G(\Pi_\GKC(t))
\end{equation}
where $r = \theta \frac{1 + \alpha}{\alpha}$ and $\LPLSL (B_k, b_{k,1}, \dots, b_{k,B_k})$ is distributed as the blocks of $\Pi_\KC^{(\infty)}(\log(\frac{1 + \alpha}{\alpha}))$.
\end{theorem}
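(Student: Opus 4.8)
The plan is to follow the template set by the proof of Theorem \ref{T:IM_convergence}: I would show that the stepping stone coalescent $\Pi_\SS(t)$, after the Cox time rescaling, obeys Wakeley's scattering-collecting decomposition in the limiting sense built into the definition of a G/KC coalescent, and then invoke the general reduction established in Section \ref{S:convergence}. That reduction says that a diversity measure $G$ depends on the underlying coalescent only through (i) the partition of each $\F_k$ produced by the scattering phase and (ii) the Kingman collecting dynamics together with its effective mutation rate. Consequently it suffices to verify the three G/KC requirements for $\Pi_\SS$ and to identify the limiting scattering law $\SPL$ (equivalently the limiting distribution of $(B_k,b_{k,1},\dots,b_{k,B_k})$) and the rate $r$.

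The analytic input is the description of the two-dimensional stepping stone coalescent due to Cox \cite{Cox_2002_Ann_Appl_Prob}. On the torus $\G$ of side $W$, lineages perform total-rate-$m$ nearest-neighbor walks and coalesce at rate $1/N$ when colocated. I would extract three facts from that analysis, one per G/KC requirement. First, on the scattering time scale a lineage travels only a distance of order $\log W$, which is $o(\Dsample)$ since $\Dsample = W/\sqrt{\log W}$; because sample demes are separated by at least $\Dsample$, lineages originating in distinct sample demes cannot meet, hence cannot coalesce, during scattering, giving requirement (1). Second, since the planar walk is only barely recurrent, the expected number of within-deme encounters of a colocated pair scales like $\log W$, so the local coalescence-or-escape decision is made far faster than the collecting phase, on which well-separated lineages must traverse the whole torus before meeting; this is the time-scale separation of requirement (2). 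Third, Cox's theorem states that, once rescaled, the collected lineages coalesce according to Kingman's coalescent, which is requirement (3) and fixes the collecting dynamics.

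To pin down $\SPL$ and $r$ I would calibrate on a pair and then pass to general $n$. For two lineages sampled in a common deme, the local recurrence computation gives probability $\tfrac{1}{1+\alpha}$ that they coalesce during scattering rather than escaping into the collecting phase. Equating this with a Kingman coalescent run for time $t^\ast$, i.e. $1-e^{-t^\ast}=\tfrac{1}{1+\alpha}$, forces $t^\ast=\log\frac{1+\alpha}{\alpha}$, and the same rescaling that renders the collecting phase a standard Kingman coalescent converts the mutation rate $\mu$ into $r=\theta\,e^{t^\ast}=\theta\frac{1+\alpha}{\alpha}$, as claimed. For general $n$ the key contrast with the island model is that there an escaped lineage never returns, so the scattering partition is the GEM/stick-breaking law of Theorem \ref{T:IM_convergence}; here a migrating lineage can re-enter its deme, so the within-deme coalescent is exchangeable and, on the local rescaled time scale, is itself a Kingman coalescent coming down from infinity as $n\to\infty$. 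Freezing it at the escape time $t^\ast$ yields exactly the block structure of $\Pi_\KC^{(\infty)}(\log\frac{1+\alpha}{\alpha})$, which is the asserted limit of $(B_k,b_{k,1},\dots,b_{k,B_k})$.

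The main obstacle will be this last step: rigorously establishing that the full within-deme coalescent of $n$ lineages, and not merely of a pair, converges to $\Pi_\KC^{(\infty)}(t^\ast)$. The difficulty is that the local process is not a priori Markov in its number of blocks, since lineages leave the deme, wander, and may return, so genuine within-deme coalescence must be disentangled from incipient collecting-phase motion, and the transition from local to global behavior must be shown to concentrate at the single rescaled time $t^\ast$. I expect to control this with the same return-probability and time-scale-separation estimates used for the pair, upgraded uniformly over the $O(n^2)$ pairs of lineages — which is precisely what the hypothesis $(nd)^2\log\log W/\sqrt{\log W}\to 0$ is designed to permit — so that no cross-deme coalescence intrudes before $t^\ast$ and the block-size frequencies converge to the Kingman frequencies at time $t^\ast$.
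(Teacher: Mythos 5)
Your overall architecture matches the paper's: verify the three G/KC conditions for $\Pi_\SS$, calibrate on a colocated pair to get $r=\theta\frac{1+\alpha}{\alpha}$, identify the scattering law as $\Pi_\KC^{(\infty)}(\log\frac{1+\alpha}{\alpha})$, and your contrast with the island model (returning versus non-returning lineages) is exactly the right intuition. But your argument for the independence condition contains a genuine error, and it is inconsistent with the scattering law you yourself assert. You claim that during the scattering phase a lineage travels only a distance of order $\log W$, hence lineages from distinct sample demes ``cannot meet.'' For the within-deme partition to converge to $\Pi_\KC^{(\infty)}(\log\frac{1+\alpha}{\alpha})$, the scattering window must extend to times of order $W^2/m$ (the paper takes $\Td = W^2/(2m)$): local coalescence probability accrues only logarithmically in elapsed time, roughly as $\int_0^\beta ds/(s+\alpha)$ by real time $W^{2\beta}$, so a scattering phase confined to spatial scale $\log W$ (time scale $(\log W)^2/m$) would accumulate local Kingman time $o(1)$, i.e.\ essentially no within-deme coalescence at all, contradicting $\SPL(2)=\frac{1}{1+\alpha}$. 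On the true scattering time scale $W^2/m$, lineages wander a distance of order $W$, far exceeding $\Dsample = W/\sqrt{\log W}$, so lineages from different sample demes genuinely can meet. The independence condition is therefore not a deterministic ``cannot meet'' statement but a probabilistic one: one needs the random-walk hitting estimate that a pair started at separation at least $\Dsample$ hits within time $\Td$ with probability only $O(1/\sqrt{\log W})$ (the paper's Lemma \ref{L:SS_IC}, built on Lemma \ref{L:T_0_combined}), followed by a union bound over the $(nd)^2$ pairs using the hypothesis $\frac{(nd)^2\log\log W}{\sqrt{\log W}}\to 0$.

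The same misconception --- a spatially local, temporally sharp scattering phase --- also makes your closing plan too weak for the hardest step. The local-to-global transition does not ``concentrate at the single rescaled time $t^*$'': within-deme coalescence accrues across all spatial scales $W^{\rho}, W^{2\rho},\dots,W$ throughout $[0,\Td]$. The paper's Lemma \ref{L:Dist_SS} handles this with a multi-scale partition $t_k = W^{2k\rho}$, showing that in each window $[t_k,t_{k+1}]$ each surviving pair of blocks coalesces with probability approximately $\frac{1}{k}(1+\frac{\alpha}{k\rho})^{-1}$ while all blocks remain separated by $W^{k\rho}/\sqrt{\log W}$, and that the product over windows reproduces the Kingman transition probabilities at time $\log\frac{1+\alpha}{\alpha}$. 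Some such multi-scale (or equivalent excursion) decomposition is unavoidable here; pair estimates plus a union bound over $O(n^2)$ pairs, which is what you propose, control cross-deme interference but do not by themselves produce the Kingman block structure of the limit.
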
	

	The next result characterizes the distribution of $F_{st}$ under a G/KC coalescent.  We split into two cases.  First, we consider the case of  $r \to \infty$, which we refer to as the strong mutation case.

\begin{theorem} [Strong Mutation Case] \label{T:Fst_strong}
\begin{equation}
\lim_{r \to \infty} \LPLSL F_{st}(\Pi_\GKC(t)) = \SPL(2).
\end{equation}
\end{theorem}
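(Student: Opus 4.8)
The plan is to write $F_{st} = (\phi_0 - \phi_1)/(1-\phi_1)$, identify the LPLS limits of the two homozygosity measures $\phi_0$ and $\phi_1$ separately as functions of $r$, then send $r \to \infty$ and invoke the continuous mapping theorem. The starting point is a characterization of genetic identity under $\Pi_\GKC$: two sampled individuals satisfy $x_{k,j}^\gen = x_{k',j'}^\gen$ if and only if no mutation falls on the portion of the coalescent tree separating them, i.e. on either lineage strictly before the two lineages merge. Consequently, two individuals sharing a scattering block are always identical, whereas two individuals lying in distinct blocks at the start of the collecting phase are identical exactly when their two blocks coalesce before either is struck by a mutation in the Kingman collecting phase $\Pi_\KC(t;r)$.

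For two distinct blocks in a Kingman coalescent with per-lineage mutation rate $r$, the competition between the rate-$1$ pairwise coalescence and the rate-$2r$ distinguishing mutations gives identity probability exactly $1/(1+2r)$, independently of the number of remaining blocks. I would use this to evaluate the two limits. For $\phi_1$, same-deme pairs form an $O(1/d)$ fraction of all pairs and drop out, so $\LPLSL \phi_1$ equals the collecting-phase population homozygosity $H(r)$, a (generally random) quantity with $E[H(r)] = 1/(1+2r)$ reflecting the evolutionary variance of the infinite-alleles Kingman coalescent. For $\phi_0$, within deme $k$ a pair lies in the same scattering block with probability $\sum_l b_{k,l}^2$ and is otherwise identical with collecting-phase probability $1/(1+2r)$; averaging over the $d$ i.i.d. demes, whose block-sets have asymptotically independent allelic groupings under the condition $E[B_1^2]/d \to 0$, concentrates $\phi_0$ at $\SPL(2) + (1-\SPL(2))/(1+2r)$, using $\SP(2) \to \SPL(2)$.

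With these limits in hand the strong-mutation regime is immediate. As $r \to \infty$ we have $1/(1+2r) \to 0$, so $\LPLSL \phi_0 \to \SPL(2)$; and since $H(r) \ge 0$ with $E[H(r)] = 1/(1+2r) \to 0$, Markov's inequality gives $H(r) \to 0$ in probability, hence $\LPLSL \phi_1 \to 0$. In particular $\phi_1$ stays bounded away from $1$, so $F_{st}$ is well defined and the map $(\phi_0,\phi_1) \mapsto (\phi_0-\phi_1)/(1-\phi_1)$ is continuous at the limit point. The continuous mapping theorem then yields $\lim_{r\to\infty} \LPLSL F_{st} = (\SPL(2) - 0)/(1-0) = \SPL(2)$.

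The heart of the argument, and the step I expect to be the main obstacle, is justifying the LPLS limits of $\phi_0$ and $\phi_1$ rigorously: namely that the empirical within-deme homozygosity self-averages across demes to a deterministic constant while the global homozygosity $\phi_1$ converges to the random collecting-phase homozygosity, and that the statistical fluctuations of these empirical averages vanish in the $n,d \to \infty$ limit. This is precisely where the LPLS moment conditions, in particular $E[B_1^2]/d \to 0$, enter. If a general LPLS formula for $\phi_0$ and $\phi_1$ has already been established earlier in this section, the theorem reduces to the short $r \to \infty$ computation above.
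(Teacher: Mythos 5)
Your overall route is the same as the paper's: treat the numerator and denominator of $F_{st}$ separately, show $\phi_1$ becomes negligible and $\phi_0$ tends to $\SPL(2)$ in the double limit, then combine (the paper handles the conditioning on $\phi_1 \ne 1$ via $P(\phi_1 = 1) \le E[\phi_1]$, and boundedness of $F_{st}$, rather than a continuous mapping step, but that is cosmetic). Your $\phi_1$ argument is essentially the paper's Lemma on $\phi_1$: the pairwise race between coalescence and mutation gives $E[\phi_1] = O(1/r)$ up to an $O(1/d)$ correction from same-deme pairs, and Markov's inequality finishes; whether the pairwise identity probability is written $1/(1+2r)$ or $1/(1+r)$ is a rate convention and immaterial.

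The genuine gap is in your treatment of $\phi_0$. You claim that averaging over the $d$ i.i.d. demes \emph{concentrates} $\phi_0$ at the deterministic value $\SPL(2) + (1-\SPL(2))/(1+2r)$ for fixed $r$, justified by ``asymptotically independent allelic groupings'' across demes under $E[B_1^2]/d \to 0$. That claim is false for fixed $r$, and the moment condition is not what is at stake. The scattering contributions $\sum_l b_{k,l}^2$ are indeed i.i.d.\ and bounded across demes, so they self-average by the law of large numbers; but the collecting-phase contributions do not: identity indicators for cross-block pairs in two \emph{different} demes are driven by the same Kingman genealogy, and their covariance tends (as the number of blocks grows) to the strictly positive disjoint-pair covariance of the infinite-alleles Kingman coalescent --- exactly the ``evolutionary variance'' you yourself grant to $\phi_1$ via its random limit $H(r)$. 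It is inconsistent to let $\phi_1$ have a non-degenerate LPLS limit while asserting $\phi_0$ concentrates: a within-deme cross-block pair is statistically indistinguishable from a cross-deme pair. So $V[\phi_0]$ does \emph{not} vanish in the LPLS limit at fixed $r$; it is only $O(1/r)$, vanishing in the subsequent $r \to \infty$ limit. Supplying that bound is the actual content of the paper's proof: a coupling argument (after Rottenstreich et al.) comparing the four-lineage system for demes $k', k''$ to two independent two-lineage systems, which yields $\text{Cov}(\phi_{0,k'}, \phi_{0,k''}) \le 4\,E[I(x_{k',1} = x_{k'',1})] = O(1/r)$, since the covariance lives on the event that lineages from the two demes interact before any mutation strikes. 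Your final conclusion is recoverable --- because you take $r \to \infty$ at the end, a covariance bound vanishing in $r$ suffices --- but as written the proposal asserts concentration exactly where it fails and omits the one estimate the theorem turns on.
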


	Taking $r \to 0$ corresponds to the assumption of weak mutation.  In computing $ F_{st}$ under weak mutation we may assume that exactly one mutation occurs in the G/KC coalescent.  We assume that the mutation occurs when $|\Pi_\GKC(t)| = L$.  Define $\lambda = \LPLSL \frac{L}{d}$ and $\kappa = \LPLSL \frac{L}{E[B_1]d}$.  The following theorem shows that when $\lambda = 0$, the distribution of $F_{st}$ in the weak mutation case is the same as that in the strong mutation case.
\begin{theorem} [Weak Mutation Case] \label{T:bottom_tree}
If $\lambda = 0$ then,
\begin{equation}
\LPLSL F_{st}(\Pi_\GKC(t)) = \SPL(2).
\end{equation}
\end{theorem}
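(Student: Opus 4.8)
The plan is to first rewrite $F_{st}$, under the single mutation that weak mutation permits, as a ratio of between-deme to total variance, and then to exploit the hypothesis $\lambda = 0$ to show that the block carrying the mutation contains a diverging number of scattering blocks, which is exactly what forces the ratio to concentrate on $\SPL(2)$. Let $\hat{b}$ denote the block of $\Pi_\GKC$ carrying the single mutation. Every sampled individual then carries one of two genetic states according to whether or not it lies in $\hat{b}$. Writing $p_k$ for the fraction of the individuals of deme $k$ lying in $\hat{b}$ and $p = \frac{1}{d}\sum_{k=1}^d p_k$, substitution into (\ref{E:formula_phi_0}) and (\ref{E:formula_phi_1}) gives $\phi_1 = 1 - 2p(1-p)$ and $\phi_0 = 1 - \frac{2}{d}\sum_{k=1}^d p_k(1-p_k)$, so that (\ref{E:F_st_IA}) collapses to
\[ F_{st} = \frac{\frac{1}{d}\sum_{k=1}^d p_k^2 - p^2}{p(1-p)}. \]
The whole problem thus reduces to understanding the joint law of $(p_1,\dots,p_d)$.

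Because $\Pi_\GKC$ evolves as a Kingman coalescent whose transition rates ignore block sizes, the collection of initial scattering blocks $E_{k,j}$ that have merged into $\hat{b}$ by the mutation time is exchangeable over all $M = \sum_{k=1}^d B_k$ scattering blocks and is independent of the sizes $b_{k,j}$; moreover $\hat{b}$ is a uniformly chosen block at that time. I would condition on $K$, the number of scattering blocks inside $\hat{b}$, so that these blocks form a uniform $K$-subset of the $M$. The key consequence of exchangeability is that two distinct scattering blocks both lie in $\hat{b}$ with the hypergeometric probability $K(K-1)/(M(M-1))$, \emph{independently of whether the two blocks come from one deme or from two}. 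Hence the joint mutation-status of two distinct blocks has the same law regardless of their demes; writing $\psi$ for the probability they share this status, a within-deme pair of individuals agrees with probability $\SPL(2) + (1-\SPL(2))\psi$ and a generic sample pair agrees with probability $\psi$. At the level of conditional means this already produces the clean cancellation $\frac{\phi_0 - \phi_1}{1-\phi_1} = \frac{\SPL(2)(1-\psi)}{1-\psi} = \SPL(2)$.

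What remains, and where $\lambda = 0$ enters, is to upgrade this identity of conditional means to a statement about the random ratio. Since $K$ is the size of a uniformly chosen block of a Kingman coalescent stopped at $L$ blocks, $E[K \mid L] = M/L$; as $M \sim E[B_1]\, d$ and $\lambda = \LPLSL L/d = 0$, this forces $K \to \infty$ in probability. With $K \to \infty$, a law of large numbers over the $d$ weakly dependent demes, together with the independence of the merging pattern from the sizes and $E[\sum_j b_{1,j}^2] \to \SPL(2)$, gives $p \to \pi := K/M$ and $\frac{1}{d}\sum_k p_k^2 \to \pi\,\SPL(2) + \pi^2(1-\SPL(2))$. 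Substituting, the numerator tends to $\SPL(2)\,\pi(1-\pi)$ and the denominator to $\pi(1-\pi)$, so $F_{st} \to \SPL(2)$ for every limiting value of $\pi$, and hence unconditionally.

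I expect the concentration to be the main obstacle, in two parts: deducing $K \to \infty$ with high probability from $\lambda = 0$, which requires controlling the block-size distribution of the stopped Kingman coalescent and ruling out mass of $\hat{b}$ on small blocks; and the law of large numbers over demes, where one must handle the negative dependence induced both by fixing the total number $K$ of selected blocks and by the single shared coalescent tree. The algebra collapses to $\SPL(2)$ only after $K \to \infty$: were $\lambda > 0$, then $K$ would stay $O(1)$, $p$ would fail to concentrate, and the ratio would retain a nondegenerate limit, which is consistent with $\lambda = 0$ being exactly the regime in which $F_{st}$ matches the strong-mutation value $\SPL(2)$ of Theorem \ref{T:Fst_strong}.
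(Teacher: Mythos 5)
Your overall route is the paper's own: your identity $F_{st} = \bigl(\frac{1}{d}\sum_k p_k^2 - p^2\bigr)/\bigl(p(1-p)\bigr)$ is exactly (\ref{E:Fst_p}) with $p_k = A_k$ and $p = \p$; your uniform-subset description of the mutant blocks is Lemma \ref{L:r_k_limit}; your use of $E[\sum_j b_{1,j}^2] \to \SPL(2)$ is Lemma \ref{L:scat_prob}; and your mean-level cancellation is the computation behind Lemmas \ref{L:mean_p}--\ref{L:var_q}. The genuine gap is in the concentration step, and it is twofold. First, the criterion you propose as the driver, $K \to \infty$, is not sufficient. In the regime of Theorem \ref{T:middle_tree} (e.g.\ the island model with $\lambda \in (0,\infty)$) one has $\kappa = 0$ only because $E[B_1] \to \infty$, and then $K \approx \RLB \cdot \frac{B}{L} \approx V E[B_1]/\lambda \to \infty$ in probability by Lemma \ref{L:RB_limit}; nevertheless $F_{st}$ has a nondegenerate limit there, and your closing assertion that $\lambda > 0$ would force $K = O(1)$ fails in the same example. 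What must diverge is not $K$ but $\pi d = K d / M = \RLB \cdot \frac{d}{L}$, the conditional expectation of $\sum_k p_k$, i.e.\ the effective number of demes contributing mutant mass; this diverges precisely when $\lambda = 0$, and it is the quantity the paper's conditional variance bounds track, namely $V[L\p \cIh]$ and $V[L\q \cIh]$ of order $O(\lambda)$ against conditional means of order one.

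Second, even granting your law of large numbers, the final step divides limits of numerator and denominator. Under $\lambda = 0$ the mutation level $L$ may, and typically does, tend to infinity (any $L$ with $L/d \to 0$), and then $\pi = K/M \approx \RLB/L \to 0$ in probability; so ``numerator $\to \SPL(2)\pi(1-\pi)$, denominator $\to \pi(1-\pi)$'' are both statements of convergence to $0$, from which nothing about the ratio follows, and the clause ``for every limiting value of $\pi$'' is vacuous at the limiting value $\pi = 0$. The repair is to work at the scale the paper uses: condition on $\RLB \in \Ih$, show that $L\p$ and $L\q$ have nondegenerate conditional means (respectively $x$ and $\frac{x^2}{L} + \SPL(2)\,x(1 - \frac{x}{L}) + O(\epsilon)$ for $x \in \Ih$) with conditional variances $O(\lambda) \to 0$, and then write
\begin{equation*}
F_{st} = \frac{L\q - (L\p)^2 \frac{1}{L}}{L\p - (L\p)^2 \frac{1}{L}},
\end{equation*}
whose numerator and denominator now concentrate at nonzero limits with ratio $\SPL(2) + O(\epsilon)$; boundedness of $F_{st}$, dominated convergence over $h$, and $\epsilon \to 0$ finish. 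Equivalently, your LLN claims must be restated multiplicatively, with relative fluctuations of order $\sqrt{\lambda/x}$ --- which is exactly where the hypothesis $\lambda = 0$, rather than $K \to \infty$, enters.
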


	If $\lambda  \ne 0$, then the following results show that $F_{st}$ has a non-degenerate distribution.  

\begin{theorem}  [Weak Mutation Case] \label{T:middle_tree}
Assume $\lambda > 0$ and $\kappa = 0$.  
\begin{equation}
\LPLSL F_{st}(\Pi_\GKC(t)) = \frac{\sum_{k=1}^Q X_k^2}{\sum_{k=1}^Q X_k}
\end{equation}
where $X_k$ are i.i.d. versions of the random variable $X$ which is defined by the following moment relations
\begin{equation}
E[X^k] = \SPL(k)
\end{equation}
and $Q$ is Poisson distributed with rate $\frac{V}{\lambda}$, where $V$ is exponentially distributed with mean $1$.
\end{theorem}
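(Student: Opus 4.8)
The plan is to work in the weak-mutation picture, where we may condition on exactly one mutation occurring in $\Pi_\GKC$, at the instant the coalescent has $L$ blocks; call the block that mutates $M$. An individual is marked (carries state $1$ in the single segregating coordinate) precisely when its lineage lies in $M$ at the mutation time, so if $p_k$ denotes the fraction of the $n$ individuals of deme $k$ lying in $M$, a direct computation from the definitions of $\phi_0$ and $\phi_1$ gives $\phi_0 = \frac1d\sum_k(p_k^2+(1-p_k)^2)$ and $\phi_1 = \bar p^2 + (1-\bar p)^2$ with $\bar p = \frac1d\sum_k p_k$, whence $F_{st} = \frac{\frac1d\sum_k p_k^2 - \bar p^2}{\bar p(1-\bar p)}$. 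Since $\lambda>0$ forces $|M| = O(n)$ and hence $\bar p = |M|/(nd) = O(1/d)\to0$, the $\bar p^2$ and $(\sum_k p_k)^2/d$ corrections vanish in the limit and $F_{st}$ reduces to $\frac{\sum_k p_k^2}{\sum_k p_k}$, the sums running only over the demes actually met by $M$.

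Next I would pass from mutant fractions to the scattering (initial) blocks. Because $\Pi_\GKC$ is a Kingman coalescent started from the blocks $E_{k,j}$, the block $M$ is a union of initial blocks and $p_k = \sum_{j:\,E_{k,j}\subset M} b_{k,j}$; thus $\sum_k p_k$ is the total mass of the initial blocks swallowed by $M$ and $\sum_k p_k^2$ groups that mass by deme. The exchangeability of Kingman's coalescent over its $N_0 = \sum_k B_k$ starting blocks is the key structural input: the family subtended by a uniformly chosen block among the $L$ present at the mutation time is a uniform random subset of the $N_0$ initial blocks, independent of the deme labels and of the sizes $b_{k,j}$. Its size, normalized by $N_0/L$, converges to $V\sim\text{Exp}(1)$ by the classical uniform-composition description of Kingman family sizes in the regime $N_0,L\to\infty$; equivalently, conditionally on $V$ each initial block is included in $M$ independently with probability $q = V/(\lambda d)$.

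With this Poissonization the remaining steps are: (i) use the hypothesis $E[B_1^2]/d\to0$ to show that with probability tending to $1$ no deme contributes two initial blocks to $M$, so that the deme-grouping in $\sum_k p_k^2$ is immaterial and both sums may be taken over the included blocks individually; and (ii) identify the limit of the collection of included block sizes as a Poisson point process $\{Y_i\}$ whose intensity is $\frac{V}{\lambda}$ times the expected block-size measure $\mu(dx) = E[\sum_j \delta_{b_{1,j}}(dx)]$ of a single deme, which satisfies $\int x^k \mu(dx) = \SPL(k)$. Because $\int x\,\mu = \SPL(1) = 1$ and $\int x^2\mu = \SPL(2)$ are finite, both $\sum_i Y_i$ and $\sum_i Y_i^2$ converge and $F_{st}\to \sum_i Y_i^2/\sum_i Y_i$. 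Writing this process through its count $Q\sim\text{Poisson}(V/\lambda)$ and its i.i.d. marks $X_k$ with $E[X_k^m]=\SPL(m)$, and finally averaging over $V\sim\text{Exp}(1)$, yields the stated form of the distribution.

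\textbf{Main obstacle.} The hardest part is the joint limit in step (ii): establishing convergence of the full configuration of included block sizes to the Poisson process and, simultaneously, convergence of the ratio functional $\sum_k p_k^2/\sum_k p_k$. Since the denominator is an infinite sum of asymptotically small contributions, this requires a uniform-integrability/tightness argument (controlling the mass carried by the tiny blocks) rather than a bare continuous-mapping step, and it must be coupled with the $\text{Exp}(1)$ family-size asymptotics so that the randomness in $V$ is correctly tied to the block configuration. I expect the bulk of the technical work here is most naturally carried out by matching the joint moments of $(\sum_k p_k^2,\sum_k p_k)$ to those of the compound-Poisson limit directly through the scattering probabilities $\SPL$.
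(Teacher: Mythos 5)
Your proposal is correct and follows essentially the same route as the paper's proof: reduce $F_{st}$ to $\q/\p$ using $\p \to 0$ (Lemma \ref{L:p_q_limit}), obtain the Exp$(1)$ limit of the normalized mutant family size $\frac{RL}{B}$ from the uniform-composition law of Kingman descendant counts (Lemma \ref{L:RB_limit}), use $\frac{E[B_1^2]}{d} \to 0$ to make the within-deme grouping in $\sum_k p_k^2$ immaterial, and identify the limit by matching joint moments/characteristic functions of $(\p,\q)$ through the scattering probabilities $\SPL$ (Lemmas \ref{L:scat_prob} and \ref{L:pq_limits}, and the characteristic-function computation in the proof of Theorem \ref{T:middle_tree}). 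Your explicit Poisson-point-process framing with intensity $\frac{V}{\lambda}\mu$ is a repackaging of the paper's size-biased representation $\sum_{j=1}^{Y} \brand_j(\srand_j)$ with $Y = \lceil V E[B_1]/\lambda \rceil$, and the "main obstacle" you flag is handled in the paper exactly as you predict, by the joint characteristic-function/moment computation rather than a continuous-mapping argument.
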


\begin{theorem}  [Weak Mutation Case]  \label{T:top_tree}
Assume $\lambda > 0$ and $0 < \kappa < 1$. Let $G(\kappa)$ be a geometric random variable with success probability $\kappa$.  If $\LPLSL E[B_k] < \infty$ then
\begin{equation}
\LPLSL F_{st}(\Pi_\GKC(t)) = \frac{\sum_{k=1}^{G(\kappa)+1} W_k^2}{\sum_{k=1}^{G(\kappa)+1} W_k}
\end{equation}
where $W_k$ are i.i.d. versions of the random variable $W$ which is defined by the following moment relations
\begin{equation}
E[W^k] = \frac{\SPL(k)}{\LPLSL E[B_k]}
\end{equation}
If $\LPLSL E[B_k] = \infty$ then
\begin{equation}
\LPLSL F_{st} = 0.
\end{equation}
\end{theorem}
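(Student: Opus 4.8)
The plan is to first express $F_{st}$ explicitly in terms of the mutant configuration produced by the single mutation. Under the weak mutation assumption exactly one mutation falls on the G/KC coalescent while $L$ blocks remain, and it strikes a uniformly chosen one of those blocks, say $E^*$; the sampled individuals lying in $E^*$ carry the mutant type and all others carry the common ancestral type. With only two types present, $\chi(x^\gen_{k,j}=x^\gen_{k',j'})$ equals $1$ exactly when the two individuals share a type, so writing $p_k$ for the fraction of deme $k$ contained in $E^*$ and $p=\frac1d\sum_k p_k$ for the overall mutant fraction, a direct computation from \eqref{E:formula_phi_0}--\eqref{E:F_st_IA} gives $\phi_0-\phi_1=2\,\mathrm{Var}(p_k)$ and $1-\phi_1=2p(1-p)$, hence
\[ F_{st}=\frac{\mathrm{Var}(p_k)}{p(1-p)}. \]
Everything then reduces to identifying which initial blocks make up $E^*$ and which demes they occupy.

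Next I would determine the composition of $E^*$. Write $n_0=\sum_{k=1}^d B_k$ for the total number of initial blocks; the collecting phase is a Kingman coalescent run from $n_0$ blocks down to $L$, and $E^*$ is a uniformly chosen one of the $L$ surviving blocks. Using the explicit law of the Kingman coalescent partition, namely that a set partition of the $n_0$ initial blocks into $L$ groups of sizes $s_1,\dots,s_L$ occurs with probability proportional to $\prod_i s_i!$, I would compute the expected number of surviving blocks built from exactly $s$ initial blocks and show that in the regime $L/n_0\to\kappa$ a uniformly chosen surviving block is composed of $s$ initial blocks with probability $\kappa(1-\kappa)^{s-1}$. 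Consequently the number $\nu$ of initial blocks making up $E^*$ converges in distribution to $G(\kappa)+1$. This asymptotic count is the combinatorial core of the proof and the step I expect to be most delicate, since it requires controlling ratios of factorials uniformly in $s$ and justifying passage to the limit inside the sum over block sizes.

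I would then show that, in the limit, the $\nu$ initial blocks composing $E^*$ lie in distinct demes and have i.i.d. sizes. Since these blocks behave like a uniform sample without replacement from the $n_0\to\infty$ initial blocks, sampling-with-replacement approximations apply; the chance that two of them share a deme is of order $E[B_1^2]/(d\,E[B_1]^2)\le E[B_1^2]/d\to0$, so with high probability each occupies its own deme and contributes a single nonzero value $p_{k_i}$. The $s$-th moment of the size of a uniformly chosen initial block is $n_0^{-1}\sum_{k,j} b_{k,j}^s$, which by the law of large numbers over demes converges to $\SPL(s)/\LPLSL E[B_1]$, identifying these sizes as i.i.d. copies $W_1,\dots,W_\nu$ of $W$. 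Thus exactly $\nu$ demes carry mutants, with $p_{k_i}=W_i$ and all other $p_k=0$. Substituting into the formula above, using $p=\frac1d\sum_i W_i\to0$ and $1-p\to1$, and cancelling a common factor of $d$, I obtain $F_{st}\to(\sum_{i=1}^{\nu}W_i^2)/(\sum_{i=1}^{\nu}W_i)$, which is the asserted expression with $\nu=G(\kappa)+1$.

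Finally, when $\LPLSL E[B_1]=\infty$ the same representation holds, but now $E[W^s]=\SPL(s)/\LPLSL E[B_1]\to0$ for every $s\ge1$, so each $W_i\to0$ and $(\sum_i W_i^2)/(\sum_i W_i)\le\max_i W_i\to0$, giving $F_{st}\to0$. The two remaining technical obstacles are the interchange of the LPLS limit with the random, possibly unbounded, geometric sum, which I would handle by truncating at $\nu\le M$ and bounding the tail via the geometric law of $G(\kappa)$, and the uniform control of the error terms in the distinct-deme and i.i.d.-size approximations, for which the hypothesis $E[B_1^2]/d\to0$ is precisely what is required.
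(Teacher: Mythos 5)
Your proposal is correct and follows essentially the same route as the paper's proof: reduce $F_{st}$ to the mutant-fraction expression $\frac{\q - \p^2}{\p - \p^2}$ and then to the ratio $\q/\p$, use Kingman-coalescent exchangeability to show the number of initial blocks absorbed into the mutant block converges to $G(\kappa)+1$ (the paper's Lemma \ref{L:RB_limit}), identify the sizes of those blocks as asymptotically i.i.d.\ with $k$th moment $\SPL(k)/\LPLSL E[B_1]$ while using $\frac{E[B_1^2]}{d} \to 0$ to force distinct demes (the content of the paper's Lemma \ref{L:pq_limits}), and pass to the limit of the ratio, with the degenerate bound $F_{st} \le \max_i W_i$ handling $\LPLSL E[B_k] = \infty$. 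The differences are only in implementation — you invoke the Kingman set-partition law proportional to $\prod_i s_i!$ where the paper uses uniformity of descendant counts over compositions, and you prove the distributional convergence by empirical moments, the law of large numbers, and truncation where the paper computes characteristic functions of the pair $(\p,\q)$ via size-biased deme sampling — so these are equivalent executions of the same argument.
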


	Theorems \ref{T:IM_convergence} and \ref{T:SS_convergence} are proved in section \ref{S:convergence}.  Theorem \ref{T:Fst_strong} is proved in section \ref{S:strong}.  Theorems \ref{T:bottom_tree}-\ref{T:top_tree} are proved in section \ref{S:weak}.


\section{Applications}  \label{S:applications}
\setcounter{equation}{0}
\setcounter{lemma}{0}

	We now apply the results stated in section \ref{S:results}.  In section \ref{S:Fst} we examine the distribution of $F_{st}$ under a single locus, infinite allele model.  In section \ref{S:Fst_seq} we examine $F_{st}$ under a multiple locus, biallelic model and under an infinite sites model.  Finally in section \ref{S:Homo}, we consider homozygosity measures.  
	
\subsection{$F_{st}$}  \label{S:Fst}

	$F_{st}$ as defined in (\ref{E:F_st_IA}) corresponds to a single locus, infinite allele model.  In such a setting, the distribution of $F_{st}$ has been a subject of research for some time.  The relation $F_{st} = \frac{1}{1 + 2Nm}$ was originally proposed by Sewall Wright \cite{Wright_1942_Genetics}. A quantity related to $F_{st}$, which we label $F_{st}^*$, is defined by
\begin{equation}  \label{E:Fst_star}
F_{st}^* = \frac{E[\phi_0 - \phi_1]}{E[1 - \phi_1]}.
\end{equation}
 	In \cite{Slatkin_1991_Genet_Res_Camb, Wilkinson_Herbots_1998_J_Math_Bio} the authors derive the value of $F_{st}^*$ under the island model, while in \cite{Leviyang_2008, Rottenstreich_2007a_Theo_Pop_Bio}, the authors derive the distribution of $F_{st}$ for the island model in the strong and weak mutation cases. In \cite{Cox_2002_Ann_Appl_Prob} the authors derive the value of $F_{st}^*$ for the stepping stone model.  We note that while $F_{st}^*$ is a quantity of theoretical interest, $F_{st}$ is more relevant in application.  $F_{st}$ is a random variable, while $F_{st}^*$ is deterministic.  In this paper we consider $F_{st}$.  The previous results leave two fundamental questions unanswered.  
\begin{enumerate}
\item How is the distribution of $F_{st}$ affected by changes in the structured population model?
\item What is the distribution of $F_{st}$ for the stepping stone model?
\end{enumerate}

	The first question is answered by Theorems \ref{T:Fst_strong}-\ref{T:top_tree} for populations that converge to G/KC coalescents.  In the strong mutation case, $F_{st}$ will converge to a deterministic limit, while in the weak mutation case the distribution of $F_{st}$ can be computed and will depend on where in the coalescent the mutation occurs.  
	
	Now we turn to the second question and consider $F_{st}$ for the stepping stone model.  For completeness, we will also state the corresponding results for the island model.  To compute LPLS limits of $F_{st}$ we need to compute $\SPL(k)$ for $k \ge 2$.    For the island mode, $\SPL(k)$ is the probability that $k$ individuals in a given deme all coalesce before a migration occurs.  Simple Kingman coalescent arguments, see \cite{Durrett_Book_Probability_Models}, give
\begin{equation}  \label{E:SPL_IM}
\SPL(k) = \prod_{j=1}^{k-1} \frac{j}{j + 2\Gamma}.
\end{equation}
Note that $\SPL(2) = \frac{1}{1 + 2\Gamma}$.  For the stepping stone model, $\SPL(k) = P(|\Pi_\KC^{(k)}(\log(\frac{1 + \alpha}{\alpha})) = 1)$.  By equation 5.2 in \cite{Tavare_1984_TPB}, 
\begin{equation}  \label{E:SPL_SS}
\SPL(k) = f_k(\log(\frac{1+\alpha}{\alpha})),
\end{equation}
where  
\begin{equation}
f_k(t) = 1 + \sum_{h=2}^k  \exp[-(\frac{h(h-1)}{2})t] (-1)^{h-1} (2h-1) \binom{h}{2} \frac{\binom{k}{h}}{\binom{k+h-1}{h}}.
\end{equation}
Note that this gives $\SPL(2) = \frac{1}{1 + \alpha}$.  Using (\ref{E:SPL_IM}) and (\ref{E:SPL_SS}) and Theorem \ref{T:Fst_strong} we can compute the LPLS limits of $F_{st}$.  For strong mutation we have the following result.
\begin{proposition}  \label{P:IM}
Let $\theta \to \infty$. Then for the island model
\begin{equation}
F_{st} \to \frac{1}{1 + 2\Gamma}
\end{equation}
while for the stepping stone model
\begin{equation}
F_{st} \to \frac{1}{1 + \alpha}.
\end{equation}
\end{proposition}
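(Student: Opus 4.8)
The plan is to obtain Proposition \ref{P:IM} by composing the two convergence theorems with the strong mutation theorem, so that essentially no new computation beyond substitution is required. First I would invoke Theorem \ref{T:IM_convergence} (respectively Theorem \ref{T:SS_convergence}) to replace the island (respectively stepping stone) coalescent by a G/KC coalescent in the LPLS limit: taking $G = F_{st}$, we have
\begin{equation}
\LPLSL F_{st}(\Pi_\IM(t)) = \LPLSL F_{st}(\Pi_\GKC(t)),
\end{equation}
where the G/KC coalescent on the right carries Kingman mutation rate $r = \theta \frac{1 + 2\Gamma}{2\Gamma}$, and analogously $r = \theta \frac{1 + \alpha}{\alpha}$ for the stepping stone case. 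This reduces the problem to understanding the $\theta \to \infty$ behaviour of $F_{st}$ under a single G/KC coalescent.

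The key observation is that in the LPLS limit $\Gamma$ (respectively $\alpha$) is held fixed, so $r$ is a fixed strictly increasing linear function of $\theta$; hence $\theta \to \infty$ forces $r \to \infty$. Theorem \ref{T:Fst_strong} then gives immediately
\begin{equation}
\lim_{\theta \to \infty} \LPLSL F_{st}(\Pi_\GKC(t)) = \lim_{r \to \infty} \LPLSL F_{st}(\Pi_\GKC(t)) = \SPL(2).
\end{equation}
To finish, I would substitute the explicit values of $\SPL(2)$. Equation (\ref{E:SPL_IM}) yields $\SPL(2) = \frac{1}{1 + 2\Gamma}$ in the island case, and equation (\ref{E:SPL_SS}) yields $\SPL(2) = \frac{1}{1 + \alpha}$ in the stepping stone case, which are precisely the two claimed limits.

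The one point demanding care is the order of limits: the proposition takes $\theta \to \infty$ \emph{outside} the LPLS limit, whereas Theorem \ref{T:Fst_strong} is stated as an $r \to \infty$ limit applied \emph{after} the LPLS limit has already been taken. What makes the composition legitimate is that the scattering law $\SPL$, and in particular $\SPL(2)$, is governed entirely by the (migration-driven) scattering phase and is therefore independent of the mutation parameter $r$ and hence of $\theta$. Consequently the inner LPLS limit is taken with $r$ (equivalently $\theta$) fixed, the resulting limit $\SPL(2)$ does not move as $\theta$ is subsequently increased, and the outer $\theta \to \infty$ limit simply selects the strong mutation value furnished by Theorem \ref{T:Fst_strong}. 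I expect verifying this independence of $\SPL$ from $\theta$ to be the only real obstacle, and it is resolved by the structural separation of the scattering and collecting phases built into the definition of the G/KC coalescent.
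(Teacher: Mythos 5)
Your proposal is correct and follows essentially the same route as the paper: the paper likewise obtains Proposition \ref{P:IM} by composing Theorems \ref{T:IM_convergence} and \ref{T:SS_convergence} with Theorem \ref{T:Fst_strong} and then substituting the values $\SPL(2) = \frac{1}{1+2\Gamma}$ and $\SPL(2) = \frac{1}{1+\alpha}$ from (\ref{E:SPL_IM}) and (\ref{E:SPL_SS}). Your explicit justification of the order of limits --- that $\SPL$ is determined by the scattering phase and hence independent of $r$, so the outer $\theta \to \infty$ limit commutes harmlessly with the inner LPLS limit --- is a point the paper leaves implicit, but it does not alter the argument.
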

The exact same result holds in the case of weak mutation when $\lambda = 0$.  For $\lambda > 0$, we can numerically compute the distribution of $F_{st}$.  For instance, consider the case $\lambda = 2$.  In this case $\kappa = \LPLSL \frac{2}{E[B_k]}$.  For the island model $B_k \to \infty$, so $\kappa = 0$.  For the stepping stone model, $E[B_k]$ is finite and can be numerically computed using known formulas \cite{Tavare_1984_TPB} (we find $\kappa \approx .388$).  Using Theorem \ref{T:middle_tree} for the island model and Theorem \ref{T:top_tree} for the stepping stone model we can numerically compute the distribution of $F_{st}$.  The result is given in  figure \ref{Fi:Fst_figure} in the case $\Gamma = 1$ for the island model and $\alpha = 2$ for the stepping stone model.  In this case the mean of $F_{st}$ for the island and stepping stone models is approximately $.2$ and $.1$ respectively.  

\begin{figure} [htp]  
\begin{center}
\includegraphics[width=1\textwidth]{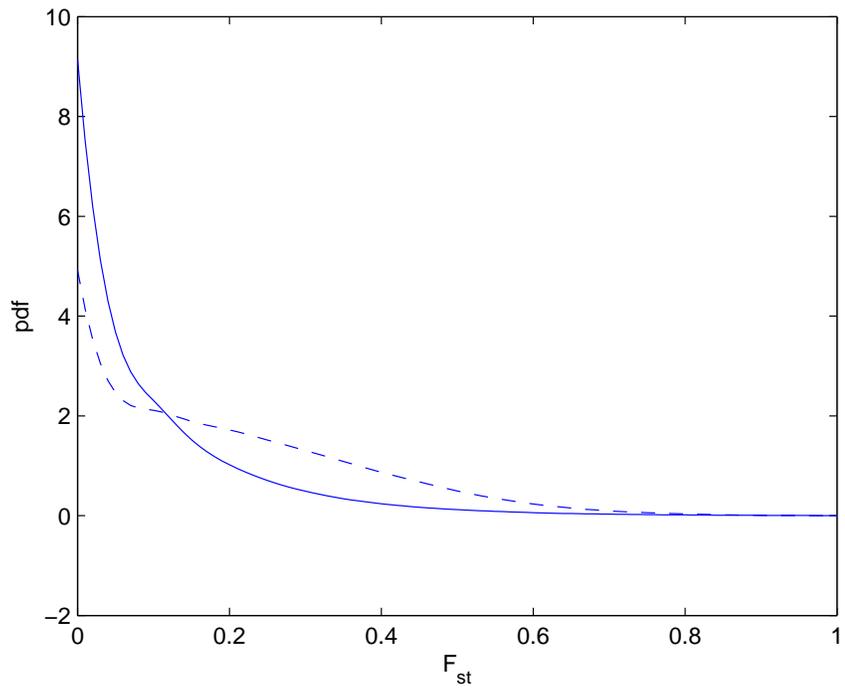}
\caption{pdf of $\LPLSL F_{st}$ for island model (dashed line) with $\Gamma = 1$ and stepping stone model (unbroken line) with $\alpha = 2$. In both cases $\lambda = 2$}
\label{Fi:Fst_figure}
\end{center}
\end{figure}

\subsection{Generalizations of $F_{st}$}  \label{S:Fst_seq}

	Today, genetic data  rarely fits the single locus, infinite alleles assumption of the previous section.  We examine two generalizations of $F_{st}$.   In \cite{Weir_1984_Evolution}, Weir and Cockerham generalized $F_{st}$ to biallelic, multiple loci data.  To model such data we let $x_{k,j}^\gen(i)$ represents the allelic state of locus $i$ for the given individual.  In \cite{Hudson_1992_Mol_Bio_Evol, Lynch_1990_Mol_Bio_Evol, Nei_1982_Human_Genetics} the authors consider $F_{st}$ generalized to sequence data.  In this setting, we let $x_{k,j}^\gen$ represent a string of $0$s and $1$s.
	
	We start by considering biallelic, multiple loci data.  We assume $l$ loci and a single mutation on $\Pi_\GKC(t)$ for each locus.  Define for $i=1,\dots,l$
\begin{equation}   \label{E:formula_phi_0_k_i}
\phi_{0,k}(i) = \frac{1}{d} \sum_{k=1}^d \frac{1}{n^2} \sum_{j,j' = 1}^n \chi(x_{k,j}^\gen(i) = x_{k,j'}^\gen(i)),
\end{equation}

\begin{equation}   \label{E:formula_phi_0_i}
\phi_0(i) = \frac{1}{d} \sum_{k=1}^d \phi_{0,k}(i),
\end{equation}

\begin{equation}  \label{E:formula_phi_1_i}
\phi_1(i) = \frac{1}{d^2} \sum_{k,k'=1}^d \frac{1}{n^2} \sum_{j,j' = 1}^n \chi(x_{k,j}^\gen(i) = x_{k',j'}^\gen(i)).
\end{equation}
$\phi_0(i), \phi_1(i)$ are homozygosity measures for locus $i$, and we can use these measures to form an $F_{st}$ value for each locus.  
\begin{equation}
F_{st,i} = \frac{\phi_{0}(i) - \phi_{1}(i)}{1 - \phi_{1}(i)}.
\end{equation}
A key question considered by Weir and Cockerham is how to combine the $\phi_0(i), \phi_1(i)$ values in order to produce a statistic with small variance.  In a widely cited paper, \cite{Weir_1984_Evolution}, Weir and Cockerham suggested using $F_{st}^\IS$, where for $\sum_{i=0}^\infty (1 - \phi_1(i)) \ne 0$
\begin{equation}  \label{E:Fst_IS}
F_{st}^\IS = \frac{\sum_{i=0}^l \phi_0(i) - \phi_1(i)}{\sum_{i'=0}^l (1 - \phi_1(i'))}.
\end{equation}
Alternatively, one might form a statistic by simply averaging the $F_{st,i}$.  That is,
\begin{equation}
F_{st}^\AV = \frac{1}{l} \sum_{i=1}^l F_{st,i}
\end{equation}

	Our analysis of $F_{st}$ allows us to prove the following result.
\begin{proposition}  \label{P:Fst_IS_limit}
Fix $l$.  
\begin{equation}
F_{st}^\IS(\Pi_{\GKC}(t)) \to \SPL(2)
\end{equation}
\begin{equation}
F_{st}^\AV(\Pi_{\GKC}(t)) \to \SPL(2)
\end{equation}
\end{proposition}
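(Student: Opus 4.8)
The plan is to reduce both multilocus statistics to the single--locus $F_{st}$ already treated in Theorems \ref{T:Fst_strong}--\ref{T:bottom_tree}, exploiting that $l$ is held fixed. Because exactly one mutation is placed on $\Pi_\GKC(t)$ for locus $i$, that mutation splits the sample into the lineages it subtends (allele $1$) and the rest (allele $0$). Writing $q_k^{(i)}$ for the fraction of deme $k$ carrying allele $1$ and $\bar q^{(i)} = \frac1d\sum_{k=1}^d q_k^{(i)}$, counting matching ordered pairs gives the exact identities
\begin{equation}
\phi_0(i)-\phi_1(i) = \frac{2}{d}\sum_{k=1}^d \big[(q_k^{(i)})^2-(\bar q^{(i)})^2\big],
\qquad
1-\phi_1(i) = 2\,\bar q^{(i)}\big(1-\bar q^{(i)}\big),
\end{equation}
so that $F_{st,i}=(\phi_0(i)-\phi_1(i))/(1-\phi_1(i))$ is precisely the single--locus $F_{st}(\Pi_\GKC(t))$ generated by the $i$th mutation. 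The operative mutation regime therefore applies to each locus separately, and I would invoke Theorem \ref{T:Fst_strong} (strong mutation) or Theorem \ref{T:bottom_tree} (weak mutation with $\lambda=0$) to conclude that each $F_{st,i}\to\SPL(2)$ in probability in the LPLS limit.

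Given this, the average statistic is immediate: $F_{st}^\AV = \frac1l\sum_{i=1}^l F_{st,i}$ is a finite average of $l$ terms each converging in probability to the common constant $\SPL(2)$, and a fixed--length average of such terms converges to $\SPL(2)$.

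The statistic $F_{st}^\IS$ is the delicate one, since it is a ratio of sums rather than a sum of ratios. I would rewrite it as the weighted average
\begin{equation}
F_{st}^\IS = \frac{\sum_{i} F_{st,i}\,(1-\phi_1(i))}{\sum_{i} (1-\phi_1(i))}
\end{equation}
with nonnegative weights $w_i = 1-\phi_1(i) = 2\bar q^{(i)}(1-\bar q^{(i)})$. Substituting $F_{st,i}=\SPL(2)+o_P(1)$ into the numerator and factoring out $\SPL(2)$ yields $F_{st}^\IS = \SPL(2)+o_P(1)$, provided the denominator $\sum_i w_i$ is bounded away from $0$.

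The main obstacle is exactly this control of the random denominator together with the dependence across loci: the $l$ mutations sit on a single realized coalescent tree, so the $F_{st,i}$ and the weights $w_i$ are not independent. What rescues the argument is that every $F_{st,i}$ converges to the \emph{same} deterministic limit, so the vector $(F_{st,1},\dots,F_{st,l})$ converges jointly to $(\SPL(2),\dots,\SPL(2))$ and the continuous mapping theorem applies on the event $\{\sum_i w_i>0\}$; the hypothesis $\phi_1\ne 1$ built into the definition of $F_{st}$ excises the degenerate complement, and I would verify that $\sum_i w_i$, being the mean number of pairwise differences across the sample, is positive with probability tending to $1$. A cleaner and more robust route for $F_{st}^\IS$ alone is to read $\sum_i(1-\phi_1(i))$ and $\sum_i(1-\phi_0(i))$ as the mean total and mean within--deme pairwise differences, and to use that two lineages lying in distinct scattering blocks coalesce with a law independent of their demes; this gives $\sum_i(1-\phi_0(i)) = (1-\SPL(2))\sum_i(1-\phi_1(i))\,(1+o_P(1))$, so the common random tree--length factor cancels in the ratio and $F_{st}^\IS\to\SPL(2)$ even when the individual $F_{st,i}$ fail to be asymptotically deterministic.
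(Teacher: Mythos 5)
Your reduction of both statistics to the per-locus quantities $F_{st,i}$, and the rewriting of $F_{st}^\IS$ as the weighted average $\sum_i w_i F_{st,i}/\sum_i w_i$ with $w_i = 1-\phi_1(i)\ge 0$, is sound and is essentially what the paper leaves implicit; in fact, once each $F_{st,i}\to\SPL(2)$, you need less denominator control than you think, since for fixed $l$ the weighted average is sandwiched between $\min_i F_{st,i}$ and $\max_i F_{st,i}$, both of which converge in probability to $\SPL(2)$. The genuine gap is the step you dispose of in one sentence: ``the operative mutation regime therefore applies \dots I would invoke Theorem \ref{T:Fst_strong} (strong mutation) or Theorem \ref{T:bottom_tree} (weak mutation with $\lambda = 0$).'' In the setting of the proposition each locus carries exactly one mutation placed on the realized coalescent tree, so strong mutation is not in play, and the level $L_i$ at which the $i$th mutation falls is \emph{random}, with $P(L_i = k)$ proportional to the branch length $k(T_k - T_{k-1})$ at level $k$. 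Whether $\lambda = \LPLSL L_i/d$ vanishes is precisely what must be established before Theorem \ref{T:bottom_tree} may be invoked, and this verification is the entire content of the paper's proof: the total Kingman tree length is $O(\log|\Pi_\GKC(0)|)$ while the length at level $k$ is $O(1/k)$, whence $P(L_i/d > \delta) \le O(|\log\delta|/\log d) \to 0$, so with probability tending to one all $l$ mutations sit at levels $o(d)$ and Theorem \ref{T:bottom_tree} applies. Your proposal contains no analysis of where the mutations fall, and without it the conclusion is not available.

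The fallback in your final paragraph does not repair this. The claim $\sum_i(1-\phi_0(i)) = (1-\SPL(2))\sum_i(1-\phi_1(i))(1+o_P(1))$ is, locus by locus, equivalent to $F_{st,i}\to\SPL(2)$, and for fixed $l$ a sum of finitely many non-concentrating terms does not concentrate: if a mutation fell at a level with $L_i/d \to \lambda > 0$, Theorems \ref{T:middle_tree} and \ref{T:top_tree} show that $1-F_{st,i} = (1-\phi_0(i))/(1-\phi_1(i))$ converges to a non-degenerate random variable, so the multiplicative $o_P(1)$ statement fails; only the relation in expectation (i.e., the statement about $F_{st}^*$ in (\ref{E:Fst_star})) survives. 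Hence the assertion that $F_{st}^\IS\to\SPL(2)$ ``even when the individual $F_{st,i}$ fail to be asymptotically deterministic'' is false for fixed $l$, and the low-level-mutation estimate cannot be avoided.
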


 To see Proposition \ref{P:Fst_IS_limit} first note that we may assume that $\Pi_{\GKC}(t)$ has exactly $l$ mutations.  Let the levels of these mutations be $L_1, L_2, \dots, L_l$ where each $L_i$ is i.i.d.  Theorem \ref{T:Fst_strong} shows that if $\frac{L_i}{d} \to 0$ for all $i$, then each $F_{st,i} \to \SPL(2)$ and the result will follow.   Let $T_k = \inf \{t : |\Pi_\GKC(t)| = k\}$.   Since mutations are distributed as a Poisson process we have
\begin{equation}
P(L_i) = \frac{L_i (T_i - T_{i-1})}{\sum_{j=2}^{|\Pi_\GKC(0)|} j (T_j - T_{j-1})}
\end{equation}
Well known results for the Kingman coalescent, see for example section 1.3.1 of \cite{Delmas_2008_Ann_Appl_Prob}, give $\sum_{j=2}^{|\Pi_\GKC(0)|} j (T_j - T_{j-1}) = O(\log(|\Pi_\GKC(0)|))$ while $L_i (T_i - T_{i-1}) = O(\frac{1}{L_i})$. Using these results and noting $d < |\Pi_\GKC(0)| < nd$ gives for fixed $\delta > 0$
\begin{equation}
P(\frac{L_i}{d} > \delta) \le O(\frac{|\log \delta|}{\log d}).
\end{equation}
This shows that for fixed $l$, we have $\frac{L_i}{d} \to 0$.  In fact as long as $l \ll \log d$ the result holds.

	In \cite{Weir_1984_Evolution}, Weir and Cockerham showed through numerical experiments that for finite samples $F_{st}^\IS$ has lower variance then $F_{st}^\AV$.  To explain this, we note that $F_{st,i}$ will have high variance if $\frac{L_i}{d}$ is $O(1)$.  We will also eventually show, see Lemmas \ref{L:mean_p} and \ref{L:var_q}, that the means of $\phi_0(i), \phi_1(i)$ are $O(\frac{1}{L})$ while variances are  $O(\frac{1}{L^2})$.  Now suppose that $\frac{L_1}{d} = 1$  while for $i \ne 1$, $L_i = O(1)$.  In this case, with $F_{st}^\AV$ in mind, we have the following facts.  
\begin{itemize}
\item $V[F_{st,1}] = 0(1)$.
\item For $i \ne 1$, $F_{st,i} \approx \SPL(2)$ and $V[F_{st,i}] = o(1)$.
\end{itemize}
	These two facts give $V[F_{st}^\AV] = O(\frac{1}{l^2})$.  For $F_{st}^\IS$, the following facts are relevant.
\begin{itemize}
\item $E[\phi_0(1)] = O(\frac{1}{d})$, $E[\phi_1(1)] = O(\frac{1}{d})$.
\item For $i \ne 1$, $V[\phi_0(i)] = O(1)$ and $V[\phi_1(i)] = O(1)$.
\item For $i \ne 1$, $\frac{\phi_0(i) - \phi_1(i)}{1 - \phi_1(i)} \approx \SPL(2)$ and $V[\frac{\phi_0(i) - \phi_1(i)}{1 - \phi_1(i)}] = o(1)$.
\end{itemize}
	These three facts give $V[F_{st}^\IS] = O(\frac{1}{ld})$.  For $d \gg l$ we see that $F_{st}^\IS$ has lower variance than $F_{st}^\AV$.
	
	Now we consider $F_{st}$ for sequence data.  Various formulas exist for such a generalization, see \cite{Hudson_1992_Mol_Bio_Evol} for a summary, but up to small variations all are given by the formula for $F_{st}^\IS$ given in (\ref{E:Fst_IS}) with $l = \infty$.  This means, if we assume a fixed number of mutations,  that  our analysis from Proposition \ref{P:Fst_IS_limit} holds and we have $F_{st}^\IS(\Pi_\GKC(t)) \to \SPL(2)$ for sequence data.

\subsection{Homozygosity Measures}  \label{S:Homo}

	Homozygosity measures are commonly used to quantify genetic diversity.  Previous work on homozygosity measures for subdivided populations has focused on computing means for $\phi_{0,k}$ and $\phi_{1,k}$, e.g. \cite{Maruyama_1971_Genetics, Nagylaki_1974_PNAS}.  In this section we derive the distribution of  $\phi_{0,k}$ under the infinite alleles model and the assumption of strong mutation.  By the definition of the G/KC coalescent, at $t=0$ the $n$ individuals from sampled deme $k$ are split into $B_k$ blocks of relative sizes $b_{k,1}, b_{k,2}, \dots, b_{k,B_k}$.  If mutation is sufficiently strong, $r \gg 1$, each of these blocks will experience a mutation prior to a coalescent event.  In such a case, each of the $B_k$ blocks will have a different allelic state.  This allows us to compute the distribution of $\phi_{0,k}$.
\begin{equation}
\phi_{0,k} = \sum_{j=1}^{B_k} b_{k,j}^2.
\end{equation}

	For the case of the island model,  Theorem \ref{T:IM_convergence} gives
\begin{equation}
\phi_{0,k} \to \sum_{j=1}^\infty \Upsilon_j^2.
\end{equation}
For the case of the stepping stone model, let $V_{j,i}$ be exponential random variables with mean $1$ that are independent over $i,j$ for $i=1,2,\dots$ and $j=1,\dots,i$. Then, one can show (see \cite{Durrett_Book_Probability_Models}) that
\begin{equation}
b_{k,j}\bigg|_{B_k} \to \frac{V_{B_k,j}}{V_{B_k,1} + V_{B_k,2} + \dots + V_{B_k,B_k}}.
\end{equation}
Theorem \ref{T:IM_convergence} then gives
\begin{equation}
\phi_{0,k} \to \sum_{i=1}^\infty h_i(\log(\frac{1+\alpha}{\alpha})) 
	\frac{V_{i,1}^2 + \dots V_{i,i}^2}{(V_{i,1} + V_{i,2} + \dots + V_{i,i})^2},
\end{equation}
where $h_i$, by equation 5.2 in \cite{Tavare_1984_TPB}, is defined as
\begin{equation}
h_i(t) = 
\bigg\{
\begin{array}{cc}
\sum_{k=i}^\infty \exp[-(\frac{k(k-1)}{2})t] (\frac{2k-1}{k-1}) (-1)^{k-i}
		\binom{i+k-2}{i} \binom{k-1}{i-1} & \text{ if }i \ne 1 \\
1 + \sum_{k=2}^\infty \exp[-(\frac{k(k-1)}{2})t] (\frac{2k-1}{k-1}) (-1)^{k-i}
		\binom{i+k-2}{i} \binom{k-1}{i-1}& \text{ if } i = 1.
\end{array}
\end{equation}

	Under strong mutation, the LPLS limit distributions of $\phi_{0,k}$ for the island model and stepping stone model are given in figure \ref{Fi:phi_0_figure}.  As in section \ref{S:Fst}, we take $\Gamma = 1$ for the island model case, and $\alpha = 2$ in the stepping stone case.  This gives, for both cases, $E[\phi_{0,k}] = \frac{1}{3}$.  We note the similarity in the distribution of $\phi_{0,k}$ under the two models.  Currently, there are many statistical tests for population subdivision, but we are not aware of any statistical test that addresses the type of subdivision.  The similarity in homozygosity measures for the island model and stepping stone model suggests that any such test should not involve homozygosity measures.
	
\begin{figure} [htp] 
\begin{center}
\includegraphics[width=1\textwidth]{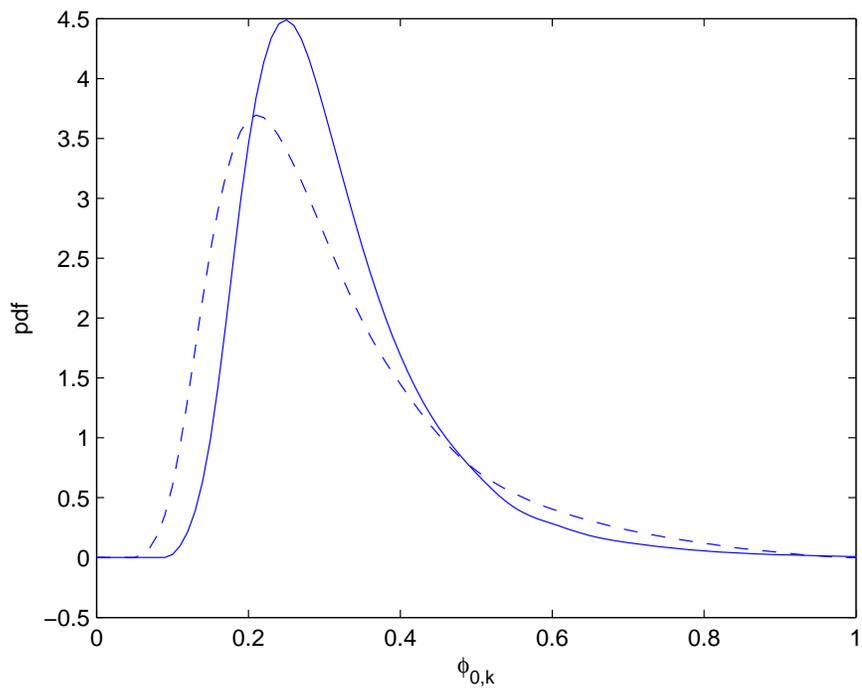}
\caption{pdf of $\phi_{0,k}$ for island model (dashed line) with $\Gamma = 1$ and stepping stone model (unbroken line) with $\alpha = 2$}
\label{Fi:phi_0_figure}
\end{center}
\end{figure}

\section{Convergence to the G/KC coalescent}  \label{S:convergence}
\setcounter{equation}{0}
\setcounter{lemma}{0}
	
	In this section we prove Theorems \ref{T:IM_convergence} and \ref{T:SS_convergence}.  To do this, we define a time $\Td$ and show that the following conditions hold.
\begin{enumerate}
\item (Independence Condition) The probability that individuals from separate sampled demes coalesce before $\Td$ goes to zero.
\item (Short Scattering Phase Condition)  The probability of a mutation before $\Td$ goes to zero.
\item (KC Condition) After time $\Td$, the coalescent converges to a Kingman coalescent
\end{enumerate}
After demonstrating these three condition, we determine the distribution of $B_k, b_{k,1}, b_{k,2}, \dots, b_{k,B_k}$ formed by $\Pi_\GKC(\Td)$.  Lastly, we show that for both Theorem \ref{T:IM_convergence} and \ref{T:SS_convergence} the condition $\frac{E[B_1^2]}{d} \to 0$ holds.

	To demonstrate the KC condition, we introduce the following notation.  For a general coalescent process $\Pi(t)$, let $E_1, E_2, \dots, E_k$ be the blocks forming $\Pi(T_k)$.  Recall $T_k = \inf\{t : |\Pi(t)| = k\}$.   Define $N_j(k \to k-1)$ as the number of mutations that block $E_j$ experiences during time $[T_k, T_{k-1})$.  Let $U_1(k)$, $U_2(k)$ be the indices of the two blocks that coalesce at time $T_{k-1}$.  If we specify some unique way of ordering the blocks $E_i$ (say by ordering $E_i$ based on some lexographic ordering of the $x_{k,j}$) then any diversity measure $G$ will be a function of $U_1(k), U_2(k)$,$N_j(k \to k-1)$, and $\Pi(0)$.
	  
	We will use the following Lemma to prove the KC condition. 
\begin{lemma}  \label{L:Fst_limit}
Let $\Pi(t)$ be a coalescent process with $|\Pi(0)| = M$ and let $\Pi_\KC(t,r)$ be a Kingman coalescent with $\Pi_\KC(0)$  equal to $\Pi(0)$ with the labels of the blocks removed.  Let $G$ be a diversity measure.  If
\begin{equation}  \label{E:N_j_general}
\sum_{k=2}^M \sum_{j=1}^k |E[N_j(k \to k-1)] - \frac{r}{\binom{k}{2}}| \to 0
\end{equation}
and
\begin{equation}  \label{E:U_general}
\sum_{k=2}^M \bigg|1 - k(k-1) \inf_{j,j'=1,\dots,k; j\ne j'} P(U_1(k) = j, U_2(k) = j')\bigg| \to 0
\end{equation}
then
\begin{equation}
\LPLSL G(\Pi(t)) = \LPLSL G(\Pi_\KC(t,r)).
\end{equation}
\end{lemma}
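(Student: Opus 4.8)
The plan is to reduce the equality of limiting distributions to a total variation estimate on the random data that determines a diversity measure, and then to control that total variation by the two hypotheses. First I would record the reduction: whenever $h$ is bounded and continuous, $h \circ G$ is again a bounded function of the genetic states $x_{k,j}^\gen$ that is symmetric in the required indices, hence $h\circ G$ is itself a diversity measure; so it suffices to prove $E[G(\Pi(t))] - E[G(\Pi_\KC(t,r))] \to 0$ for \emph{every} diversity measure $G$, since applying this to $h\circ G$ upgrades convergence of means to convergence in distribution. As noted just before the statement, every diversity measure $G$ is a deterministic bounded function of
\[
\Theta = \big(\Pi(0),\ \{U_1(k),U_2(k)\}_{k=2}^M,\ \{N_j(k\to k-1)\}_{k,j}\big),
\]
and $\Pi(0)$ is identical for the two processes by construction. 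Writing $\mathcal L_\Pi$, $\mathcal L_\KC$ for the laws of $\Theta$ under $\Pi(t)$ and $\Pi_\KC(t,r)$, the bound $|E[G(\Pi)]-E[G(\Pi_\KC)]| \le 2\|G\|_\infty\,\mathrm{TV}(\mathcal L_\Pi,\mathcal L_\KC)$ reduces everything to $\mathrm{TV}(\mathcal L_\Pi,\mathcal L_\KC)\to 0$.

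To estimate this I would build a coupling of the two copies of $\Theta$ level by level, from $k=M$ down to $k=2$, at each level first drawing the coalescing pair $(U_1(k),U_2(k))$ and then the mutation counts $(N_1(k\to k-1),\dots,N_k(k\to k-1))$ by a maximal coupling, conditioned on all earlier levels having been coupled successfully so that the two configurations agree up to level $k$. On the event that every level couples, the two copies of $\Theta$ coincide and hence $G(\Pi)=G(\Pi_\KC)$, so it remains to bound the total failure probability by the sum of per-level coupling errors. For the topology this is exactly the term in (\ref{E:U_general}): writing $p_{j,j'}=P(U_1(k)=j,U_2(k)=j')$ over the $k(k-1)$ ordered pairs and $m=\inf_{j\ne j'}p_{j,j'}$, the Kingman law is uniform with weight $1/(k(k-1))$, so since each $p_{j,j'}\ge m\le 1/(k(k-1))$ the per-level total variation is at most $\sum_{j\ne j'}\big(\tfrac{1}{k(k-1)}-m\big) = 1-k(k-1)m = |1-k(k-1)\inf_{j\ne j'}P(U_1(k)=j,U_2(k)=j')|$; summing over $k$ and invoking (\ref{E:U_general}) sends the aggregate topology error to zero.

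The mutation coupling is the main obstacle, and here I would use (\ref{E:N_j_general}). The favorable structure is that in any coalescent process as defined each block mutates at a constant rate while it persists, so on $[T_k,T_{k-1})$ the counts $N_j(k\to k-1)$ are conditionally independent Poisson variables given the holding time, exactly as for the Kingman coalescent; the only freedom lies in the holding-time law and the overall rate. The step I expect to be most delicate is that matching the single first moment $E[N_j(k\to k-1)]$ to the Kingman value $r/\binom{k}{2}$ pins down the full conditional law of the $N_j$ \emph{only once one also knows the holding-time law is asymptotically exponential}, so that both processes share the same conditional Poisson-over-exponential (geometric) mixture determined by the matched rate ratio. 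Granting the exponential holding-time structure built into the separation of time scales in the LPLS limit, the per-level mutation total variation is controlled by a quantity of order $\sum_{j=1}^k |E[N_j(k\to k-1)]-r/\binom{k}{2}|$, whose sum over $k$ is precisely the hypothesis (\ref{E:N_j_general}); this is what forces the aggregate mutation error to vanish. Combining the topology and mutation estimates bounds the total coupling failure probability, so $\mathrm{TV}(\mathcal L_\Pi,\mathcal L_\KC)\to 0$, and the reduction in the first paragraph then yields $\LPLSL G(\Pi(t)) = \LPLSL G(\Pi_\KC(t,r))$.
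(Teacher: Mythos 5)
Your proposal is correct and takes essentially the same approach as the paper's proof: a level-by-level coupling in which the coalescing pair is matched by a maximal coupling (the paper partitions $[0,1]$ into $k(k-1)$ intervals of length $\inf_{j\ne j'}P(U_1(k)=j,U_2(k)=j')$ plus a remainder whose mass is the per-level decoupling probability $1-k(k-1)\inf_{j\ne j'}P(U_1(k)=j,U_2(k)=j')$), and the mutation counts are matched by sharing a common Poisson component, so that the per-level decoupling probability is bounded by $\sum_{j=1}^k|E[N_j(k\to k-1)]-\frac{r}{\binom{k}{2}}|$. The delicate step you flag about the holding-time law is resolved in the paper by simply asserting that $N_j(k\to k-1)$ is Poisson (so that matching means does control the coupling), which is precisely the structure you ``grant''; beyond that, your total-variation framing and the $h\circ G$ reduction are cosmetic repackagings of the paper's explicit coupling construction.
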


\begin{proof}
Since mutation events are Poisson processes, $N_j(k \to k-1)$ has a Poisson distribution.  Let $\tilde{N}_j(k \to k-1)$ be the $N_j$ associated with $\Pi_\KC(t,r)$.  Then $\tilde{N}_j(k \to k-1)$ has Poisson distribution with mean $\frac{r}{\binom{k}{2}}$.  We couple mutation events on $\Pi(t)$ and $\Pi_\KC(rt)$ for their respective intervals $[T_k, T_{k-1})$ as follows.  Match the blocks $k$ blocks in $\Pi(t)$ with the $k$ blocks in $\Pi_\KC(t)$ in some arbitrary way.  Apply mutations to each block according to a Poisson distribution with mean $\frac{r}{\binom{k}{2}}$.  Now add more mutations to each block in $\Pi(t)$ according to a Poisson distribution with mean $E[N_j(k \to k-1)] - \frac{r}{\binom{k}{2}}$ (if the quantity is negative, remove mutations).  If we add (or remove) mutations in this second step we say that a decoupling event has taken place.  By (\ref{E:N_j_general}) the probability of a decoupling event over all $k$ goes to zero.  So we have a coupling between the mutations on $\Pi(t)$ and $\Pi_\KC(t)$.  

	Now we establish a coupling for $U_1,U_2$.  Let $\tilde{U}_1, \tilde{U}_2$ be the $U_i$ corresponding to $\Pi_\KC(t)$.  Then $P(\tilde{U}_1(k) = j, \tilde{U}_2(k) = j') = \frac{1}{k(k-1)}$.  Set $a = \inf_{j \ne j'} P(U_1(k) = j, U_2(k) = j')$.  We now partition $[0,1]$ into $k(k-1) + 1$ intervals.  $k(k-1)$ of these intervals are of size $a$ and each of these intervals corresponds to a specific $j,j'$ combination.  We couple $U_i$ and $\tilde{U}_i$ as follows.  We select a number uniformly on $[0,1]$.  If the number lands in one of the $k(k-1)$ intervals corresponding to some $j,j'$ pair then we coalesce the same blocks in $\Pi(t)$ as coalesce in $\Pi_\KC(rt)$.  Otherwise, if the number falls in the interval that does not correspond to a $j,j'$ pair, we say a decoupling has occurred and we coalesce each process separately.  (\ref{E:U_general}) shows that over all $k$ the probability of a decoupling goes to zero.  So we have a coupling between the blocks that coalesce in $\Pi(t)$ and those that coalesce in $\Pi_\KC(rt)$.

	The result now follows from the observation that $G$ is bounded and depends only on the number of mutations in each block and the order in which the blocks coalesce.

\end{proof}

	Before proceeding we set some notation.  For any coalescent $\Pi(t)$ (that is $\Pi_\KC$, $\Pi_\IM$, $\Pi_\SS$, $\Pi_\GKC$)  we let $\Pi_k(t)$ for $k=1,\dots,d$ represent $\Pi(t)$ with the blocks intersected against $\F_k$.  That is, if 
\begin{equation}
\Pi(t) =  \{(E_1, a_{1}), (E_2, a_{2}), \dots, (E_m, a_{m})\},
\end{equation}
then
\begin{equation}
\Pi_k(t) \{(E_1 \cap \F_k, a_{1}), (E_2 \cap \F_k, a_{2}), \dots, (E_m \cap \F_k, a_{m})\}
\end{equation}
For $\Pi_\IM(t)$ and $\Pi_\SS(t)$, unless specified otherwise, we take $\Pi(0) = \bigcup_{k=1}^d \bigcup_{j=1}^n \{(x_{k,j}, \D(k))\}$, where $\D(k)$ is the deme label $g \in \G$ corresponding to the $k$th sampled deme.  Since the deme labels in $\Pi_\KC(t)$ may be ignored, $\Pi_\KC(0)$ is specified by $|\Pi_\KC(0)|$.  We write $\Pi_\KC^{(k)}(t)$ for $\Pi_\KC(t)$ with $|\Pi_\KC(0) = k|$.  
 
 We use $\Theta$, and $I$ to represent various probability events and integrals respectively.    Within a given proof, $\Theta$ and $I$ are consistently used, but their definition varies between proofs.  We use $C$ as an arbitrary constant that may change from line to line.

\subsection{Island Model and G/KC}  \label{S:IMGKC}

	In this section we prove Theorem \ref{T:IM_convergence}.  Set $\Td = N \sqrt{D}$.

\begin{lemma} [Independence Condition]  \label{L:IM_IC}
Let $\Theta$ be the event in which two blocks from separate sampled demes coalesce before time $\Td$.  Then,
\begin{equation}
P(\Theta) = O(\frac{(nd)^2}{\sqrt{D}})
\end{equation}
\end{lemma}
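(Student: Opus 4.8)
\section*{Proof proposal for Lemma \ref{L:IM_IC}}

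The plan is to reduce the event $\Theta$ to a statement about a single pair of sampled lineages and then apply a union bound. Fix two sampled individuals $x_{k,j}$ and $x_{k',j'}$ lying in separate sampled demes ($k \ne k'$) and track only their two ancestral lineages in $\Pi_\IM(t)$, marginalizing over the fate of everything else. Because every block migrates at rate $m$, and because any two co-located blocks coalesce at rate $\frac{1}{N}$ regardless of the other blocks present, this marginal process is exactly a pair of independent lineages performing island-model migration that coalesce at rate $\frac{1}{N}$ whenever they occupy a common deme. The event $\Theta$ is precisely the union, over the $O((nd)^2)$ cross-deme pairs, of the events that such a pair coalesces before $\Td$, so it suffices to bound the single-pair coalescence probability by $O(1/\sqrt{D})$.

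For a single pair I would first record the indicator of co-location as a two-state Markov chain on $\{\text{Same}, \text{Diff}\}$. A short computation gives a Same$\to$Diff rate of order $2m$ and a Diff$\to$Same rate of order $\frac{2m}{D}$, so the chain has stationary same-deme probability exactly $\frac{1}{D}$, and starting from Diff the time-$t$ same-deme probability is $p_t = \frac{1}{D}\left(1 - e^{-\rho t}\right) \le \frac{1}{D}$ for all $t$, where $\rho$ is the sum of the two rates. The key step is to couple the true pair process against the same migration trajectories with coalescence switched off, and to let $\tilde{S} = \int_0^{\Td} \mathbf{1}[\text{same deme at }t]\,dt$ be the total co-located time in this no-coalescence process. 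Since coalescence is a rate-$\frac{1}{N}$ thinning of co-located time, the coalescence time $\tau$ satisfies $P(\tau < \Td) \le E[\tilde{S}]/N$. The uniform bound $p_t \le \frac{1}{D}$ then gives $E[\tilde{S}] = \int_0^{\Td} p_t\,dt \le \frac{\Td}{D}$, and with $\Td = N\sqrt{D}$ this yields $P(\tau < \Td) \le \frac{\Td}{ND} = \frac{1}{\sqrt{D}}$.

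Combining the two steps, a union bound over the at most $\binom{nd}{2} \le \frac{(nd)^2}{2}$ cross-deme pairs gives $P(\Theta) \le \frac{(nd)^2}{2\sqrt{D}} = O\!\left(\frac{(nd)^2}{\sqrt{D}}\right)$, as claimed. I expect the main obstacle to be justifying the exactness of the two-lineage reduction, namely that marginalizing over all other sampled lineages leaves the pairwise migration and coalescence rates of the tracked pair unchanged, together with making the thinning/coupling bound $P(\tau < \Td) \le E[\tilde{S}]/N$ rigorous. The remaining ingredients, the two-state rate computation and the uniform estimate $p_t \le \frac{1}{D}$, are routine; it is worth noting that the fixed-$\Gamma = Nm$ scaling enters only to guarantee $\rho\,\Td \to \infty$ so that the chain equilibrates well before $\Td$, while the crude bound $p_t \le \frac{1}{D}$ makes the final estimate insensitive to $\Gamma$.
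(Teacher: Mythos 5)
Your proof is correct, and it shares the paper's overall skeleton --- reduce $\Theta$ to a union over the $O((nd)^2)$ cross-deme pairs and establish a per-pair bound of $O(1/\sqrt{D})$ --- but the per-pair estimate itself runs on a different mechanism. The paper discards the coalescence rate entirely: coalescence of a pair requires the pair to meet, a block enters a specified occupied deme at rate at most $\frac{m}{D}$, so the probability of meeting before $\Td$ is at most $1 - \exp[-\frac{m \Td}{D}]$, which is $O(\Gamma/\sqrt{D}) = O(1/\sqrt{D})$ precisely because $\Gamma = Nm$ is held fixed in the LPLS limit. You instead keep the rate-$\frac{1}{N}$ coalescence mechanism and bound the coalescence probability by the thinning inequality $P(\tau < \Td) \le E[\tilde{S}]/N$ together with the uniform co-location bound $p_t \le \frac{1}{D}$, giving $\frac{\Td}{ND} = \frac{1}{\sqrt{D}}$. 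Your route buys two things: a per-pair bound that is uniform in the migration rate (no $\Gamma$ appears, so the estimate survives even if $\Gamma \to \infty$, where the paper's meeting-probability bound degrades), and an explicit justification of the two-lineage marginal via the consistency/lumpability of the structured coalescent, which the paper's three-line proof uses silently. The paper's route buys brevity: a single exponential first-passage computation with no occupation-time or coupling machinery. One small correction to your closing remark: the fixed-$\Gamma$ scaling is \emph{not} needed in your argument to ensure equilibration --- your bound $p_t \le \frac{1}{D}$ holds for all $t$ regardless of $\rho\,\Td$ --- whereas it is the paper's bound, $1 - \exp[-\frac{m\Td}{D}] \approx \frac{\Gamma}{\sqrt{D}}$, that genuinely requires $\Gamma$ to stay bounded; so your method needs strictly weaker hypotheses than you claim.
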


\begin{proof}
A block migrates to a deme that is occupied by another block at a rate bounded by $\frac{m}{D}$.  So the probability of a block entering a deme occupied by another block before time $\Td$ is bounded by 
\begin{equation}
\int_{0}^{\Td} dt \exp[-\frac{m}{D} t] \frac{m}{D} = 1 - \exp[-\frac{\Td m}{D}] = \frac{1}{\sqrt{D}}.
\end{equation}
Summing this probability over all possible pairs gives the result.

\end{proof}

\begin{lemma} [Short Scattering Phase Condition] \label{L:IM_SSPC}
\begin{equation}
E[\text{number of mutation before } \Td] = O(\frac{nd}{\sqrt{D}}).
\end{equation}
\end{lemma}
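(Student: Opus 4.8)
The plan is to bound the expected number of mutations before $\Td$ by controlling the instantaneous mutation rate of the coalescent. Since each block mutates independently at rate $\mu$, the aggregate rate at which mutations occur at time $t$ is $\mu |\Pi_\IM(t)|$, so everything reduces to bounding the number of blocks present during the scattering window $[0,\Td)$.

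The key observation is that $|\Pi_\IM(t)| \le nd$ for every $t \ge 0$. Indeed, the coalescent starts from the $nd$ sampled blocks $\bigcup_{k=1}^d \bigcup_{j=1}^n \{(x_{k,j}, \D(k))\}$, migration events leave the number of blocks unchanged, and coalescent events only decrease it. Hence the instantaneous mutation rate never exceeds the \emph{deterministic} bound $\mu nd$, and the number of mutations in $[0,\Td)$ is stochastically dominated by a Poisson random variable of mean $\mu nd \Td$. This gives
\begin{equation}
E[\text{number of mutations before } \Td] \le \mu nd \Td.
\end{equation}
Substituting $\mu = \frac{\theta}{ND}$ and $\Td = N\sqrt{D}$ yields
\begin{equation}
\mu nd \Td = \frac{\theta}{ND} \cdot nd \cdot N\sqrt{D} = \theta \frac{nd}{\sqrt{D}},
\end{equation}
and since $\theta$ is held fixed in the LPLS limit this is $O(\frac{nd}{\sqrt{D}})$, as claimed.

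There is no substantive obstacle here: the only point requiring a moment's care is the passage from the random integrated rate $\int_0^{\Td} \mu |\Pi_\IM(t)|\,dt$ to a clean expectation bound. Because the bound $|\Pi_\IM(t)| \le nd$ holds deterministically (not merely in expectation), the domination by a homogeneous Poisson process of rate $\mu nd$ is immediate, and one could equally take expectations inside the integral using $E[|\Pi_\IM(t)|] \le nd$. The essential content is simply that the maximal block count $nd$, the per-block rate $\mu = \theta/(ND)$, and the window length $\Td = N\sqrt{D}$ combine to give the stated order.
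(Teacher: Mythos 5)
Your proof is correct and follows essentially the same route as the paper: bound the number of blocks by $nd$, so the total mutation rate is at most $\mu nd$, and then compute $\mu nd \Td = \theta \frac{nd}{\sqrt{D}}$ using $\theta = \mu N D$ and $\Td = N\sqrt{D}$. The paper's argument is exactly this calculation stated tersely; your additional remarks on the deterministic block bound and Poisson domination merely make explicit what the paper leaves implicit.
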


\begin{proof}
There are at most $nd$ blocks in the time interval $[0, \Td]$.  Then,
\begin{equation}
E[\text{number of mutations before } \Td] \le \mu (nd) \Td = O(\frac{nd}{\sqrt{D}}) \to 0.
\end{equation}

\end{proof}

	Before proving the KC condition, we show that each block of $\Pi_\IM(\Td)$ occupies a separate deme.  We refer to  $\pi \in \PL$ as a scattered state if each block occupies a separate deme.  We refer to $\pi$ as a semi-scattered state if two blocks share the same deme while all other blocks are in separate demes.

\begin{lemma}  \label{L:IM_scatter}
\begin{equation}
P(\Pi_\IM(\Td) \text{ is a scattered state}) \to 1.
\end{equation}
\end{lemma}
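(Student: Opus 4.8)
The plan is to bound the probability of the complementary event $\Theta$ that $\Pi_\IM(\Td)$ is \emph{not} scattered, i.e.\ that some deme is occupied by two or more blocks at time $\Td$, and to show $P(\Theta)\to 0$. The guiding intuition is a separation of time scales: migration acts on the scale $1/m = N/\Gamma = O(N)$, whereas $\Td = N\sqrt{D}$ is far larger, so each lineage migrates $\sim\Gamma\sqrt{D}\to\infty$ times before $\Td$ and is therefore essentially uniformly distributed over the $D$ demes, with any two lineages nearly independent. Spreading $nd$ lineages over $D$ demes produces collisions only with birthday-type probability $O((nd)^2/D)$, and since $(nd)^2/D \le (nd)^2/\sqrt{D}\to 0$ under the island-model LPLS assumptions, the scattered state should hold with probability tending to $1$.

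To make this precise I would first reduce to pairs. If $\Pi_\IM(\Td)$ is not scattered, two distinct blocks share a deme, so selecting one sampled individual from each block yields a pair of initial individuals $a,b$ whose ancestral lineages are uncoalesced and co-located at $\Td$; writing $\Theta_{ab}$ for this event, a union bound gives $P(\Theta)\le \sum_{a<b}P(\Theta_{ab})$ over the $\binom{nd}{2}$ pairs. It then suffices to establish $P(\Theta_{ab}) = O(1/D)$ uniformly in the pair, since $\binom{nd}{2}=O((nd)^2)$ and $(nd)^2/\sqrt{D}\to 0$.

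The core estimate is $P(\Theta_{ab}) \le 1/D + O(e^{-\Gamma\sqrt{D}})$ for a single pair, which I would obtain in two steps. First, I drop coalescence by a coupling: run the migration clocks of lineages $a$ and $b$ identically in the true process and in an auxiliary process with coalescence suppressed. The two agree up to the coalescence time $\tau$; on $\{\tau\le\Td\}$ the true pair has merged, so $\Theta_{ab}$ fails, while on $\{\tau>\Td\}$ the positions coincide. Hence $\Theta_{ab}$ is contained in the event that two \emph{independent} uniform-jump walkers occupy a common deme at $\Td$, and suppressing coalescence only inflates the probability. Second, I estimate that collision probability: a single walker jumps at rate $m$ to a uniform deme, so its law relaxes to uniform with rate $\sim m$ and at $\Td$ its deme-occupancy probabilities are $1/D + O(e^{-m\Td})$ with $m\Td=\Gamma\sqrt{D}$; summing the product of two such independent occupancy vectors over demes yields a same-deme probability of $1/D + O(e^{-\Gamma\sqrt{D}})$, whether $a,b$ start in the same sampled deme or in different ones.

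Assembling the pieces gives $P(\Theta) \le \binom{nd}{2}\bigl(1/D + O(e^{-\Gamma\sqrt{D}})\bigr) = O((nd)^2/D) + O((nd)^2 e^{-\Gamma\sqrt{D}})$, and both terms vanish because $(nd)^2/D\le (nd)^2/\sqrt{D}\to 0$ and because $(nd)^2=o(\sqrt{D})$ while $\sqrt{D}\,e^{-\Gamma\sqrt{D}}\to 0$. The main obstacle is the single-pair estimate rather than the bookkeeping: one must justify both that ignoring coalescence only increases $P(\Theta_{ab})$ (the coupling above) and that $\Td$ genuinely exceeds the relaxation time of the two-lineage chain, so that the initial configuration — in particular the strong positive correlation present when $a,b$ start in the same sampled deme — has been washed out by time $\Td$. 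This mixing fact is exactly what the choice $\Td = N\sqrt{D}\gg N$ provides, mirroring the scaling already exploited in Lemmas \ref{L:IM_IC} and \ref{L:IM_SSPC}.
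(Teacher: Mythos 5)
Your proof is correct, but it takes a genuinely different route from the paper's. The paper argues pathwise: it shows (i) every block migrates away from its sample deme at least once before $\Td$ (failure probability $O(nd\,e^{-\Gamma\sqrt{D}})$), and (ii) throughout $[0,\Td]$ no block ever migrates into a deme occupied by another block (failure probability $O((nd)^2/\sqrt{D})$, since a block lands on a given other block at rate $m/D$ and $m\Td/D = \Gamma/\sqrt{D}$); together these force the configuration to be scattered from each block's first migration onward, giving the bound $O((nd)^2/\sqrt{D})$. You instead control only the terminal time: reduce to pairs of lineages, couple out coalescence (which can only destroy the bad event), and use the exact relaxation of the rate-$m$ uniform-jump walk, $p_t(x,y)=e^{-mt}\delta_{xy}+(1-e^{-mt})/D$, to get a birthday-type collision bound $1/D + O(e^{-\Gamma\sqrt{D}})$ per pair. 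Your route yields the sharper estimate $O((nd)^2/D) + O((nd)^2 e^{-\Gamma\sqrt{D}})$ and handles two blocks starting in the same deme without a separate ``first migration'' step, since the initial co-location is absorbed into the $e^{-m\Td}$ relaxation term. What the paper's cruder argument buys is the stronger pathwise fact — no block ever enters an occupied deme during $[0,\Td]$ — which is the form of scatteredness implicitly reused in the scattered/semi-scattered bookkeeping of Lemma \ref{L:IM_KCC}, and its $O((nd)^2/\sqrt{D})$ error is of the same order as that already accepted in Lemma \ref{L:IM_IC}, so nothing is lost downstream. One point you lean on implicitly and should state: the positions of the blocks containing $a$ and $b$, up to their coalescence time, form a Markov pair evolving as two independent rate-$m$ walkers with coalescence at rate $1/N$ when co-located; coalescences with third-party blocks neither displace a block nor alter these rates, so the two-walker reduction behind your coupling is exact in the island model.
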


\begin{proof}
We demonstrate that the following two facts hold in the LPLS limit.
\begin{itemize}
\item every block experiences at least one migration
\item during $[0,\Td]$, blocks migrate to demes that are unoccupied by other blocks.
\end{itemize}

	To see the first fact recall that blocks migrate at rate $m$.  So the probability of a block not migrating away from its sample deme by time $\Td$ is $O(\exp[-\sqrt{D}])$.  To see the second fact we recall that blocks migrate to a deme occupied by another block at a rate $\frac{m}{D}$.  So the probability of migrating to an occupied deme is $O(\frac{1}{\sqrt{D}})$.  Summing these probabilities over all possible blocks shows that at time $\Td$ every block is in a separate deme with probability $O(\frac{(nd)^2}{\sqrt{D}})$.  Taking the LPLS limit finishes the proof.

\end{proof}

\begin{lemma} [KC condition]   \label{L:IM_KCC}
\begin{equation}  \label{E:N_j_IM}
\sum_{k=2}^{|\Pi_\IM(\Td)|} \sum_{j=1}^k \bigg|E[N_j(k \to k-1)] - \frac{\theta}{\binom{k}{2}} \frac{1 + 2\Gamma}{2\Gamma}\bigg| \to 0. 
\end{equation}
For $j, j'=1,2,\dots,k$ with $j \ne j'$
\begin{equation}  \label{E:U_IM}
\sum_{k=2}^{|\Pi_\IM(\Td)|} |1 - k(k-1) \inf_{j,j'=1,\dots,k; j\ne j'} P(U_1(k) = j, U_2(k) = j')| \to 0
\end{equation}
\end{lemma}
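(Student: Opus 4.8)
The plan is to verify the two hypotheses (\ref{E:N_j_general}) and (\ref{E:U_general}) of Lemma \ref{L:Fst_limit} for the collecting-phase process, i.e. $\Pi_\IM(t)$ regarded as restarting from $\Pi_\IM(\Td)$, with Kingman mutation rate $r = \theta\frac{1+2\Gamma}{2\Gamma}$. By Lemma \ref{L:IM_scatter} the configuration $\Pi_\IM(\Td)$ is, with probability tending to $1$, a scattered state ($k$ blocks in $k$ distinct demes), so I would take this as the initial condition and absorb the $O(\frac{(nd)^2}{\sqrt D})$ failure probability into the error terms. The structural fact I would lean on everywhere is that the rates of $\Pi_\IM$ (migration $m$ per block and coalescence $\frac1N$ per co-located pair) depend on neither the contents nor the sizes of the blocks, so the collecting-phase coalescent is exchangeable in its blocks.

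For (\ref{E:U_IM}) I would argue directly by this exchangeability. By time $\Td = N\sqrt D$ each block has migrated $\sim m\Td = \Gamma\sqrt D \to\infty$ times, decorrelating its deme from its sampled-deme of origin; and between successive coalescences at level $k\le nd$ each block migrates a further $\sim \frac{D}{k(k-1)} \gg (nd)^2 \to\infty$ times (using $D\gg(nd)^4$, which is exactly $\frac{(nd)^2}{\sqrt D}\to 0$), so positions re-randomize at every level. Exchangeability then gives $P(U_1(k)=j, U_2(k)=j') = \frac1{k(k-1)}$ for all ordered $j\ne j'$ up to a decorrelation error that decays geometrically in the number of intervening migrations and is therefore negligible even after summing over the $O(nd)$ levels; each summand in (\ref{E:U_IM}) is thus $o(1)$ and the sum vanishes.

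Condition (\ref{E:N_j_IM}) carries the real content. Because mutations strike each block as an independent rate-$\mu$ Poisson process and no coalescence occurs on $[T_k,T_{k-1})$, I have the exact identity $E[N_j(k\to k-1)] = \mu\, E[T_{k-1}-T_k]$, which by exchangeability is the same for every $j$. It thus suffices to estimate the mean duration of the $k$-block level by a separation-of-time-scales reduction: from a scattered $k$-block state a collision (a block migrating into an occupied deme) occurs at total rate $k(k-1)\frac mD$, and from the resulting two-in-one-deme state the pair coalesces at rate $\frac1N$ and separates at rate $\approx 2m$, so it coalesces with probability $\frac{1/N}{1/N+2m} = \frac1{1+2\Gamma}$ before a third block interferes. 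This yields the effective rate $\lambda_k = k(k-1)\frac mD \cdot \frac1{1+2\Gamma}$, and using $m=\frac\Gamma N$, $\mu = \frac\theta{ND}$,
\begin{equation}
\mu\cdot\frac1{\lambda_k} = \frac\theta{ND}\cdot\frac{ND(1+2\Gamma)}{k(k-1)\Gamma} = \frac\theta{\binom k2}\frac{1+2\Gamma}{2\Gamma} = \frac r{\binom k2},
\end{equation}
so the target is reproduced exactly at leading order. I would then bound the corrections to $E[T_{k-1}-T_k] = \frac1{\lambda_k}$ coming from \emph{complex} configurations (a third block entering a collision deme, or a colliding block hopping onto another occupied deme), each occurring with probability $O(\frac kD)$ per collision, to get $E[T_{k-1}-T_k] = \frac1{\lambda_k}(1+O(\frac kD))$ and hence $|E[N_j(k\to k-1)] - \frac r{\binom k2}| = O(\frac r{kD})$. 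Summing over $j=1,\dots,k$ and $k=2,\dots,|\Pi_\IM(\Td)|\le nd$ gives $O(\frac{r\,nd}D)$, and $\frac{(nd)^2}{\sqrt D}\to 0$ forces $\frac{nd}D\to 0$, so the sum vanishes.

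The main obstacle is the homogenization error in this last step: making rigorous the effective rate $\lambda_k$ and proving that the probability of entering a complex multiple-occupancy configuration, and its effect on both $E[T_{k-1}-T_k]$ and on the exchangeability invoked for (\ref{E:U_IM}), is uniformly $O(\frac kD)$ across all levels $k$ up to the random ceiling $|\Pi_\IM(\Td)|$. This requires a genuine Markov-chain estimate on the deme-occupancy process rather than the heuristic two-state reduction. Once that uniform bound is in hand, the ample slack in $\frac{(nd)^2}{\sqrt D}\to 0$ (equivalently $D\gg(nd)^4$) makes both summations routine.
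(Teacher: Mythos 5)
Your proposal is correct and follows essentially the same route as the paper's proof: the paper performs exactly your scattered/semi-scattered reduction (collision rate $\frac{k(k-1)m}{D}$, coalescence probability $\frac{1}{1+2\Gamma}$ per collision, negligible time spent co-located, yielding $E[T_{k-1}-T_k]=\frac{ND}{\binom{k}{2}}\frac{1+2\Gamma}{2\Gamma}\bigl(1+O(\frac{k^2}{D})\bigr)$ and hence $E[N_j(k\to k-1)]=\mu E[T_{k-1}-T_k]$), and it dispatches your flagged ``main obstacle'' by the same competing-rates ratio you sketch, bounding the probability of entering a state that is neither scattered nor semi-scattered by $O(\frac{k^2}{D})$ per level --- note your per-collision $O(\frac{k}{D})$ slightly undercounts, since it omits a collision among the \emph{other} $k-2$ blocks while the first pair is co-located, but the corrected bound still sums to $O(\frac{(nd)^3}{D})\to 0$ under $\frac{(nd)^2}{\sqrt{D}}\to 0$. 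The one simplification worth noting concerns (\ref{E:U_IM}): no decorrelation or re-randomization estimate with geometric decay is needed, because from any scattered state the island model's deme symmetry makes the law of $(U_1(k),U_2(k))$ \emph{exactly} uniform over ordered pairs, so the entire error in (\ref{E:U_IM}) is just the probability that $\Pi_\IM(T_k)$ fails to be scattered, which is absorbed by the same $O(\frac{k^2}{D})$ bound together with Lemma \ref{L:IM_scatter}.
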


\begin{proof}	
	Assume that $\Pi(T_k)$ is in a scattered state.  The process goes to a semi-scattered state at rate $\frac{k(k-1)m}{D} = \frac{k(k-1)\Gamma}{ND}$.  Once the blocks are in a semi-scattered state, three events can occur.  We specify the rates of these three events.
\begin{itemize}
\item the two blocks can coalesce (rate: $\frac{1}{N}$).
\item the blocks can return to a scattered state. (rate : $2m(1 - \frac{k-2}{D}) = \frac{2\Gamma}{N}(1 - \frac{k-2}{D}))$. 
\item the blocks can enter a state that is neither a scattered state nor a semi-scattered state.  (rate: $O(\frac{k^2}{ND})$).
\end{itemize}

	If the blocks return to a scattered state, the whole situation starts over.  Let the event of entering a state that is not a scattered state nor a semi scattered state be $\Theta$. A simple ratio shows
\begin{equation}  \label{E:bad_scatter}
P(\Theta) = O( \frac{k^2}{D})
\end{equation}
	Now consider $E[N_j(k \to k-1)]$.  We have,
\begin{equation} \label{E:N_j_IM_T}
E[N_j(k \to k-1)] = \mu E[T_{k-1} - T_k] 
\end{equation}
	The blocks occupy a scattered state for time with mean $\frac{ND}{\Gamma k(k-1)}$ .  Once in a semi-scattered state, outside of the event $\Theta$, the blocks either coalesce or return to a scattered state in time of order $O(N)$,  the probability of coalescing is $\frac{1}{1 + 2\Gamma} + O(\frac{k}{D})$.  Putting this all together and using (\ref{E:bad_scatter}) gives,
\begin{equation}
E[T_{k-1} - T_k] = \frac{ND}{\binom{k}{2}} \frac{1 + 2\Gamma}{2\Gamma} (1 + O(\frac{k^2}{D})).
\end{equation}
	Plugging the above expression into (\ref{E:N_j_IM_T}), summing over $k$, and taking the LPLS limit gives (\ref{E:N_j_IM}).   By the symmetry of the island model, if $\Pi(T_k)$ is in a scattered state then $P(U_1(k) = j, U_2(k) = j') = \frac{1}{k(k-1)}$.  By (\ref{E:bad_scatter}) and Lemma \ref{L:IM_scatter}, the probability of $\Pi(T_k)$ being in a scattered state over all $k$ is bounded below by $1 - O(\sum_{k=2}^nd \frac{k^2}{D})$.  This gives (\ref{E:U_IM}). 

\end{proof}

	Lemmas \ref{L:IM_IC}-\ref{L:IM_KCC} proves (\ref{E:IM_G_convergence}) in Theorem \ref{T:IM_convergence}.  We are left to specify the distribution of $\Pi_{\IM, k}(\Td)$.  As observed in \cite{Golding_1983_Genetics, Slatkin_1981_Genetics}, $B_k, b_{k,i}$ are specified by the Ewens Sampling Formula \cite{Ewens_1972_Theo_Pop_Bio}.  More precisely,  the following theorem follows from a result of Hoppe \cite{Hoppe_1984_J_Math_Bio} and our Lemmas \ref{L:IM_IC} and \ref{L:IM_scatter}.	
\begin{theorem} [Hoppe's Urn Theorem]  \label{L:Hoppe}
Let $\xi_i$ be a Bernoulli random variable with success probability $\frac{2\Gamma}{2\Gamma + i-1}$.  Assume that $\xi_2, \xi_3, \dots, \xi_n$ are independent.  Then,
\begin{equation}  \label{E:Hoppe}
B_k = 1 + \xi_2 + \xi_3 + \dots + \xi_n.
\end{equation}
and each $B_k$ is i.i.d.
\end{theorem}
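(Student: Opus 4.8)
The plan is to reduce the statement to a single-deme calculation and then invoke the classical Ewens--Hoppe theory. Because Lemma \ref{L:IM_IC} shows that lineages from distinct sampled demes do not coalesce before $\Td$ in the limit, the partition $\Pi_{\IM,k}(\Td)$ of $\F_k$ is, in the LPLS limit, generated entirely by interactions among the $n$ lineages that started in deme $k$, with no influence from the other sampled demes. Since the initial configurations in distinct demes are i.i.d.\ and the dynamics decouple in the limit, this immediately yields the claim that the $B_k$ are i.i.d., so it suffices to compute the law of $B_1$. Moreover, Lemma \ref{L:IM_SSPC} shows that no mutation occurs before $\Td$, so the block structure at $\Td$ is determined purely by coalescence and migration. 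This is exactly Wakeley's scattering phase.

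Next I would identify the within-deme scattering partition as an Ewens Sampling Formula partition with parameter $2\Gamma$. Consider the $n$ lineages of deme $1$ run backward in time. Each lineage migrates out of its deme at total rate $m = \Gamma/N$, and by Lemma \ref{L:IM_scatter} a migrating lineage lands in an empty deme and never returns on the scattering time scale, so it founds its own block; two lineages sharing a deme coalesce at rate $\tfrac1N$. The competition between these rates is exactly the one computed in the proof of Lemma \ref{L:IM_KCC}: from a semi-scattered configuration the coalescence probability is $\tfrac{1/N}{1/N + 2\Gamma/N} = \tfrac{1}{1+2\Gamma}$. Building the sample sequentially, the $i$th lineage added either coalesces into one of the existing $i-1$ blocks or escapes to found a new one, and the escape probability converges to $\tfrac{2\Gamma}{2\Gamma + i - 1}$. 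This is precisely the Ewens Sampling Formula with parameter $2\Gamma$, as observed by Golding and Slatkin in the cited references.

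Finally, I would invoke Hoppe's urn representation for the Ewens Sampling Formula: the number of blocks of an $\mathrm{ESF}(\theta)$ partition of $n$ elements is distributed as $1 + \sum_{i=2}^n \xi_i$, where the $\xi_i$ are independent with $\xi_i \sim \mathrm{Bernoulli}\big(\tfrac{\theta}{\theta + i - 1}\big)$. Taking $\theta = 2\Gamma$ gives $B_1 = 1 + \xi_2 + \cdots + \xi_n$ with $\xi_i \sim \mathrm{Bernoulli}\big(\tfrac{2\Gamma}{2\Gamma + i - 1}\big)$, which is (\ref{E:Hoppe}).

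The main obstacle is the second step: turning the finite-$N$, finite-$D$ scattering dynamics into the exact Ewens structure in the limit. The per-step escape probability equals $\tfrac{2\Gamma}{2\Gamma + i - 1}$ only up to corrections of order $O(k^2/D)$ --- the same corrections that arise in Lemma \ref{L:IM_KCC} from configurations that are neither scattered nor semi-scattered --- and one must check that the sequential escape/coalescence choices become independent across $i$ in the limit, so that the $\xi_i$ are genuinely independent Bernoullis. Controlling these error terms, which is where the LPLS scaling $\tfrac{(nd)^2}{\sqrt D} \to 0$ together with Lemmas \ref{L:IM_IC} and \ref{L:IM_scatter} enter, is the crux; once the partition is known to be exactly $\mathrm{ESF}(2\Gamma)$ in the limit, the remainder is a direct citation to Hoppe.
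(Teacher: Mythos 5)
Your proposal is correct and follows essentially the same route as the paper: the paper likewise reduces to the within-deme scattering dynamics via Lemmas \ref{L:IM_IC} and \ref{L:IM_scatter}, identifies the resulting partition of $\F_k$ with the Ewens Sampling Formula of parameter $2\Gamma$ (citing the observation of Golding and Slatkin), and then invokes Hoppe's urn result for the number of blocks. The paper presents this as a direct citation rather than spelling out the sequential escape-versus-coalescence argument and its $O(k^2/D)$ error control, so your write-up is simply a more detailed version of the same proof.
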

Using a theorem of Donnelly and Tavare \cite{Donnelly_1986_Adv_Appl_Prob} we have the following result.
\begin{theorem}  \label{T:Tavare}
For fixed $J$,
\begin{equation}  \label{E:Tavare}
\lim_{n \to \infty} (b_{k,1}, b_{k,2}, \dots, b_{k,J}) 
	= (\Upsilon_1, \Upsilon_2, \dots, \Upsilon_J)
\end{equation}
where $\Upsilon$ is defined as in Theorem \ref{T:IM_convergence}.
\end{theorem}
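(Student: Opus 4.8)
The plan is to exploit the Hoppe urn representation of the Ewens Sampling Formula supplied by Theorem \ref{L:Hoppe}. Under that representation the blocks $E_{k,1}, E_{k,2}, \dots, E_{k,B_k}$ are produced in their order of appearance as colored balls are drawn from an urn containing a single black ball of mass $2\Gamma$ together with colored balls of mass $1$, each new color opening a fresh block. The target vector $(\Upsilon_1, \dots, \Upsilon_J)$ is precisely the GEM (stick-breaking) distribution with Beta$[1, 2\Gamma]$ factors, so the assertion is the age-ordered allele-frequency limit of Donnelly and Tavar\'e \cite{Donnelly_1986_Adv_Appl_Prob}. I would therefore organize the proof around two points: first, that the size-biased (order-of-appearance) ordering induced by the urn is the ordering under which the $b_{k,j}$ converge, and second, that the stick-breaking structure can be recovered by a P\'olya-urn analysis of the same urn.

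First I would treat the oldest block. Write $a_n$ for the number of balls belonging to $E_{k,1}$ after $n$ colored draws, so that the urn has total mass $2\Gamma + n$ and the next draw falls in $E_{k,1}$ with probability $a_n/(2\Gamma + n)$. A direct conditional-expectation computation shows that $M_n = a_n/(2\Gamma + n)$ is a bounded martingale, hence converges almost surely to a limit $\beta_1$; recognizing the classical P\'olya--Eggenberger urn started from one ball of type $E_{k,1}$ against background mass $2\Gamma$ (or a moment computation) identifies the law of $\beta_1$ as Beta$[1, 2\Gamma]$. Since $b_{k,1} = a_n/n$ shares the almost-sure limit of $M_n$, this gives $b_{k,1} \to \beta_1$.

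The remaining coordinates I would obtain by a regeneration argument. Conditionally on the trajectory of $E_{k,1}$, the subsequence of draws that do not land in $E_{k,1}$ is again generated by a Hoppe urn with the same parameter $2\Gamma$, the black ball retaining its mass, so the fraction of those draws that fall in the next block $E_{k,2}$ converges to an independent Beta$[1, 2\Gamma]$ variable $\beta_2$, whence $b_{k,2} \to (1-\beta_1)\beta_2$. Iterating yields $b_{k,j} \to \beta_j \prod_{i=1}^{j-1}(1-\beta_i) = \Upsilon_j$ with the $\beta_j$ i.i.d.; because $J$ is fixed while $B_k \to \infty$, only finitely many coordinates are followed and the convergence is joint. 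The main obstacle is exactly this regeneration step: one must show that after deleting the oldest block the residual process is a fresh Hoppe urn independent of $\beta_1$, and that the successive limits $\beta_j$ are jointly independent rather than merely individually Beta-distributed. This self-similarity is the content of the Donnelly--Tavar\'e theorem, so in the paper I would carry out the martingale step explicitly and invoke \cite{Donnelly_1986_Adv_Appl_Prob} for the independence, after noting that Lemmas \ref{L:IM_IC} and \ref{L:IM_scatter} guarantee the scattered partition of $\mathcal{F}_k$ is indeed distributed by the Ewens formula in the relevant age order.
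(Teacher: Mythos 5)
Your proposal is correct and follows essentially the same route as the paper: the paper's entire proof consists of noting that the scattered partition of $\F_k$ is Ewens-distributed via Hoppe's urn (Theorem \ref{L:Hoppe}, resting on Lemmas \ref{L:IM_IC} and \ref{L:IM_scatter}) and then citing the Donnelly--Tavar\'e age-ordered frequency limit \cite{Donnelly_1986_Adv_Appl_Prob}, which is exactly the citation your argument ultimately relies on for the joint independence of the $\beta_j$. Your martingale/P\'olya-urn computation for $b_{k,1}$ and the regeneration sketch for the later coordinates are a correct and standard way of unpacking what that citation contains, but they become redundant once Donnelly--Tavar\'e is invoked.
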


	Finally we note that using Lemma \ref{L:Hoppe}, a simple computation shows $\frac{E[B_1^2]}{d} = O(\frac{\log^2(n)}{d})$.  By our assumptions on the LPLS limit of a stepping stone model coalescent we have $\frac{E[B_1^2]}{d} \to 0$.

\subsection{Stepping Stone Model and G/KC}  \label{S:SSGKC}

	This section is dedicated to the proof of Theorem \ref{T:SS_convergence}.  In \cite{Cox_1989_Ann_Prob, Cox_2002_Ann_Appl_Prob, Zahle_2005_Ann_Appl_Prob}, the authors  made significant breakthroughs in the analysis of the stepping stone model coalescent.     In this section, we draw  heavily from the theory developed in those articles, especially from the work of Zahle et al. \cite{Zahle_2005_Ann_Appl_Prob}.    Our results use the basic techniques introduced by these authors, although there are several important differences.   Zahle et al. assume that sampled individuals are initially spaced far apart, while we start with $n$ individuals in each deme. Further, Zahle et al. assume fixed $n,d$ as $N,D \to \infty$ while we take $n,d,N,D \to \infty$.  Perhaps more importantly, while Zahle et al. use an integral approach to prove their results, we use a differential approach.  
	
	We feel that the results in \cite{Cox_1989_Ann_Prob, Cox_2002_Ann_Appl_Prob, Zahle_2005_Ann_Appl_Prob} have not received the attention they deserve within the population genetics literature due to their theoretical complexity.  We hope that by providing a different approach to the theory of \cite{Cox_1989_Ann_Prob, Cox_2002_Ann_Appl_Prob, Zahle_2005_Ann_Appl_Prob}, we will encourage researchers with more applied interests to use the theory.  Below, wherever possible, we use the notation of Zahle et al.
	
		Let $\Torus$ be a two dimensional torus of width $W$ corresponding to the stepping stone model.    In the stepping stone model we may think of the blocks in $\Pi_\SS(t)$ as coalescing random walkers on $\Torus$ moving with rate $m$.  Given two random walkers on $\Torus$ let $T_0$ be the first time the two walkers occupy the same deme.  Let $t_0$ be the time at which the two walkers coalesce.  From a technical perspective it is simpler to consider a single random walker moving at rate $2m$ than two random walkers moving at rate $m$.  When we consider a single random walker we let $T_0$ be the time at which the random walker hits the origin $(0,0)$.  To consider $t_0$ we let a coalescent event occur at rate $\frac{1}{N}$ when the walker is at the origin, then $t_0$ is the time at which a coalescent event occurs.  We let $P_x^{(w)}(\Theta)$ be the probability of an event $\Theta$ for a random walker starting in deme $x$ and moving at rate $w$.  We let $p_s^{(w)}(x,y)$ be the probability that a random walker starting at $x$ and moving at rate $w$ will be in deme $y$ at time $s$.  Finally $P_x(\Theta) = P_x^{(1)}(\Theta)$ and $p_s(x,y) = p_s^{(1)}(x,y)$.
		
			 Before proceeding, we state some technical results concerning random walks on $\Torus$.  These results can be found in \cite{Cox_1989_Ann_Prob, Cox_2002_Ann_Appl_Prob}, we refer the reader to those works for the proofs. 
	 
\begin{lemma} \label{L:T_0_combined}
For $t \le \epsilon W^2 \log W$, 
\begin{equation}  \label{E:T_0_combined}
\lim_{t \to \infty} P_{(0,1)}(T_0 > t) = \frac{2\pi}{\log t}(1 + O(\epsilon)).
\end{equation}
If $|x| = o(W)$ then
\begin{equation}  \label{E:density_x}
\lim p_s(x,0) \le C \frac{1}{x^2}.
\end{equation}
If $|x| \to \infty$, $|x| = o(W)$ and $s \le x^2$ then
\begin{equation} \label{E:x_tech}
p_s(x,0) \le C \frac{\exp[-\frac{x^2}{s}]}{s}.
\end{equation}
If $t_W \to \infty$ then
\begin{equation}  \label{E:density_steady}
W^2|p_{t_W W^2}(x,y)) - \frac{1}{W^2}| \to 0.
\end{equation}
If $s \to \infty$ and $s < CW$ then 
\begin{equation}  \label{E:density_s}
\lim p_s(x,0) < \frac{C}{s}.
\end{equation}
\end{lemma}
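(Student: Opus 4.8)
The plan is to derive all five estimates from two classical analytic inputs for the continuous-time simple random walk on $\Torus$: a local central limit theorem (equivalently, Gaussian heat-kernel bounds) valid on the time scale $s \ll W^2$, where the walk has not yet felt the torus, and a Fourier/spectral decomposition on the finite torus that governs the long-time regime $s \gtrsim W^2$. I would dispose of the transition-density bounds \eqref{E:density_x}, \eqref{E:x_tech}, \eqref{E:density_s} and the mixing estimate \eqref{E:density_steady} first, since the hitting-time asymptotic \eqref{E:T_0_combined} is assembled out of them.

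For the density bounds I would begin with the on-diagonal estimate $p_s(0,0) \le C/s$, which follows from the Nash inequality (or directly from the local CLT, giving $p_s(0,0) \sim c/s$) on the range $1 \ll s \ll W^2$; reversibility and Cauchy--Schwarz then yield $p_s(x,0) \le \sqrt{p_s(0,0)\,p_s(0,0)} = p_s(0,0) \le C/s$, which is \eqref{E:density_s}. The off-diagonal Gaussian bound \eqref{E:x_tech} I would obtain by the exponential-moment (Chernoff) method: Poissonize, so that conditionally on the number of jumps the walk is a discrete SRW, and bound the displacement by an exponential Markov inequality. Because $|x| = o(W)$ and $s \le x^2 = o(W^2)$, the walk stays on the universal cover and the $\mathbb{Z}^2$ estimate transfers to $\Torus$ unchanged. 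Estimate \eqref{E:density_x} is then the special case obtained by maximizing the right-hand side of \eqref{E:x_tech} over $s$: the function $s \mapsto e^{-x^2/s}/s$ peaks at $s \asymp x^2$ with value of order $1/x^2$, giving the uniform bound $C/x^2$.

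The mixing estimate \eqref{E:density_steady} I would prove by Fourier analysis on $\Torus$. Diagonalizing the generator in the characters $\chi_k(x) = \exp(2\pi i\, k\cdot x/W)$ yields $p_t(x,y) = \frac{1}{W^2}\sum_k e^{\lambda_k t}\chi_k(x-y)$, where the $k=0$ mode contributes exactly $1/W^2$ and the spectral gap satisfies $\lambda_{(1,0)} \asymp -W^{-2}$. Hence $W^2\left(p_{t_W W^2}(x,y) - \frac{1}{W^2}\right) = \sum_{k \neq 0} e^{\lambda_k t_W W^2}\chi_k(x-y)$, and the right-hand sum factorizes over the two coordinates; for $t_W \to \infty$ each coordinate sum tends to $1$ and the difference tends to $0$, which is the claim.

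The delicate step is the hitting-time asymptotic \eqref{E:T_0_combined}, and I expect it to be the main obstacle. The natural starting point is the occupation-time identity $G_t(x,0) = E_x\!\left[\mathbf{1}_{\{T_0 \le t\}}\, G_{t-T_0}(0,0)\right]$, where $G_t(x,0) = \int_0^t p_s(x,0)\,ds$, obtained from the strong Markov property at $T_0$. One is tempted to replace $G_{t-T_0}(0,0)$ by $G_t(0,0)$ and read off $P_{(0,1)}(T_0 > t) \approx [G_t(0,0) - G_t(x,0)]/G_t(0,0)$, but this is subtle precisely because $T_0$ has a heavy $1/\log$ tail, so conditionally on $\{T_0 \le t\}$ the hitting time is not concentrated near $0$. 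I would make this rigorous at the level of Laplace transforms: the identity is a renewal convolution $\hat G(\lambda;x,0) = \hat f(\lambda)\,\hat G(\lambda;0,0)$ with $f$ the hitting density, so the transform of $P_{(0,1)}(T_0 > t)$ equals $\lambda^{-1}[\hat G(\lambda;0,0) - \hat G(\lambda;x,0)]/\hat G(\lambda;0,0)$. Here the numerator tends to the lattice potential kernel (the Green's-function difference) at the neighbour $(0,1)$, while $\hat G(\lambda;0,0)$ grows like a constant multiple of $\log(1/\lambda)$ by the local CLT, and a Tauberian theorem converts this slowly-varying transform back to $P_{(0,1)}(T_0 > t) \sim \frac{2\pi}{\log t}$, the precise constant $2\pi$ emerging from the potential-kernel value together with the local-CLT normalization in the rate convention of \cite{Cox_1989_Ann_Prob, Cox_2002_Ann_Appl_Prob}. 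Finally, the constraint $t \le \epsilon W^2 \log W$ enters through $G_t(0,0)$: splitting the integral at $s \asymp W^2$, the equilibrated tail where $p_s(0,0) \approx W^{-2}$ contributes at most $O(\epsilon \log W)$ against a main term of size $\asymp \log W$, which is exactly the relative error $(1 + O(\epsilon))$. The real work, as in \cite{Cox_1989_Ann_Prob, Cox_2002_Ann_Appl_Prob}, lies in controlling the convolution/renewal step and in pinning the multiplicative constant.
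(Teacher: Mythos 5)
The paper offers no proof of this lemma at all: it is imported from \cite{Cox_1989_Ann_Prob, Cox_2002_Ann_Appl_Prob}, with the reader referred to those works for the proofs. So your proposal is, by construction, a different route --- it reconstructs the proofs rather than citing them --- and the route you pick is essentially the standard one underlying those references: on-diagonal heat-kernel bounds (Nash or local CLT) plus reversibility and Cauchy--Schwarz for \eqref{E:density_s}; a Chernoff bound on $\mathbb{Z}^2$ transferred to $\Torus$ by unwrapping (more precisely, by periodizing and noting the image terms are negligible when $s \le x^2 = o(W^2)$) for \eqref{E:x_tech}; Fourier/spectral decomposition with spectral gap of order $W^{-2}$ for \eqref{E:density_steady}; and, for \eqref{E:T_0_combined}, the renewal identity at $T_0$, the Abelian estimate $\hat{G}(\lambda;0,0) \asymp \log(1/\lambda)$, the potential kernel, and a Karamata Tauberian theorem, with the restriction $t \le \epsilon W^2 \log W$ entering as an $O(\epsilon)$ relative correction to $G_t(0,0)$ from the equilibrated tail $s \gtrsim W^2$. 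That is all sound, and your warning that one cannot naively replace $G_{t-T_0}(0,0)$ by $G_t(0,0)$ because of the $1/\log$ tail of $T_0$ is exactly the point where a careless argument fails. What the paper's choice buys is brevity; what yours buys is a self-contained account identifying which classical input drives each estimate.

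Two concrete cautions. First, \eqref{E:density_x} does not follow from \eqref{E:x_tech} alone: the Gaussian bound is only available for $s \le x^2$, so your maximization covers that range, but for $s > x^2$ you must separately invoke $p_s(x,0) \le C/s \le C/x^2$ from \eqref{E:density_s} (and, for $s \gtrsim W^2$, the equilibrium bound $p_s(x,0) \approx W^{-2} \le C/x^2$, which uses $|x| = o(W)$); this is a one-line supplement, but it is missing. Second, the constant in \eqref{E:T_0_combined} is not a bookkeeping afterthought, and it is the one step your plan genuinely does not close. Executed literally under the paper's stated convention ($P_x = P_x^{(1)}$, total jump rate one, uniform over the four neighbours), your computation gives $p_t(0,0) \sim 1/(\pi t)$, $G_t(0,0) \sim (\log t)/\pi$, and potential kernel $a((0,1)) = 1$, hence $P_{(0,1)}(T_0 > t) \sim \pi/\log t$; moreover this asymptotic is invariant under constant time-rescaling of the walk, so no choice of total jump rate for this step kernel produces $2\pi/\log t$. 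The $2\pi$ in the lemma (and in the paper's $\alpha = 2\pi N m/\log W$) is inherited from the normalization of \cite{Cox_2002_Ann_Appl_Prob}, whose colony-size and coalescence-rate conventions differ by a factor of two from this paper's Moran setup. So pinning the constant requires explicitly reconciling those conventions --- or concluding that the statement as written carries a factor-of-two convention mismatch --- and deferring it, as you do, leaves the stated asymptotic unproved.
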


	Set $\Td =\frac{W^2}{2m}$.  Recall that $\SR = \frac{W}{\sqrt{\log W}}$ is the minimum distance between sampled demes. 

\begin{lemma} [Independence Condition]  \label{L:SS_IC}
Let $\Theta$ be the event in which two individuals from separate sampled demes coalesce before time $\Td$.  Then,
\begin{equation}
P(\Theta) = O(\frac{(nd)^2}{\sqrt{\log W}})
\end{equation}
\end{lemma}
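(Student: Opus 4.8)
The plan is to bound the coalescence probability for one pair of lineages and then apply a union bound over the $O((nd)^2)$ pairs of sampled individuals that lie in distinct demes. Fix such a pair; on $\Torus$ their initial displacement $x$ satisfies $|x| \ge \SR = \frac{W}{\sqrt{\log W}}$. Following the reduction described above, I would replace the two rate-$m$ walkers by their single difference walk at rate $2m$, with a coalescent clock ringing at rate $\frac{1}{N}$ whenever that walk sits at the origin. Conditioning on the path and using $1 - e^{-y} \le y$,
\begin{equation}
P_x(t_0 \le \Td) \le \frac{1}{N}\int_0^{\Td} p_s^{(2m)}(x,0)\, ds = \frac{1}{2mN}\int_0^{W^2} p_u(x,0)\, du,
\end{equation}
where the equality uses $p_s^{(2m)}(x,0) = p_{2ms}(x,0)$, the substitution $u = 2ms$, and $2m\Td = W^2$.

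It then remains to show $\int_0^{W^2} p_u(x,0)\, du = O(\log\log W)$ whenever $|x| \ge \SR$, and I would do this by splitting the integral at $u = |x|^2$. On $[0,|x|^2]$ the walk has typically not yet reached the origin, so the short-time estimate (\ref{E:x_tech}) gives $p_u(x,0) \le C u^{-1}\exp[-|x|^2/u]$; the substitution $v = |x|^2/u$ turns $\int_0^{|x|^2} p_u(x,0)\,du$ into $\int_1^\infty C v^{-1} e^{-v}\,dv = O(1)$. On $[|x|^2, W^2]$ the walk has spread beyond scale $|x|$ but has not yet equilibrated over $\Torus$ (which happens only near scale $W^2\log W$), so the diffusive local-central-limit estimate of Lemma \ref{L:T_0_combined}, in the form $p_u(x,0) \le C/u$, yields $\int_{|x|^2}^{W^2} C u^{-1}\,du = C\log(W^2/|x|^2)$. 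Because $|x| \ge \SR$ forces $|x|^2 \ge W^2/\log W$, this is at most $C\log\log W$, and since $\log(W^2/|x|^2)$ decreases in $|x|$ the worst case over all pairs is the minimal separation $|x| = \SR$.

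Combining the two paragraphs and substituting $\frac{1}{2mN} = \frac{\pi}{\alpha\log W}$ (from $\alpha = \frac{2\pi Nm}{\log W}$) gives $P_x(t_0 \le \Td) = O\!\left(\frac{\log\log W}{\log W}\right) = O\!\left(\frac{1}{\sqrt{\log W}}\right)$, the last step because $\log\log W = o(\sqrt{\log W})$; summing this uniform bound over the $O((nd)^2)$ pairs yields $P(\Theta) = O\!\left(\frac{(nd)^2}{\sqrt{\log W}}\right)$. The main obstacle will be the integral bound of the second paragraph: one must match each portion of the time integral to the correct random-walk regime (the short-time Gaussian tail, then the intermediate $1/u$ decay that holds only until the walk feels the torus) and confirm that the separation $\SR = \frac{W}{\sqrt{\log W}}$ is precisely the scale converting the intermediate logarithm $\log(W^2/|x|^2)$ into the harmless $\log\log W$. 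Keeping the rate-$2m$ versus rate-$1$ time rescaling straight, and tracking where $\alpha$ enters, are the remaining points requiring care.
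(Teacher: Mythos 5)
Your proposal is correct, but it takes a genuinely different route from the paper. The paper bounds the \emph{stronger} event that the difference walk ever hits the origin before $\Td$ (i.e.\ that the pair ever shares a deme), using a last-exit decomposition
\begin{equation}
P_x^{(2m)}(T_0 < \Td) = \int_0^{W^2} ds\, p_s(x,0)\, P_{(0,1)}(T_0 > W^2 - s),
\end{equation}
splitting at $\delta W^2$ with $\delta = 1/\sqrt{\log W}$; there the smallness comes from the two-dimensional escape probability $P_{(0,1)}(T_0 > t) \approx 2\pi/\log t$ of (\ref{E:T_0_combined}), and the resulting per-pair bound is $O(1/\sqrt{\log W})$. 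You instead bound the coalescence event itself by the first-moment (expected occupation time) inequality $P_x(t_0 \le \Td) \le \frac{1}{2mN}\int_0^{W^2} p_u(x,0)\,du$, so the smallness comes from the factor $\frac{1}{2mN} = O(1/\log W)$ rather than from escape-probability asymptotics; you then only need the heat-kernel bounds (\ref{E:x_tech}) and (\ref{E:density_s}), split at $u = |x|^2$, and the separation $\SR$ enters exactly to tame $\log(W^2/|x|^2)$ into $\log\log W$. Your argument is more elementary (it avoids (\ref{E:T_0_combined}) entirely) and yields the sharper per-pair bound $O(\log\log W/\log W) = o(1/\sqrt{\log W})$; its trade-off is that it proves strictly less: the pair may well enter the same deme during $[0,\Td]$, and only the ringing of the rate-$\frac{1}{N}$ clock is excluded, whereas the paper's hitting-time bound (and its last-exit technique) is the prototype reused in Lemmas \ref{L:SS_T_0}--\ref{L:SS_KC}, where control of co-location, not just coalescence, is needed. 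Since the lemma as stated concerns coalescence only, your weaker event is exactly the right one and the proof stands. One shared caveat: you apply (\ref{E:density_s}) for $u$ up to $W^2$ although it is stated for $s < CW$, and (\ref{E:x_tech}) is stated for $|x| = o(W)$ while sampled pairs can be separated by order $W$; the paper's own proof makes identical uses (its $I_2$ runs over $[\delta W^2, W^2]$, and it invokes (\ref{E:density_x}) with $|x|=o(W)$), so these are artifacts of how Lemma \ref{L:T_0_combined} is stated rather than gaps in your argument.
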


\begin{proof}
We can consider a single random walk moving at rate $2m$ that starts at position $x$ with $|x| > \SR$.  Let $\delta = \frac{1}{\sqrt{\log W}}$.  We compute $P_x^{(2m)}(T_0 < \Td)$ by considering the last time the walker is at the origin and rescaling time by $2m$:
\begin{align}  \label{E:T_0_T_d_original}
P_x^{(2m)} & (T_0 < \Td)  = \int_0^{W^2} ds p_s(x,0) P_{(0,1)}(T_0 > W^2 - s)
\\ \notag
	& \le \int_0^{\delta W^2} ds p_s(x,0) P_{(0,1)}(T_0 > (1-\delta)W^2) + \int_{\delta W^2}^{W^2} ds p_s(x,0) P_{(0,1)}(T_0 > W^2 - s).
\\ \notag
	& = I_1 + I_2.
\end{align}
Consider $I_1$.  Using (\ref{E:T_0_combined}) and (\ref{E:density_x}) in  the expression for $I_1$ gives,
\begin{equation}   \label{E:I_1_ind_final}
I_1 \le  \frac{\delta W^2}{x^2 \log W} = O(\delta).
\end{equation}
Now consider $I_2$.  Using (\ref{E:T_0_combined}) and (\ref{E:density_s}) gives
\begin{equation}  \label{E:I_2_independent}
I_2 \le \frac{1}{\delta W^2} \int_{\delta W^2}^{W^2} ds P(T_0 > W^2 - s) \le \frac{1}{\delta \log (W)}
\end{equation}
Combining (\ref{E:I_1_ind_final}) and (\ref{E:I_2_independent})  gives
\begin{equation}
P_x^{(2m)}(T_0 < T_d) = O(\frac{1}{\sqrt{\log W}}).
\end{equation}
Considering all possible pairs finishes the proof.

\end{proof}

\begin{lemma} [Short Scattering Phase Condition] \label{L:SS_SSPC}
\begin{equation}
P(\text{mutation before } \Td) = O(\frac{nd}{\log W}).
\end{equation}
\end{lemma}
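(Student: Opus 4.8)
The plan is to follow exactly the strategy used for the island model in Lemma \ref{L:IM_SSPC}: instead of estimating the probability of a mutation directly, I would bound it by the expected number of mutations on $[0,\Td]$ and then let the stepping-stone scaling relations do the work. The key realization is that the hitting-time machinery of Lemma \ref{L:T_0_combined}, which was essential for the independence condition (Lemma \ref{L:SS_IC}), plays no role here — mutations are driven purely by the block count and the per-block mutation rate, independently of the random walk geometry.

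First I would note that a mutation occurring before $\Td$ is the event that the mutation-counting process is positive on $[0,\Td]$, so by Markov's inequality
\begin{equation}
P(\text{mutation before } \Td) \le E[\text{number of mutations before } \Td].
\end{equation}
To bound the right-hand side, I would observe that at every time in $[0,\Td]$ there are at most $nd$ blocks present, since coalescence only decreases the block count from its initial value $nd$, and each block accumulates mutations at rate $\mu$. Hence
\begin{equation}
E[\text{number of mutations before } \Td] \le \mu (nd) \Td.
\end{equation}

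The remaining step is to insert the scalings and check that the constants collapse to the stated order. Recalling $D = W^2$, so that $\mu = \theta/(ND) = \theta/(NW^2)$, together with $\Td = W^2/(2m)$, I would compute
\begin{equation}
\mu (nd) \Td = \frac{\theta}{NW^2}(nd)\frac{W^2}{2m} = \frac{\theta (nd)}{2Nm},
\end{equation}
and then substitute $\alpha = \frac{2\pi Nm}{\log W}$, i.e. $Nm = \frac{\alpha \log W}{2\pi}$, to obtain $\frac{\pi \theta (nd)}{\alpha \log W}$. Since $\theta$ and $\alpha$ are held fixed in the LPLS limit, this is $O(\frac{nd}{\log W})$, which is the claim.

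I expect no genuine obstacle: the single point demanding care is the bookkeeping of the scaling relations among $\mu$, $\theta$, $N$, $D$, $m$, $W$, and $\alpha$, and in particular the substitution $Nm = \alpha \log W/(2\pi)$ that converts the crude (rate)$\times$(time)$\times$(count) estimate into the stated $1/\log W$ decay. As a remark, the LPLS assumption $\frac{(nd)^2 \log\log W}{\sqrt{\log W}} \to 0$ forces $nd = o(\log W)$, so the bound in fact gives $P(\text{mutation before } \Td) \to 0$, which is what the short scattering phase condition requires.
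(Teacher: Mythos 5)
Your proposal is correct and follows essentially the same route as the paper: bound the probability of a mutation by the expected number of mutations, use the crude estimate $\mu (nd) \Td$ from the fact that at most $nd$ blocks exist on $[0,\Td]$, and then invoke the scalings $\theta = \mu N D$, $D = W^2$, $\Td = \frac{W^2}{2m}$, and $\alpha = \frac{2\pi N m}{\log W}$ to obtain $O(\frac{nd}{\log W})$. Your explicit use of Markov's inequality and the closing remark that the LPLS assumption makes the bound vanish are just slightly more careful renderings of steps the paper leaves implicit.
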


\begin{proof}
There are at most $nd$ blocks in the time interval $[0, \Td]$.  Then,
\begin{equation}
E[\text{number of mutations in } [0,\Td]] \le \mu (nd) \Td = O(\frac{nd}{Nm}) = O(\frac{nd}{\log W}).
\end{equation}

\end{proof}

	Before demonstrating the KC Condition we prove some preliminary lemmas.  First, we show that at time $\Td$ the blocks are  far apart from one another.  Define 
\begin{equation}
\Gamma(k) = \{\pi \in \PL : |\pi| = k, \text{ if } (E_1, g_1), (E_2, g_2) \in \pi \text{ then } |g_1 - g_2| \ge  \frac{W}{(\log{W})^\frac{1}{2}}\}
\end{equation}
	
\begin{lemma}  \label{L:dispersed_at_Td}
Let $M = |\Pi_\SS(\Td)|$.  Then,
\begin{equation}
P(\Pi_\SS(\Td) \notin \Gamma(M)) = O(\frac{(nd)^2}{\log W})
\end{equation}
\end{lemma}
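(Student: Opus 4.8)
The plan is to control $P(\Pi_\SS(\Td) \notin \Gamma(M))$ by a union bound over the $\binom{nd}{2}$ pairs of initially sampled individuals. The key observation is that if $\Pi_\SS(\Td) \notin \Gamma(M)$, then two distinct blocks $(E_1,g_1),(E_2,g_2)$ of $\Pi_\SS(\Td)$ satisfy $|g_1-g_2| < \Dsample$; choosing one individual from $E_1$ and one from $E_2$ yields a pair of sampled individuals whose lineages are distinct at $\Td$ (they sit in different blocks) yet occupy demes within distance $\Dsample$. Writing $A_{x,x'}$ for the event that the lineages of $x$ and $x'$ have not coalesced by $\Td$ and lie within distance $\Dsample$, we obtain
\[
P(\Pi_\SS(\Td) \notin \Gamma(M)) \le \sum_{\{x,x'\}} P(A_{x,x'}),
\]
so it suffices to prove $P(A_{x,x'}) = O(\tfrac{1}{\log W})$ uniformly over pairs.

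First I would reduce $A_{x,x'}$ to a single-walk statement, exactly as in the proof of Lemma \ref{L:SS_IC}. Before coalescence the relative displacement of the two lineages is a random walk $R(t)$ on $\Torus$ moving at rate $2m$, with coalescence occurring at rate $\frac{1}{N}$ while $R(t)=0$. On the event that the lineages have not coalesced by $\Td$, this relative displacement coincides with the displacement of the corresponding free (non-coalescing) walk for all $t \le \Td$. Hence $A_{x,x'}$ is contained in the event that the free rate-$2m$ walk started at $x-x'$ lies within distance $\Dsample$ of the origin at time $\Td$. Rescaling time by $2m$ and using $2m\,\Td = W^2$, this gives
\[
P(A_{x,x'}) \le \sum_{|y| < \Dsample} p_{W^2}(x-x',y).
\]

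To finish I would bound the right-hand side by a density estimate. The number of demes $y$ with $|y| < \Dsample$ is $O(\Dsample^2) = O(\tfrac{W^2}{\log W})$, since $\Dsample = \frac{W}{\sqrt{\log W}}$. At the rescaled time $W^2$ the walk is of order the mixing time of $\Torus$, so its transition density admits the uniform bound $p_{W^2}(u,y) = O(\tfrac{1}{W^2})$ for all $u,y$. Multiplying the two estimates gives $P(A_{x,x'}) = O(\tfrac{W^2}{\log W})\cdot O(\tfrac{1}{W^2}) = O(\tfrac{1}{\log W})$, and summing over the $O((nd)^2)$ pairs yields the claimed bound. Note that the argument does not require $|x-x'| \ge \Dsample$, so it applies uniformly to pairs within a common sampled deme as well.

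The hard part will be justifying the uniform density bound $p_{W^2}(u,y) = O(1/W^2)$ at the critical time $W^2$. The stated estimate (\ref{E:density_steady}) gives near-uniformity only for times $t_W W^2$ with $t_W \to \infty$, whereas here $t_W = 1$, so it does not apply verbatim; what is needed is a uniform heat-kernel upper bound at the mixing scale, which follows from the standard Gaussian bound $p_s(u,y) = O(1/s)$ on $\mathbb{Z}^2$ evaluated at $s = W^2$ together with the torus wrap-around. It is worth emphasizing that mixing is essential here: the pointwise bound (\ref{E:density_x}), $p_s(u,y) \le C/|u-y|^2$, degrades to $O(\log W)$ when summed over the ball $\{|y| < \Dsample\}$ near the diagonal, so only the spreading of the walk by time $W^2$ delivers the required factor $1/\log W$.
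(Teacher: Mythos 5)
Your proof takes essentially the same route as the paper's: a union bound over all pairs, reduction to the free difference walk at rate $2m$ run to rescaled time $2m\Td = W^2$, and the estimate $\sum_{|y| \le \Dsample} p_{W^2}(x,y) = O(\frac{1}{\log W})$ obtained by multiplying a density bound $p_{W^2}(\cdot,\cdot) = O(\frac{1}{W^2})$ by the $O(\frac{W^2}{\log W})$ demes in the ball. The only difference is bookkeeping: the paper justifies the density bound by citing (\ref{E:density_s}) (whose stated hypothesis $s < CW$ does not literally cover $s = W^2$, presumably a typo), so your remark that neither (\ref{E:density_x}) nor (\ref{E:density_steady}) suffices and that a uniform heat-kernel bound at the mixing scale is the real ingredient is a fair and slightly more careful account of the same step.
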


\begin{proof}
Given two random walkers $y_1, y_2$ starting at some arbitrary displacement $x$, by (\ref{E:density_s}) we have
\begin{equation}
P(|y_1(\Td) - y_2(\Td)| > \frac{W}{(\log W)^\frac{1}{2}}) \le \sum_{|y| \le \frac{W}{(\log W)^\frac{1}{2}}} p_{W^2}(x,y)
	= O(\frac{1}{\log W}).
\end{equation}
Considering all possible pairs gives the result.

\end{proof}

	For Lemmas \ref{L:SS_T_0}-\ref{L:SS_KC} we set $\Delta t = \epsilon (\frac{1}{2\pi m}) W^2 \log W$ and $\tilde{\Delta t} = 2m \Delta t$ where $\epsilon = \frac{1}{(\sqrt{\log W})}$.   For the sake of clarity we keep certain expressions in terms of $\epsilon$.  The results stated in Lemmas \ref{L:SS_T_0} and \ref{L:SS_smallt_0} can be found in \cite{Zahle_2005_Ann_Appl_Prob}.

\begin{lemma} \label{L:SS_T_0}
If $|x| > \frac{W}{(\log W)^\frac{1}{2}}$ then
\begin{equation}
P_x^{(2m)}(T_0 < \Delta t) = \epsilon(1 + O(e_1)),
\end{equation}
where
\begin{equation}
e_1 = \frac{\log \log W}{\sqrt{\log W}}.
\end{equation}
\end{lemma}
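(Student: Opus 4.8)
The plan is to reuse the last-exit decomposition already established in the proof of Lemma \ref{L:SS_IC}, but now run out to the longer time $\tilde{\Delta t} = 2m\,\Delta t = \frac{\epsilon W^2 \log W}{\pi}$, which (since $\epsilon = 1/\sqrt{\log W}$) is of order $W^2\sqrt{\log W}$, i.e. many diffusive time units. First I rescale time by $2m$ so that $P_x^{(2m)}(T_0 < \Delta t) = P_x(T_0 < \tilde{\Delta t})$ for a rate-$1$ walk, and then decompose over the last visit to the origin exactly as in (\ref{E:T_0_T_d_original}):
\begin{equation}
P_x^{(2m)}(T_0 < \Delta t) = \int_0^{\tilde{\Delta t}} ds\, p_s(x,0)\, P_{(0,1)}(T_0 > \tilde{\Delta t} - s).
\end{equation}
I then split the integral at $s_0 = W^2 \log\log W$, a level chosen so that $s_0/\tilde{\Delta t} = \pi e_1 = O(e_1)$ while $s_0/W^2 = \log\log W \to \infty$.

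On the upper piece $s \in [s_0, \tilde{\Delta t}]$ the walk has passed the diffusive scale, so (\ref{E:density_steady}) lets me replace $p_s(x,0)$ by $1/W^2$; after the change of variables $u = \tilde{\Delta t} - s$ and the tail estimate (\ref{E:T_0_combined}), namely $P_{(0,1)}(T_0 > u) = \frac{2\pi}{\log u}(1 + O(\epsilon))$, this piece reduces to $\frac{1}{W^2}\int_0^{\tilde{\Delta t}-s_0}\frac{2\pi}{\log u}\,du$. The logarithmic-integral asymptotic $\int^T \frac{du}{\log u} \sim T/\log T$, together with $\log\tilde{\Delta t} = 2\log W\,(1 + O(\tfrac{\log\log W}{\log W}))$, gives $\frac{2\pi}{W^2}\cdot\frac{\tilde{\Delta t}}{\log\tilde{\Delta t}}(1+O(e_1)) = \frac{2\pi}{W^2}\cdot\frac{\epsilon W^2\log W/\pi}{2\log W}(1+O(e_1)) = \epsilon(1+O(e_1))$. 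The factors $2\pi$ and the $\pi$ hidden in $\tilde{\Delta t}$ cancel cleanly, which is precisely why the stated leading constant is $\epsilon$ and not some other multiple.

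It remains to show the lower piece $s \in [0, s_0]$ contributes only $O(\epsilon e_1) = O(\tfrac{\log\log W}{\log W})$. Here $P_{(0,1)}(T_0 > \tilde{\Delta t} - s) = O(1/\log W)$ uniformly, since $\tilde{\Delta t} - s \ge \tilde{\Delta t}(1+O(e_1))$, so I only need $\int_0^{s_0} p_s(x,0)\,ds = O(\log\log W)$. For $s \le x^2$ the off-diagonal Gaussian bound (\ref{E:x_tech}) controls the integrand, and the substitution $u = x^2/s$ turns $\int_0^{x^2}\frac{\exp[-x^2/s]}{s}\,ds$ into $\int_1^\infty \frac{e^{-u}}{u}\,du = O(1)$; crucially the Gaussian factor suppresses the range $s \ll x^2$, so this contributes $O(1)$ rather than $O(\log W)$. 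For $x^2 \le s \le W^2$ the on-diagonal decay $p_s(x,0) = O(1/s)$ of (\ref{E:density_s}) gives $\int_{x^2}^{W^2}\frac{ds}{s} = \log(W^2/x^2) = O(\log\log W)$, using $x^2 > W^2/\log W$; and for $W^2 \le s \le s_0$ the equilibrated bound $p_s(x,0)=O(1/W^2)$ gives $\int_{W^2}^{s_0}\frac{ds}{W^2} = O(\log\log W)$. Thus the lower piece is $O(\tfrac{\log\log W}{\log W})$, and combining with the upper piece yields $\epsilon(1+O(e_1))$. Note that the error exponent $e_1$ is \emph{forced} by the range $[x^2,W^2]$: because $|x|>W/\sqrt{\log W}$ keeps $x^2$ within a factor $\log W$ of $W^2$, the occupation integral there is exactly of size $\log(W^2/x^2) \le \log\log W$, and this is what dictates the balanced choice of $s_0$.

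The hard part will be sharpening the equilibration step to the accuracy the lemma demands. Estimate (\ref{E:density_steady}) only asserts $W^2 p_s(x,0) \to 1$ with an unquantified $o(1)$, whereas in the upper piece this relative error multiplies the leading term $\epsilon$ and so must itself be $O(e_1)$. I would close this gap using the exponential relaxation of the torus walk (spectral gap of order $1/W^2$), which forces $|W^2 p_s(x,0) - 1| = O((\log W)^{-c})$ for $s \ge s_0$ with $c$ a fixed positive constant; this is $o(e_1)$ and harmless. Supplying that quantitative equilibration bound — available from the random-walk estimates of \cite{Zahle_2005_Ann_Appl_Prob} — and verifying that the off-diagonal bound behind (\ref{E:x_tech}) remains valid up to $|x|$ of order $W$, together with the careful bookkeeping of the $\log\log W$ corrections in $\log\tilde{\Delta t}$, is the delicate part; the remaining splitting and elementary estimates are routine.
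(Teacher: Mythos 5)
Your proposal follows essentially the same route as the paper's proof: the same last-exit decomposition already used in Lemma \ref{L:SS_IC}, a splitting of the time integral into a short-time range handled by (\ref{E:x_tech}) and (\ref{E:density_s}) and an equilibrated range handled by (\ref{E:density_steady}) and (\ref{E:T_0_combined}), with the same error bookkeeping yielding $\epsilon(1 + O(e_1))$; the only differences are cosmetic (you cut once at $W^2 \log\log W$ where the paper cuts at $\epsilon W^2 \log\log W$ and at $W^2$). Your closing remark that (\ref{E:density_steady}) as stated gives only an unquantified $o(1)$ equilibration error, which must be sharpened to $O(e_1)$ for the lemma's stated error term, is a fair observation about a point the paper itself glosses over.
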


\begin{proof}
In \cite{Cox_1989_Ann_Prob}, Cox showed that once two blocks are sufficiently far apart, the time it takes the pair to enter the same deme is exponentially distributed with mean $\frac{W^2 \log(W)}{2 \pi m}$.  Our approach will be to divide time into intervals of size $\Delta t = \epsilon \frac{W^2 \log(W)}{2 \pi m}$.  We will show that during a time interval $\Delta t$, the probability of two blocks entering the same deme is approximately  $\epsilon$.  

By the same argument as in Lemma \ref{L:SS_IC} we have
\begin{align}
& P_x^{(2m)}  (T_0 < \Delta t)  = \int_0^{\tilde{\Delta t}} ds p_s(x,0)P_{(0,1)}(T_0 > \tilde{\Delta t} -s)
\\ \notag
	& = \int_0^{\epsilon W^2 \log \log W} ds p_s(x,0)P_{(0,1)}(T_0 > \tilde{\Delta t} -s) 
		+ \int_{\epsilon W^2 \log \log W}^{\epsilon W^2 \sqrt{\log W}} ds p_s(x,0)P_{(0,1)}(T_0 > \tilde{\Delta t} -s)
\\ \notag
	& \ \ \ + \int_{\epsilon W^2 \sqrt{\log W}}^{\tilde{\Delta t}} ds p_s(x,0)P_{(0,1)}(T_0 > \tilde{\Delta t} -s)
\\ \notag
	& = I_1 + I_2 + I_3.
\end{align} 
We first show that $I_1$ has small contribution.  Using (\ref{E:T_0_combined}),  (\ref{E:x_tech}), and (\ref{E:density_s}) we arrive at,
\begin{align}  \label{E:final_I_1}
I_1 & = O(\frac{1}{\log W})(1  + \int_{x^2}^{\epsilon W^2 \log \log W} ds \frac{1}{s}) 
 =  O(\frac{\log \log W}{\log W}) = \epsilon O(\frac{\log \log W}{\sqrt{\log W}}).
\end{align}
Now consider $I_2$.   Using (\ref{E:T_0_combined}) and (\ref{E:density_s}) gives 
\begin{equation}  \label{E:final_I_2}
I_2 = \epsilon \left( \frac{\log \log W}{\sqrt{\log W}}\right).
\end{equation}
Now consider $I_3$.  By using (\ref{E:T_0_combined}) and (\ref{E:density_steady}) some manipulation of the integral gives
\begin{equation} \label{E:final_I_3}
I_3 = \epsilon \left(1 + O(\frac{\log \log W}{\log W}) \right).
\end{equation}
Putting (\ref{E:final_I_1}), (\ref{E:final_I_2}), and (\ref{E:final_I_3}) together gives the result.  We pause to note that if we consider $\Delta t - \frac{\epsilon W^2 \log \log W}{m}$ we would have arrived at the same asymptotic result.  That is,
\begin{equation}  \label{E:T_0_special}
P_x(T_0 < \Delta t - \frac{\epsilon W^2 \log \log W}{m}) = \epsilon(1 + O(e_1)).
\end{equation}

\end{proof}

\begin{lemma}  \label{L:SS_smallt_0}
\begin{equation}
P_{(0,0)}^{(2m)}(t_0 < \Delta t) = \frac{1}{1 + \alpha} + O(e_2),
\end{equation}
where
\begin{equation}
e_2 = \frac{\log \log W}{\log W}.
\end{equation}
\end{lemma}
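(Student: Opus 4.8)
The plan is to convert the continuous-time coalescence problem into a discrete competition evaluated at the successive visits of the rate-$2m$ walk to the origin. Starting from $(0,0)$, each sojourn at the origin lasts an exponentially distributed time of rate $2m$, during which a coalescent event fires at rate $1/N$; on a single visit the walk therefore coalesces with probability $p = \frac{1/N}{1/N + 2m} = \frac{1}{1 + 2mN}$ and otherwise steps to a neighbor. Conditioning on the walk trajectory, a coalescent event fails to occur by time $\Delta t$ precisely when it fails at each visit, so for the embedded skeleton $P_{(0,0)}^{(2m)}(t_0 > \Delta t) = E[(1-p)^{V}]$, where $V$ is the number of visits the walk makes to the origin during $[0,\Delta t]$ (the initial position counting as the first visit). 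The problem reduces to identifying the law of $V$ and evaluating this expectation.

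The second step is to show that $V$ is, up to negligible corrections, geometric. Each time the walk leaves the origin it either returns within the available time or wanders into the far region $\{|x| > W/(\log W)^{1/2}\}$ and fails to return before $\Delta t$; write $u = P_{(0,1)}(T_0 > \tilde{\Delta t})$ for the per-excursion escape probability, with $\tilde{\Delta t} = 2m\Delta t$. By Lemma \ref{L:T_0_combined} at $t = \tilde{\Delta t}$ we have $u = \frac{2\pi}{\log \tilde{\Delta t}}(1 + O(\epsilon))$, and since $\log \tilde{\Delta t} = 2\log W (1 + O(e_2))$ this gives $u = \frac{\pi}{\log W}(1 + o(1))$. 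The point that makes $V$ genuinely geometric is that an excursion which does reach the far region but returns before $\Delta t$ still leaves essentially the entire budget available afterwards: the coalescence probability, viewed as a function of the remaining time, saturates already on the near-origin scale $W^2/(2m\log W) = o(\Delta t)$, so such a rare excursion merely contributes one more visit rather than truncating the count. Thus $V$ is geometric with escape parameter $u$ to the order needed.

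Granting this, $P_{(0,0)}^{(2m)}(t_0 > \Delta t) = E[(1-p)^{V}] = \frac{u(1-p)}{u + p - up}$, whence $P_{(0,0)}^{(2m)}(t_0 < \Delta t) = \frac{p}{u + p - up}$, which to leading order is $\frac{p}{u+p} = (1 + u/p)^{-1}$. The arithmetic identity $\frac{p}{u} = \frac{1}{1+2mN}\cdot\frac{\log W}{\pi}(1 + o(1)) = \frac{1}{\alpha}\left(1 + O(\tfrac{1}{\log W})\right)$, which uses $\alpha = \frac{2\pi Nm}{\log W}$ together with $1 + 2mN = 2mN(1 + O(\tfrac{1}{\log W}))$, then produces the claimed main term $\frac{1}{1+\alpha}$.

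The hard part is pushing the error down to the stated $O(e_2)$. A direct appeal to Lemma \ref{L:T_0_combined} controls $u$ only to relative error $O(\epsilon)$, which would leave an $O(\epsilon)$ error in the final answer; since $e_2 = o(\epsilon)$, this is too crude. To obtain the escape probability with relative error $O(e_2)$ I would not quote Lemma \ref{L:T_0_combined} off the shelf but instead redo the last-visit decomposition of Lemma \ref{L:SS_T_0}, writing the relevant hitting probability as $\int_0^{\tilde{\Delta t}} p_s(x,0)\, P_{(0,1)}(T_0 > \tilde{\Delta t} - s)\, ds$ and splitting it into short-, intermediate-, and long-time ranges controlled by the sharp kernel bounds (\ref{E:density_x}), (\ref{E:x_tech}), (\ref{E:density_steady}), and (\ref{E:density_s}). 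As there, the dominant long-time range is governed by the equilibrated kernel $p_s \approx W^{-2}$ and yields the escape probability with the improved relative error $O(e_2)$, while the short and intermediate ranges are shown to be of strictly lower order. Combining this sharpened $u$ with the $O(1/\log W)$ arithmetic error in $p/u$ and the $o(e_2)$ contribution of far-but-returning excursions then closes the estimate; verifying that these refined bounds indeed combine to $O(e_2)$, rather than merely $O(\epsilon)$, is the crux of the argument.
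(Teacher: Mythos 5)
Your reduction and main-term computation are essentially the paper's own argument: the paper runs the identical competition between per-visit coalescence (probability $p = \frac{1}{1+2Nm}$) and per-excursion escape, organized through a classification of excursions into Types I, II and III, and both routes produce the value $\frac{1}{1+\alpha}$ in the same way. You are also correct that a direct appeal to Lemma \ref{L:T_0_combined} controls the escape probability only to relative error $O(\epsilon)$, and that $e_2 = o(\epsilon)$, so something more is needed to reach the stated error bound.

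The gap is in your final step, and it cannot be repaired: the $O(\epsilon)$ term in $u = P_{(0,1)}(T_0 > \tilde{\Delta t})$ is not slack in the kernel bounds but a genuine torus effect, so no refinement of the last-visit decomposition will produce $u = \frac{2\pi}{\log \tilde{\Delta t}}\bigl(1 + O(e_2)\bigr)$. Concretely, by Cox's exponential law --- the same fact that underlies Lemma \ref{L:SS_T_0}, namely that a walker at distance at least $W/\sqrt{\log W}$ from the origin hits it within time $\Delta t$ with probability $\epsilon(1+o(1))$, not $o(\epsilon)$ --- an excursion that escapes to quasi-equilibrium still returns to the origin within the window with probability $\epsilon(1+o(1))$. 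Hence $u = \frac{2\pi}{\log \tilde{\Delta t}}\bigl(1 - \epsilon(1+o(1)) + O(e_2)\bigr)$, and your own formula $\frac{p}{u+p-up}$ then gives $\frac{1}{1+\alpha(1-\epsilon)} + O(e_2) = \frac{1}{1+\alpha} + \frac{\alpha \epsilon}{(1+\alpha)^2} + O(e_2)$. In other words the probability in question genuinely deviates from $\frac{1}{1+\alpha}$ by $\Theta(\epsilon)$, so the target $O(e_2)$ is not merely hard to reach by your method --- it is unattainable. (Your geometric-visit-count reduction itself is sound to the required order: the budget consumed by far-but-returning excursions is a second-order effect, $O(\epsilon^2 \log\log W) = O(e_2)$; the obstruction sits entirely in the value of $u$.) It is worth knowing that the paper's own proof has the same hole: its displayed estimates of $P(\text{Type II})$ and $P(\text{Type III})$ silently drop the $(1+O(\epsilon))$ factors from Lemma \ref{L:T_0_combined}, so the paper also establishes only $\frac{1}{1+\alpha} + O(\epsilon)$. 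This is harmless downstream, since the lemma is used only inside Lemma \ref{L:SS_KC}, where an error of $O(e_1) = O(\epsilon \log\log W)$ suffices and $\epsilon \le e_1$; your argument, stopped at the honest error $O(\epsilon)$, is exactly as strong as the paper's.
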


\begin{proof}
	  Recall, to compute $P_{(0,0)}^{(2m)}(t_0 < \Delta t)$ we consider a random walker moving at rate $2m$, with the stipulation that when the random walker is at $(0,0)$ there is a coalescent event at rate $\frac{1}{N}$.  So we may characterize the behavior of the random walker through the random variables $H, t_1, t_2, \dots, t_E, u_0, u_1, \dots, u_{E+1}$ where $H$ is the number of excursions taken by the random walker away from zero before a coalescent event occurs.  $t_1, \dots, t_H$ are the time spans of these excursions and $u_0, \dots, u_{E+1}$ are the time spans spent at the origin between excursions.  $H$ is geometric with success probability $\frac{1}{1 + 2Nm}$.  Each $u_0, u_1, \dots, u_E$ is an exponential random variables with mean of order $N$.    
	
	We first consider the distributions of the $t_i$, clearly the $t_i$ are i.i.d.  We distinguish between three types of excursions.  Set $K = \log W$ and define
\begin{flushleft}
Type I : $t_i \in [0, \frac{\Delta t}{K Nm}]$.\\
Type II : $t_i \in (\frac{\Delta t}{K Nm}, \Delta t]$.\\
Type III : $t_i > \Delta t$.\\
\end{flushleft}
By (\ref{E:T_0_combined}) we have
\begin{gather}
P(\text{Type I}) = 1 - O(\frac{1}{\log \tilde{\Delta t} - \log(K Nm)}  )   \\ \notag
P(\text{Type II}) = O(\frac{\log KNm}{(\log \tilde{\Delta t})^2}) \\ \notag
P(\text{Type III}) \to 1 - \frac{2\pi}{\log(\tilde{\Delta t})}.
\end{gather}
In the following we ignore the time contributions of the $u_i$.  Including the $u_i$ does not change the argument much, the order of the error terms stay the same, and so we drop the $u_i$ for the sake of clarity.
We first show that the probability of experiencing a Type II excursion before the coalescent event is small.  The probability of a coalescent during any given visit to the origin is $\frac{1}{1 + 2Nm} = O(\frac{1}{\log W})$.  The probability of a Type II excursion is $\frac{2Nm}{1 + 2Nm} P(\text{Type II}) = O(\frac{\log \log W}{(\log W)^2})$.  Then taking the appropriate ratio gives,  
\begin{equation}
P(\text{type II excursion before coal.}) = O(\frac{\log \log W}{\log W})
\end{equation}

	We now show that if no type II or III excursions occur then we coalesce with very high probability.  Indeed, if no Type II or III excursions occurs then we will coalesce before time $\Delta t$ if we coalesce before there are $K Nm$ Type I excursions.  The probability of not coalescing for $K Nm$ Type I excursions is 
\begin{equation}
\left( \frac{2Nm}{1 + 2Nm} P(\text{Type I}) \right)^{K Nm} = 
\left( (1 - \frac{1}{2Nm}) (1 - O(\frac{1}{\log W}) \right)^{O(\log W (Nm))} = O(\frac{1}{ W}).
\end{equation}
So up to errors of order $\frac{\log \log W}{\log W}$ we can reduce the computation of $P_{(0,0)}(t_0 < \Delta t)$ to the probability that a coalescent event occurs before a Type III excursion.  Computing the relevant ratio then gives,
\begin{equation}
P(\text{coal. before Type III}) = \frac{1}{1 + \alpha} + O(\frac{\log \log W}{\log W}).
\end{equation}
Putting all this together gives the result.  Finally we note that this result would hold if we replaced $\Delta t$ by $\frac{\epsilon W^2 \log \log W}{m}$.  That is,
\begin{equation}  \label{E:t_0_below}
P_{(0,0)}^{(2m)}(t_0 < \frac{\epsilon W^2 \log \log W}{m}) = \frac{1}{1 + \alpha} + O(e_2).
\end{equation}

\end{proof}

\begin{lemma}  [KC Condition] \label{L:SS_KC}
\begin{equation}  \label{E:N_j_SS}
\sum_{k=2}^{|\Pi_\SS(\Td)|} \sum_{j=1}^k \bigg|E[N_j(k \to k-1)] - \frac{\theta}{\binom{k}{2}} \frac{1 + \alpha}{\alpha}\bigg| \to 0
\end{equation}
For $j, j'=1,2,\dots,k$ with $j \ne j'$
\begin{equation}  \label{E:U_SS}
\sum_{k=2}^{|\Pi_\SS(\Td)|} |1 - k(k-1) \inf_{j,j'=1,\dots,k; j\ne j'} P(U_1(k) = j, U_2(k) = j')| \to 0
\end{equation}
\end{lemma}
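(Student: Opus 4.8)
The plan is to mirror the island-model argument of Lemma~\ref{L:IM_KCC}, feeding the two pairwise estimates of Lemmas~\ref{L:SS_T_0} and~\ref{L:SS_smallt_0} into the general criterion of Lemma~\ref{L:Fst_limit}. Throughout I would condition on the event that at level $k$ the $k$ blocks lie in $\Gamma(k)$, i.e.\ are pairwise separated by at least $\SR = \frac{W}{(\log W)^{1/2}}$; Lemma~\ref{L:dispersed_at_Td} supplies this at time $\Td$, and I will argue it persists, up to negligible probability, from one coalescence to the next. The central reduction is to show that, to leading order, the $\binom{k}{2}$ pairs behave as independent meeting-and-coalescence processes, so the $k$-block coalescent is a superposition of $\binom{k}{2}$ i.i.d.\ clocks, each carrying the pairwise rate extracted from Lemmas~\ref{L:SS_T_0} and~\ref{L:SS_smallt_0}.

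For (\ref{E:N_j_SS}) I would discretize time into intervals of length $\Delta t = \epsilon \frac{W^2 \log W}{2\pi m}$ with $\epsilon = (\log W)^{-1/2}$. By Lemma~\ref{L:SS_T_0} a fixed far-apart pair enters a common deme in one interval with probability $\epsilon(1 + O(e_1))$, independently of its displacement, so its meeting rate is $\frac{2\pi m}{W^2 \log W}(1+O(e_1))$; by Lemma~\ref{L:SS_smallt_0} a meeting is converted to a coalescence with probability $\frac{1}{1+\alpha} + O(e_2)$. Multiplying, each far-apart pair coalesces at rate $\frac{2\pi m}{W^2\log W}\frac{1}{1+\alpha}(1 + O(e_1 + e_2))$, and superposing the $\binom{k}{2}$ asymptotically independent clocks gives $E[T_{k-1}-T_k] = \frac{N W^2}{\binom{k}{2}}\frac{1+\alpha}{\alpha}(1 + O(e_1 + e_2))$, where I used $\alpha = \frac{2\pi N m}{\log W}$ to eliminate $m$. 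Since $E[N_j(k\to k-1)] = \mu\, E[T_{k-1}-T_k]$ and $\theta = \mu N W^2$, this is exactly $\frac{\theta}{\binom{k}{2}}\frac{1+\alpha}{\alpha}(1+O(e_1+e_2))$. The auxiliary shifts (\ref{E:T_0_special}) and (\ref{E:t_0_below}), which displace the meeting and coalescence windows by $\frac{\epsilon W^2\log\log W}{m}$, are what make the meeting clock and the coalescence clock line up without altering the leading order.

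Equation (\ref{E:U_SS}) then follows from the displacement-independence in Lemma~\ref{L:SS_T_0}: on $\Gamma(k)$ all $\binom{k}{2}$ pairs meet at the same rate to leading order and coalesce with the common probability $\frac{1}{1+\alpha}$, so each ordered pair is next to coalesce with probability $\frac{1}{k(k-1)}(1 + O(e_1 + e_2))$, whence $|1 - k(k-1)\inf_{j\ne j'}P(U_1(k)=j,U_2(k)=j')| = O(e_1 + e_2)$. For both sums it remains to control error accumulation over levels. Summing the per-term error $O(e_1 + e_2)$, with $e_1 = \frac{\log\log W}{\sqrt{\log W}}$, over $j \le k$ and $2 \le k \le |\Pi_\SS(\Td)| \le nd$ yields a bound of order $\theta\, \log(nd)\,(e_1 + e_2)$; the LPLS hypothesis $\frac{(nd)^2\log\log W}{\sqrt{\log W}} \to 0$ forces $\log(nd) = O(\log\log W)$, so this is $O\!\left(\frac{(\log\log W)^2}{\sqrt{\log W}}\right) \to 0$. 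The event that separation fails at some level is absorbed by Lemma~\ref{L:dispersed_at_Td} together with the mixing estimate (\ref{E:density_steady}), contributing an additional $O\!\left(\frac{(nd)^2}{\log W}\right)$ that likewise vanishes.

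I expect the main obstacle to be the independence reduction itself: justifying that the $\binom{k}{2}$ pairwise clocks may be treated as independent, and that the far-apart property is restored after each coalescence. The delicate points are (i) bounding the probability that three blocks sit simultaneously within $\SR$ of one another, which would couple two pairs and break the superposition, and (ii) showing that after a coalescence brings the merged block possibly close to a third block, the diffusive mixing of (\ref{E:density_steady}) re-randomizes the configuration to a well-separated state on a time scale short compared with $E[T_{k-1}-T_k]$, so that the conditioning on $\Gamma(k)$ is self-consistent across all levels. This is precisely the part inherited from the Cox--Z\"ahle theory, and it hinges on the per-pair estimates holding \emph{uniformly} in the varying displacements, which is why the displacement-independence built into Lemma~\ref{L:SS_T_0} is essential.
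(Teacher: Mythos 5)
Your proposal reproduces the architecture of the paper's own proof: the same time discretization $\Delta t = \epsilon \frac{W^2 \log W}{2\pi m}$, the same combination of Lemma \ref{L:SS_T_0} (meeting probability $\epsilon(1+O(e_1))$ per interval) with Lemma \ref{L:SS_smallt_0} (coalescence given meeting with probability $\frac{1}{1+\alpha}+O(e_2)$), the same use of the shifted estimates (\ref{E:T_0_special}) and (\ref{E:t_0_below}) to make the two clocks line up, and the same symmetry argument for (\ref{E:U_SS}). However, the step you explicitly defer --- your ``main obstacle,'' namely that the $\binom{k}{2}$ pairwise clocks superpose independently and that the $\Gamma(k)$ separation is self-consistent from one coalescence to the next --- is not a technicality to be inherited from the Cox--Z\"ahle theory; it is the actual content of the paper's proof of this lemma, so leaving it as a flagged expectation is a genuine gap. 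The paper closes it with a concrete coupling argument: couple the coalescing pair $(y_1,y_2)$ to independent non-coalescing walkers $(\bar y_1,\bar y_2)$ that move identically until a coalescence occurs, at which point they decouple. The bad event --- two blocks within $\SR$ of each other at the end of an interval without having coalesced --- contains no decoupled path, so its probability is bounded by $P(|\bar y_1(\Delta t)-\bar y_2(\Delta t)| \le \SR)$, which by the mixing estimate (\ref{E:density_steady}) is $O(\frac{1}{\log W})$; the probability of two coalescences in one interval is $O(\epsilon^2)$ by iterating the argument. Summing over pairs gives $P(\Pi(\Delta t)\notin \Gamma(k-1)\cup\Gamma(k)) = \epsilon O(\frac{k^4}{\sqrt{\log W}})$, i.e.\ (\ref{E:first_e}), and it is this estimate that legitimizes both the per-level conditioning and the superposition of pairwise rates.

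A secondary but concrete consequence of skipping that step is that your error bookkeeping is wrong: you take the relative error in $E[N_j(k\to k-1)]$ to be $O(e_1+e_2)$ \emph{uniformly in} $k$, which yields your total of $O(\theta \log(nd)(e_1+e_2))$. Once the multi-block interactions are accounted for, the bad events carry polynomial factors in $k$ (pairs of pairs), and the relative error per level is $O(k^2 e)$ with $e = \frac{\log\log W}{\sqrt{\log W}}$. The absolute error per term is then $O(\theta e)$ rather than $O(\frac{\theta e}{k^2})$, and the double sum in (\ref{E:N_j_SS}) is $O(\theta (nd)^2 e)$, not logarithmic in $nd$. The proof still closes --- indeed this is exactly why the LPLS hypothesis is stated as $\frac{(nd)^2 \log\log W}{\sqrt{\log W}} \to 0$ and not something weaker --- but your claimed bound would not survive the correct accounting, and the presence of the full $(nd)^2$ factor in the hypothesis is the tell that the $k$-dependence of the interaction errors cannot be ignored.
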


\begin{proof}
We would like to combine Lemmas \ref{L:SS_T_0} and \ref{L:SS_smallt_0} to show that for $|x| > \frac{W}{(\log W)^\frac{1}{2}}$,
\begin{equation}  \label{E:t_0_final}
P_x^{(2m)}(t_0 < \Delta t) = \frac{\epsilon}{1 + \alpha} + \epsilon O(e_1).
\end{equation}
  By using Lemmas \ref{L:SS_T_0} and \ref{L:SS_smallt_0} we have
\begin{align}
P_x(t_0 < \Delta t)  \le P_x(T_0 < \Delta t) P_{(0,0)}(t_0 < \Delta t)
	= \frac{\epsilon}{1 + \alpha} + \epsilon  O(e_1)
\end{align}
A lower bound is provided by using (\ref{E:T_0_special}) and (\ref{E:t_0_below}):
\begin{align}
P_x(t_0 < \Delta t) & \ge P_x(T_0 < \Delta t - \epsilon W^2 \log \log W) P_{(0,0)}(t_0 < \epsilon W^2 \log \log W)
\\ \notag
			& = \frac{\epsilon}{1 + \alpha} + \epsilon O(e_1).
\end{align}
This proves (\ref{E:t_0_final}).  Up to this point we have limited ourselves to interactions of two blocks.  Now, however, we consider $\Pi_\SS(0) \in \Gamma(k)$.  First we compute $P(\Pi_\SS(\Delta t) \notin \Gamma(k-1) \cup \Gamma(k))$.  There are two ways for this two occur.  Either two blocks out of the $k$ are within $\frac{W}{\log(W)^\frac{1}{2}}$ but have not coalesced at time $\Delta t$ or two coalescent events have occurred.   We consider the first case.  Let $y_1$ and $y_2$ be random walkers moving at rate $m$ that start $x$ units apart.  Assume that if $y_1$ and $y_2$ enter the same deme then they immediately coalesce.  Let $\bar{y}_1, \bar{y}_2$ be independent random walkers that do not coalesce.  Let $\Theta$ be the event in which $y_1$ and $y_2$ do not coalesce but are within $\frac{W}{\log(W)^\frac{1}{2}}$ units of each other at time $\Delta t$.  We have the following bound,
\begin{equation}  \label{E:coupling_argument}
P(\Theta_1) \le P(|\bar{y}_1(\Delta t) - \bar{y}_2(\Delta t)| \le \frac{W}{\log(W)^\frac{1}{2}})
\end{equation}
To prove this inequality we use a coupling argument.  Couple $\bar{y}_1$ to $y_1$ and $\bar{y}_2$ to $y_2$.  By this we mean that the pairs move together.  However, if $y_1$ and $y_2$ coalesce then we decouple the two pairs and $\bar{y}_1$ and $\bar{y}_2$ begin to move independently of $y_1$ and $y_2$.  No path in $\Theta$ will experience a decoupling, so the inequality follows.  We now bound the right side of (\ref{E:coupling_argument}).
\begin{align}
P(|\bar{y}_1(\Delta t) - \bar{y}_2(\Delta t)| & \le \frac{W}{\log(W)^\frac{1}{2}}) 
\le \sum_{|z| \le \frac{W}{\log(W)^\frac{1}{2}}} p_{\Delta t}^{(2m)}(x,z)
\\ \notag
	& \le (\frac{W}{\log(W)^\frac{1}{2}})^2 \frac{C}{W^2} = O(\frac{1}{\log W}).
\end{align}
where we have used (\ref{E:density_steady}) to achieve the inequality directly above. 

	Now we consider the possibility of two coalescent events during time $\Delta t$.  By the same methods as just described, we can show that if a single coalescent event occurs at some point in time $\Delta t$, then with high probability all blocks will still be more than $\frac{W}{\sqrt{\log(W)}}$ units apart.  Then we repeat the argument and are able to show that the probability of two coalescent events is of order $O(\epsilon^2)$.  So finally we have after allowing for all possible pair combinations, 
\begin{equation} \label{E:first_e}
P(\Pi(\Delta t) \notin \Gamma(k-1) \cup \Gamma(k)) = O(\frac{k^2}{(\log W)} + k^4 \epsilon^2)
	= \epsilon O(\frac{k^4}{\sqrt{\log W}}).
\end{equation}

	From (\ref{E:t_0_final}) the probability of a coalescent event between any two blocks is $\frac{\epsilon}{1 + \alpha} + \epsilon O(e_1)$, giving
\begin{gather}  \label{E:full_asymp}
P(\Pi(\Delta t) \in \Gamma(k-1)) = \binom{k}{2} \frac{\epsilon}{1 + \alpha} + \epsilon O(k^4 e), \\ \notag
P(\Pi(\Delta t) \in \Gamma(k)) = 1 - \binom{k}{2} \frac{\epsilon}{1 + \alpha} + \epsilon O(k^4 e), \notag
\end{gather}
where
\begin{equation}
e = \frac{\log \log W}{\sqrt{\log W}}.
\end{equation}
	
	If we consider coalescent events after time $\Td$ we have,
\begin{equation}
E[T_{k-1} - T_k] = (\Delta t) \frac{1}{\binom{k}{2} \frac{\epsilon}{1 + \alpha} + O(\epsilon k^2 e)} + O(\Delta t)
\end{equation}
	This then gives,
\begin{equation}
E[N_j(k \to k-1)] = \frac{\theta}{\binom{k}{2}} (\frac{1 + \alpha}{\alpha})(1 + O(k^2 e)).
\end{equation}
	Summing over $j=1,2, \dots,k$ and then summing over $k=2,3, \dots, \Pi_\SS(\Td)$ gives
\begin{equation}  \label{E:mean_sum}
\sum_{k=2}^{|\Pi_\SS(\Td)|} \sum_{j=1}^k \left|
		E[N_j(k \to k-1)] - \frac{\theta}{\binom{k}{2}} (\frac{1 + \alpha}{\alpha}) \right|
		\le O(|\Pi_\SS(\Td)|^2 e) \le O((nd)^2 e) \to 0.
\end{equation}
Using (\ref{E:first_e}) and (\ref{E:full_asymp}) and the same argument as in Lemma \ref{L:IM_KCC} gives (\ref{E:U_SS}).

\end{proof}

	Lemmas \ref{L:SS_IC}-\ref{L:SS_KC} prove (\ref{E:SS_G_convergence}) in Theorem \ref{T:SS_convergence}.  Finally we characterize the distribution of $\Pi_{\SS,k}(\Td)$.  The result stated in Lemma \ref{L:Dist_SS} is very similar to Theorem 3 in \cite{Zahle_2005_Ann_Appl_Prob}, and our proof follows the methods introduced in Lemma \ref{L:SS_KC}, so we simply sketch the proof.
	
\begin{lemma} \label{L:Dist_SS}
\begin{equation}
\Pi_{\SS,k}(\Td) \to \Pi_\KC^{(\infty)}(\log(\frac{1 + \alpha}{\alpha})).
\end{equation}
\end{lemma}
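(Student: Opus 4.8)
The plan is to follow deme $k$ in isolation. By the Independence Condition (Lemma \ref{L:SS_IC}), with probability tending to one no lineage ancestral to a deme-$k$ individual coalesces with a lineage from another deme before $\Td$, so the partition $\Pi_{\SS,k}(\Td)$ is produced entirely by coalescences among the $n$ lineages that begin in deme $k$. These lineages all start at a common site and evolve as rate-$m$ coalescing random walkers on $\Torus$, merging at rate $\frac{1}{N}$ whenever two of them share a deme. By Lemma \ref{L:dispersed_at_Td} the surviving blocks are pairwise separated by at least $\frac{W}{\sqrt{\log W}}$ at $\Td$, so the scattering is essentially complete by that time and $\Pi_{\SS,k}(\Td)$ records the finished within-deme partition. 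I want to show that this partition is, in the limit, that of a Kingman coalescent started from infinitely many lineages and run for the fixed time $\log(\frac{1+\alpha}{\alpha})$.

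The calibration of this time comes from the pairwise estimate already in hand. Lemma \ref{L:SS_smallt_0} gives that two lineages beginning in a common deme coalesce before they disperse with probability $\frac{1}{1+\alpha}+O(e_2)$, so a fixed pair ends in two distinct blocks with probability $\frac{\alpha}{1+\alpha}+O(e_2)$. In a Kingman coalescent run for time $s$ a fixed pair stays uncoalesced with probability $e^{-s}$, and because the ancestral block of each of the two individuals moves as a random walk until those two blocks merge, intermediate coalescences with other lineages do not disturb this marginal. Matching $e^{-s}=\frac{\alpha}{1+\alpha}$ forces $s=\log(\frac{1+\alpha}{\alpha})$, which is exactly the time in the statement and, via (\ref{E:SPL_SS}), reproduces $\SPL(k)=f_k(\log(\frac{1+\alpha}{\alpha}))$.

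To promote this pairwise identity to the full partition I would reuse the window machinery of Lemma \ref{L:SS_KC}. Cutting $[0,\Td]$ into windows of length $\Delta t$ and decomposing each block's motion into excursions as in Lemma \ref{L:SS_smallt_0}, one shows that within any window the surviving blocks undergo at most one pairwise merger up to an error of order $O(k^4\epsilon^2)$, exactly as in (\ref{E:first_e}); simultaneous multiple mergers therefore vanish in the limit. Together with the exchangeability of the deme-$k$ lineages this is the defining property of the Kingman coalescent, and the pairwise computation of the previous paragraph pins the elapsed Kingman time at $\log(\frac{1+\alpha}{\alpha})$. Finally, since the Kingman coalescent comes down from infinity, letting $n\to\infty$ in the LPLS limit replaces the initial condition of $n$ lineages by the entrance law from infinity, giving $\Pi_{\SS,k}(\Td)\to\Pi_\KC^{(\infty)}(\log(\frac{1+\alpha}{\alpha}))$.

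The step I expect to be the main obstacle is precisely this promotion from the two-lineage law to the law of the whole partition, uniformly in the number of surviving blocks $k$, which may be as large as $nd$. When several lineages are present they sit at a range of mutual separations and are only partially spread across $\Torus$, so the surviving blocks are not independent and one must rule out, with total error $o(1)$ after summing over all $k$, both the coincidence of two nearby non-coalesced blocks and the occurrence of two mergers in a single window. This is the content of Theorem 3 of Zahle et al. \cite{Zahle_2005_Ann_Appl_Prob}; our only task is to transcribe their excursion estimates into the window framework already built for Lemma \ref{L:SS_KC}, which is why we give the argument in sketch form.
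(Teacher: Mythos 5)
There is a genuine gap, and it sits exactly at the step you flag as the main obstacle; the argument you sketch does not close it. The logical core of your proposal — that exchangeability, binary-only mergers, and the correct pairwise marginal together identify the limit as a Kingman partition — is false. Already for three lineages an exchangeable partition law has two free parameters (the probabilities of $\{1\}\{2\}\{3\}$ and of a fixed pair-partition $\{12\}\{3\}$), while the pairwise marginal pins down only one linear combination of them; and ``no simultaneous mergers'' is a property of the merging process, not of the terminal partition, since any exchangeable partition can be produced by some sequence of binary mergers. What must actually be proved is a Markov structure across time windows: conditionally on the configuration at the start of a window, each surviving pair merges with a specified window-dependent probability, different pairs decouple up to negligible error, and these per-window probabilities compound. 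That scale-by-scale argument is the content of the paper's proof, and it is precisely the piece your proposal replaces by the marginal calibration.

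Moreover, the machinery you propose to reuse cannot supply that structure. The windows of Lemma \ref{L:SS_KC} have length $\Delta t = \epsilon \frac{1}{2\pi m}W^2 \log W = \frac{W^2 \sqrt{\log W}}{2\pi m}$, which \emph{exceeds} $\Td = \frac{W^2}{2m}$, so $[0,\Td]$ does not contain even one such window; and the inputs to that lemma (Lemmas \ref{L:SS_T_0} and \ref{L:SS_smallt_0}) require inter-block separations of order $\frac{W}{\sqrt{\log W}}$, which fail during most of the scattering phase, when the deme-$k$ lineages are at small and intermediate separations. The correct decomposition is multiplicative rather than additive, because for lineages started in one deme the coalescence hazard is uniform in $\log t$, not in $t$: the paper cuts $[0,\Td]$ at the points $t_k = W^{2k\rho}$, shows the surviving blocks at time $t_k$ are separated by at least $\frac{W^{k\rho}}{\sqrt{\log W}}$ (the sets $H_\rho^{(k)}(j)$), and proves that during $[t_k,t_{k+1}]$ each pair merges with the \emph{scale-dependent} probability approximately $\frac{1}{k}(1+\frac{\alpha}{k\rho})^{-1} = \frac{\rho}{k\rho+\alpha}$, with at most one merger per window. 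Compounding over scales gives, for example, $P(\text{no coal.}) \to \exp[-\binom{n}{2}\int_0^1 \frac{d\beta}{\beta+\alpha}] = \exp[-\binom{n}{2}\log\frac{1+\alpha}{\alpha}]$, and likewise for all transition probabilities. This is where the constant $\log\frac{1+\alpha}{\alpha}$ really comes from — Kingman time accumulated across spatial scales — whereas the single pairwise estimate of Lemma \ref{L:SS_smallt_0} that you use for calibration only provides a consistency check on the answer. Finally, outsourcing the uniformity to Theorem 3 of \cite{Zahle_2005_Ann_Appl_Prob} does not repair this: as the paper notes, their regime has fixed sample sizes as $N,D \to \infty$, while here $n \to \infty$ with all $n$ lineages starting in one deme, so uniformity in the number of lineages is exactly what a transcription of their estimates would still have to establish.
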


\begin{proof}
We partition the interval $[0,\Td]$ by the points $t_k$ such that $t_k = W^{2k\rho}$ where $0 < \rho < 1$.  We  eventually select $\rho$ to optimize our error terms. We will asymptotically compute the probability of a pair coalescing in the interval $[t_k, t_{k+1}]$.  Further, we will show that at the end of this time interval, the blocks are always separated by a significant distance.  The first time interval, $[0, t_1]$, is special as we start with $n$ blocks all in the same deme.

	To make all this precise, we introduce the following notation.  If $\pi \in \PL$ then $\pi \in H_\rho^{(k)}(j)$ if $|\pi| = j$ and every pair of blocks in $\pi$ is separated by a distance of a least $\frac{W^{k\rho}}{\sqrt{\log W}}$.  Now suppose that for some $k$ such that $1 \le k \le \frac{1}{\rho}$ we have $\Pi(t_k) \in H_\rho^{(k)}(j)$.  Then by the same techniques used in Lemmas \ref{L:SS_T_0} and \ref{L:SS_smallt_0} we can show that the probability of two blocks entering the same deme is approximately $\frac{1}{k}$, and once two blocks are in the same deme, the probability of coalescing is approximately $(1 + \frac{\alpha}{k \rho})^{-1}$.  
	
	Since $|\Pi(t_k)|=j$, we have the following results
\begin{gather}
P(\Pi(t_{k+1}) \in H_\rho^{(k+1)}(j)) \to 1 - \binom{j}{2}(\frac{1}{k})(\frac{1}{1 + \frac{\alpha}{k \rho}}), \\ \notag
P(\Pi(t_{k+1}) \in H_\rho^{(k+1)}(j-1)) \to \binom{j}{2}(\frac{1}{k})(\frac{1}{1 + \frac{\alpha}{k \rho}});
\end{gather}
	For the interval $[0, t_1]$ things are a bit different as we start with  $n$ blocks that all occupy the same deme.  But in this case we can show the following
\begin{equation}
P(\Pi(t_1) \in H_\rho^{(1)}(n)) \to 1.
\end{equation}

	From the above computations, we note that up to vanishing error terms, each pair of blocks in $\Pi(t_k)$ is equally likely to coalesce in $[t_k, t_{k+1}]$.  Now we can compute the probability of no coalescent event occurring up to time $\Td$.
\begin{align}
P(\text{no coal. by }t) & = \prod_{k=1}^\frac{1}{\rho} \left( 1 - \binom{n}{2}(\frac{1}{k})(\frac{1}{1 + \frac{\alpha}{k \rho}}) \right)
\\ \notag
	& \approx \exp[- \sum_{k=1}^\frac{1}{\rho} \binom{n}{2} \frac{\rho}{(k)\rho +  \alpha}]
\\ \notag
	& \to \exp[- \binom{n}{2} \int_0^1 dt \frac{1}{t + \alpha}] = \exp[-\binom{n}{2} \log(\frac{1 + \alpha}{\alpha})].
\end{align}
This computation can be easily generalized to the probability of a coalescent event between any two time points in $[0,\Td]$.  The probabilities are recognized as precisely those of the coalescent probabilities of the Kingman coalescent run to time $\log \frac{1+\alpha}{\alpha}$.  The result then follows.

\end{proof}

	Finally we note that using Lemma \ref{L:Dist_SS} and standard Kingman coalescent results \cite{Durrett_Book_Probability_Models} we can show that $\frac{E[B_1^2]}{d} \to 0$.

\section{$F_{st}$ under Strong Mutation}  \label{S:strong}
\setcounter{equation}{0}
\setcounter{lemma}{0}

	In this section we prove Theorem \ref{T:Fst_strong}.  Recall $F_{st} = \frac{\phi_0 - \phi_1}{1 - \phi_1}\bigg|_{\phi_1 \ne 1}$.  The theorem will follow from two observations.  First $\phi_1 \to 0$ and second, $V[\phi_0] \to 0$.  More precisely the next two lemmas describe the behavior of $\phi_1$ and $\phi_0$.

\begin{lemma}  \label{L:sm_phi_1}
\begin{equation}
\lim_{r \to \infty} \LPLSL E[\phi_1 \cphio] = 0
\end{equation}
\end{lemma}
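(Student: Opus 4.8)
The plan is to read the indicator $\chi(x_{k,j}^\gen = x_{k',j'}^\gen)$ genealogically, split the double sum defining $\phi_1$ into three classes of pairs, and show that only the between-deme pairs survive the LPLS limit while their surviving contribution is killed by sending $r\to\infty$. First I would record the meaning of the indicator under the $\GKC$ mutation scheme: since each mutation sets a fresh coordinate to $1$ for exactly the block it falls on, two sampled individuals have equal genetic states precisely when no mutation separates them, i.e. when no mutation lands on either of the two lineages joining them to their most recent common ancestor in $\Pi_\GKC$. Consequently two individuals in the same initial block are always identical, whereas two individuals in distinct initial blocks are identical exactly when their ancestral lineages coalesce before either is hit by a mutation. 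Tracking only these two tagged lineages, the competing clocks are coalescence of the pair (rate $1$) against mutation of either lineage (total rate $2r$); these three rates are unaffected by the other blocks, so for any pair starting in distinct blocks the identity probability equals $\tfrac{1}{1+2r}$ exactly, independently of $d,n$ and of the random initial partition.

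Next I would expand
\[
E[\phi_1] = \frac{1}{d^2 n^2}\sum_{k,k'=1}^d \sum_{j,j'=1}^n P(x_{k,j}^\gen = x_{k',j'}^\gen)
\]
into diagonal terms ($k=k',\,j=j'$), same-deme off-diagonal terms ($k=k',\,j\ne j'$), and between-deme terms ($k\ne k'$). The $dn$ diagonal terms contribute $\tfrac{1}{dn}$. Each same-deme off-diagonal pair is identical with probability $\SP(2)+(1-\SP(2))\tfrac{1}{1+2r}\le 1$, and there are $d\,n(n-1)$ of them, so this class contributes at most $\tfrac{n-1}{dn}$. Each of the $d(d-1)n^2$ between-deme pairs is identical with probability $\tfrac{1}{1+2r}$, contributing $\tfrac{d-1}{d}\cdot\tfrac{1}{1+2r}$. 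Taking the LPLS limit ($d,n\to\infty$) annihilates the first two classes (both carry a factor $\tfrac1d$ or $\tfrac1{dn}$), leaving $\LPLSL E[\phi_1]=\tfrac{1}{1+2r}$, which tends to $0$ as $r\to\infty$.

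Finally I would transfer this from $E[\phi_1]$ to the conditional expectation $E[\phi_1 \cphio]$. Because $\phi_1=1$ forces every pair to be identical, it forces a fixed between-deme pair to be identical, whence $P(\phi_1=1)\le \tfrac{1}{1+2r}$ for every instance and so $\LPLSL P(\phi_1\ne 1)\ge 1-\tfrac{1}{1+2r}$. Since $0\le\phi_1\le 1$, the elementary inequality $E[\phi_1\cphio]\le E[\phi_1]/P(\phi_1\ne 1)$ yields, after taking the LPLS limit and then $r\to\infty$,
\[
\lim_{r\to\infty}\LPLSL E[\phi_1\cphio] \le \lim_{r\to\infty}\frac{1/(1+2r)}{1-1/(1+2r)} = 0,
\]
which is the claim.

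I expect the main obstacle to be the rigorous justification that the between-deme identity probability is \emph{exactly} $\tfrac{1}{1+2r}$ regardless of how many lineages are present — that is, arguing that side coalescences and mutations on untracked blocks leave the competition between the two tagged lineages unchanged, so that the identity probability is genuinely independent of the initial partition and of the LPLS parameters. Once that is in hand, the bookkeeping of the three pair-classes and the removal of the $\phi_1\ne 1$ conditioning are routine.
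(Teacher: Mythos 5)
Your proof is correct and takes essentially the same route as the paper's: compute $E[\phi_1]$ unconditionally, observe that the between-deme pairs give the dominant $O(1/r)$ contribution while the same-deme and diagonal pairs contribute only $O(1/d)$, then remove the conditioning on $\phi_1 \ne 1$ via a bound on $P(\phi_1 = 1)$. The only differences are cosmetic: you bound $P(\phi_1 = 1)$ by the identity probability of a fixed between-deme pair where the paper uses the Markov-type bound $P(\phi_1 = 1) \le E[\phi_1]$, and your pairwise identity constant $\frac{1}{1+2r}$ versus the paper's $\frac{1}{1+r}$ is immaterial since both are $O(1/r)$ and vanish as $r \to \infty$.
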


\begin{proof}
We start by considering simply $E[\phi_1]$ rather than $E[\phi_1 \cphio]$.
\begin{align}
E[\phi_1] & = \frac{1}{n^2 d^2} \sumkkjj E[\Ikkjj]
\\ \notag
	& = E[I(x_{k_1,1} = x_{k_2,1})] + O(\frac{1}{d}),
\end{align}
where $k_1 \ne k_2$.  By the definition  of a G/KC coalescent and the properties of a Kingman coalescent $x_{k_1,1}$ and $x_{k_2,1}$ coalesce at rate $1$ while a mutation occurs at rate $r$.  This gives,
\begin{equation}  \label{E:collapse_t}
E[I(x_{k_1,1} = x_{k_2,1})] = \frac{1}{1 + r} = O(\frac{1}{r}).
\end{equation}
This gives $E[\phi_1] \to O(\frac{1}{r})$.  Since
\begin{equation}
E[\phi_1] = E[\phi_1 \cphio] P(\phi_1 \ne 1) + P(\phi_1 = 1),
\end{equation}
we will have $E[\phi_1 \cphio] \to  O(\frac{1}{r})$ if we can show $P(\phi_1 = 1) \to  O(\frac{1}{r})$.  But note
\begin{equation}
P(\phi_1 = 1) \le E[\phi_1].
\end{equation}
Taking $\lim_{r \to \infty}$ finishes the proof.

\end{proof}
Now we show that $\phi_0$ approaches a deterministic value.

\begin{lemma}  \label{L:sm_phi_0}
\begin{equation}
\lim_{r \to \infty} \LPLSL \phi_0\bigg|_{\phi_1 \ne 1} = \SPL(2)
\end{equation}
\end{lemma}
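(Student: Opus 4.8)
The plan is to show that under strong mutation $\phi_0$ collapses onto a statistic of the initial block structure alone, to identify the mean of that statistic with $\SPL(2)$, and then to exploit independence across demes to kill its variance.

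First I would reduce the genetic-identity indicators to block indicators. Since blocks in a Kingman coalescent only merge and never split, any two individuals $x_{k,j}^\gen$ and $x_{k,j'}^\gen$ that begin in the same block $E_{k,i}$ carry identical genetic states for all time, so $\chi(x_{k,j}^\gen = x_{k,j'}^\gen) = 1$ almost surely. If $j$ and $j'$ begin in distinct blocks, then exactly as in Lemma \ref{L:sm_phi_1} their two blocks coalesce at rate $1$ while distinguishing mutations accrue at a rate of order $r$, so $P(x_{k,j}^\gen = x_{k,j'}^\gen) = O(1/r)$ uniformly over the pair. Setting
\begin{equation}
\psi_k = \sum_{i=1}^{B_k} b_{k,i}^2 = \frac{1}{n^2} \sum_{j,j'=1}^n \chi(j, j' \text{ lie in the same block}),
\end{equation}
this gives $\phi_{0,k} \ge \psi_k$ and $E[\phi_{0,k} - \psi_k] = O(1/r)$, and hence, writing $\bar\psi = \tfrac{1}{d}\sum_{k=1}^d \psi_k$, I would obtain $E|\phi_0 - \bar\psi| = O(1/r)$ (the average preserving the bound because the discrepancy is nonnegative).

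Next I would evaluate $E[\psi_k]$. Splitting off the diagonal $j = j'$, which contributes $1/n \to 0$, the remaining ordered pairs give $\tfrac{n-1}{n}\,\SP(2)$ by the very definition of the scattering probability $\SP(2)$. Since the tuples $(B_k, b_{k,1}, \dots, b_{k,B_k})$ converge in distribution in the LPLS limit we have $\SP \to \SPL$, so $E[\psi_k] \to \SPL(2)$ and therefore $E[\phi_0] \to \SPL(2) + O(1/r)$.

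The concentration step is where the main care is needed. I would note that $\psi_k$ is a function of the initial partition of $\F_k$ only, and that these partitions are i.i.d.\ across $k$ by construction, with $\psi_k \in [0,1]$; consequently $V[\bar\psi] \le 1/d \to 0$ and $\LPLSL \bar\psi = \SPL(2)$. Combining this with $E|\phi_0 - \bar\psi| = O(1/r)$ yields $\LPLSL \phi_0 = \SPL(2) + \varepsilon_r$ with $\|\varepsilon_r\|_1 = O(1/r)$, so that letting $r \to \infty$ gives the claim; the conditioning on $\phi_1 \ne 1$ is then irrelevant in the limit, since $P(\phi_1 = 1) \le E[\phi_1] \to 0$ by Lemma \ref{L:sm_phi_1}. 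The one genuine obstacle is that the $\phi_{0,k}$ are themselves \emph{not} independent across demes --- they are read off a single shared Kingman coalescent --- so their variance cannot be bounded directly. The resolution is precisely that the shared coalescent enters only through the $O(1/r)$ discrepancy between $\phi_0$ and $\bar\psi$, while the limiting object $\bar\psi$ is an average of genuinely i.i.d.\ terms to which the weak law applies.
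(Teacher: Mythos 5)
Your proposal is correct, and its mean computation coincides with the paper's (both reduce $E[\phi_{0,k}]$ to the probability that two sampled individuals lie in the same initial block, which is $\SP(2)$ up to an $O(1/r)$ cross-block identity probability, using the same fact that two tagged lineages in a Kingman coalescent merge at rate $1$ while mutations accrue at rate of order $r$). Where you genuinely diverge is the concentration step. The paper bounds $V[\phi_0]$ directly: it expands the variance into cross-deme covariances $\text{Cov}(\phi_{0,k},\phi_{0,k'})$ and controls each one with the coupling argument of Rottenstreich et al., coupling the four-lineage coalescent to two independent two-lineage coalescents and observing that a decoupling before any mutation has probability $O(1/r)$. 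You instead write $\phi_0 = \bar\psi + (\phi_0 - \bar\psi)$ with $\bar\psi = \tfrac{1}{d}\sum_k \sum_j b_{k,j}^2$ a statistic of the initial scattering partitions alone: since the tuples $(B_k,b_{k,1},\dots,b_{k,B_k})$ are i.i.d.\ across $k$ by the definition of the G/KC coalescent, $V[\bar\psi]\le 1/d$ is immediate, and all dependence on the single shared collecting-phase coalescent is confined to the nonnegative error term, whose $L^1$ norm is $O(1/r)$ uniformly in the LPLS parameters. Your route buys economy and transparency --- no coupling construction is needed, the cross-deme dependence is isolated exactly where it lives, and the identification of $\phi_{0,k}$ with $\sum_j b_{k,j}^2$ under strong mutation is precisely the formula the paper itself uses later in section \ref{S:Homo}; the paper's route buys a covariance estimate that does not require identifying the limiting i.i.d.\ structure and reuses a coupling device that serves elsewhere in this literature. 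Both arguments dispose of the conditioning on $\phi_1 \ne 1$ identically, via $P(\phi_1 = 1) \le E[\phi_1] = O(1/r)$ from Lemma \ref{L:sm_phi_1}.
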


\begin{proof}
We first show that $V[\phi_0] \to 0$.  
\begin{align}  \label{E:var}
V[\phi_0] = E[\left(\sum_{k=1}^d (\phi_{0,k} - E[\phi_{0,k}]) \right)^2]
\\ \notag
	& = \frac{1}{d^2} \sum_{k',k''=1, k' \ne k''}^d \text{Cov}(\phi_{0,k}, \phi_{0,k'}) + O(\frac{1}{d})
\end{align}
For $k' \ne k''$ we have the following relation
\begin{align}
\text{Cov}(\phi_{0,k}, \phi_{0,k'}) = & E[I(x_{k',1} = x_{k',2})I(x_{k'',1} = x_{k'',2})] 
\\ \notag
	&
	- E[I(x_{k',1} = x_{k',2})]E[I(x_{k'',1} = x_{k'',2})] + O(\frac{1}{n}).
\end{align}
Now we use a coupling argument introduced in \cite{Rottenstreich_2007a_Theo_Pop_Bio}. We sketch the coupling argument and direct the reader to \cite{Rottenstreich_2007a_Theo_Pop_Bio} for further details.   Let $\Pi(t)$ be a G/KC coalescent started with the following four individuals in separate blocks:  $x_{k',1}, x_{k',2}, x_{k'',1}, x_{k'',2}$.  Now define two G/KC coalescents $\Pi^{*,'}(t)$ and $\Pi^{*,''}(t)$ started with the following individuals $x_{k',1}^*, x_{k',2}^*$ and $x_{k'',1}^*, x_{k'',2}^*$ respectively  in separate blocks.  We couple $\Pi(t)$, $\Pi^{*,'}(t)$, $\Pi^{*,''}(t)$ as follows.  At the outset, the block contain each $x$ is coupled to the correspondingly indexed $x^*$.  By this we mean that the two blocks experience the same coalescent, migration, and mutation events.  If a block in $\Pi(t)$ containing a $k'$ indexed $x$ coalesces with a block containing a $k''$ indexed $x$ then we say that a decoupling has occurred.  Once a decoupling occurs, the three coalesents evolve independently. 
	Set
\begin{equation}
I = \left(I(x_{k',1} = x_{k',2}) - I(x_{k',1}^* = x_{k',2}^*)\right)
												\left(I(x_{k'',1} = x_{k'',2}) - I(x_{k'',1}^* = x_{k'',2}^*)\right)
\end{equation}
Observe,
\begin{equation}
\text{Cov}(\phi_{0,k}, \phi_{0,k'}) = E[I]
\end{equation}
Observe further, if a mutation or coalescent event occurs before the decoupling coalescent event then $I=0$.  We have,
\begin{equation}
P(\text{decouping event before mutation event}) \le 4E[I(x_{k',1} = x_{k'',1})]
\end{equation}
These two observations give
\begin{equation}  \label{E:cov}
\text{Cov}(\phi_{0,k}, \phi_{0,k'}) \le E[I(x_{k',1} = x_{k'',1})] \to O(\frac{1}{r}),
\end{equation}
where we have used (\ref{E:collapse_t}) to obtain the result directly above.  Plugging (\ref{E:cov}) into (\ref{E:var}) gives $V[\phi_0] \to O(\frac{1}{r})$.  Now note
\begin{equation}
E[\phi_0] = E[\phi_{0,1}] = E[I(x_{1,1} = x_{1,2})] + O(\frac{1}{n}).
\end{equation}
If $x_{1,1},x_{1,2}$ occupy the same block in $\Pi_\GKC(0)$ then we will have $x_{1,1} = x_{1,2}$.  Otherwise, by arguments given in Lemma \ref{L:sm_phi_1} we will have, with limiting probability $1$, $x_{1,1} \neq x_{1,2}$.  It then follows by the definition of $\SPL$ that
\begin{equation}
E[I(x_{1,1} = x_{1,2})] \to \SPL(2).
\end{equation}
Finally, recalling that $P(\phi_1 = 1) \to O(\frac{1}{r})$ from the proof of Lemma \ref{L:sm_phi_1}, leads to $V[\phi_0 \cphio] \to O(\frac{1}{r})$ and $E[\phi_0 \cphio] \to \SPL(2)$.  Taking $\lim_{r \to \infty}$ finishes the proof.

\end{proof}
Since $F_{st} \in [0,1]$, Theorem \ref{T:Fst_strong} is proved in a straightforward manner using Lemmas \ref{L:sm_phi_1} and \ref{L:sm_phi_0}.

\section{$F_{st}$ under Weak Mutation}  \label{S:weak}
\setcounter{equation}{0}
\setcounter{lemma}{0}

	The goal of this section is to prove Theorems \ref{T:bottom_tree}-\ref{T:top_tree}.  Recall that in the weak mutation setting we assume that there is a single mutation on $\Pi_\GKC(t)$.    We assume that the mutation occurs when $|\Pi_\GKC(t)| = L$. More precisely, we select a block $\Emut$ uniformly from $\Pi_\GKC(T_L)$ and mutate all individuals in $\Emut$.  Label the blocks of $\Pi_{\GKC,k}(0)$ as $E_{k,1}, E_{k,2}, \dots, E_{k,B_k}$.  We refer to any $x_{k,j} \in \Emut$ as a mutant.
	  
	Set
\begin{gather}
R_k = \sum_{j=1}^{B_k} \chi(\Emut \cap E_{k,j} \neq \emptyset), \\ \notag
R = \sum_{k=1}^d R_k.
\end{gather}
$R_k$ and $R$ are the number of blocks in $\Pi_{\GKC, k}(0)$ and $\Pi_{\GKC}(0)$ respectively that contain mutants.  Note that if a block at $t=0$ contains a single mutant, then every individual in the block must be a mutant.

	  At $t=0$, each $\F_k$ is the disjoint union of $B_k$ blocks.  Of these $B_k$ blocks, $R_k$ will contain mutants.  We refer to these $R_k$ blocks as mutant blocks.    By the symmetry of the G/KC coalescent, which it inherits from the Kingman coalescent, the mutant blocks are equally likely to be any subset of the $B_k$ blocks.  Let $\sigma(k,\cdot)$ be a random injective map from $[1,\dots,R_k]$ to $[1,\dots,B_k]$.  $\sigma(k,\cdot)$ is chosen from the uniform distribution of all such mappings.  Now define
\begin{gather}
A_k = \sum_{j=1}^{R_k} b_{k,\sigma(k,j)}. \\ \notag
\p = \frac{1}{d} \sum_{k=1}^d A_k. \\ \notag
\q = \frac{1}{d} \sum_{k=1}^d (A_k)^2.
\end{gather}
Simple algebra gives
\begin{equation}  \label{E:Fst_p}
F_{st} = \frac{\q - \p^2}{\p - \p^2}.
\end{equation}

	We will often speak of the descendants of some block $E \in \Pi_\GKC(t)$.  By this we mean all $E_i \in \Pi_\GKC(0)$ with $E_i \subset E$.  We write $\{B_i\}$ for $\{B_i\}_{i=1,\dots,d}$.  Below we let $\calA(a,b)$ be the set of all injective maps from $[1,2,\dots,a]$ to $[1,2,\dots,b]$.


\subsection{Some Preliminary Results}

	We first characterize the distributions of $\RLB$.  The LPLS limiting distribution of $\RLB$ depends on $\LPLSL L$ and $\kappa$.  We have three cases.   Define
\begin{flushleft}
$V =$ exponential random variable with mean $1$. \\
$W(z) =$ r.v. with density $(1 - \frac{1}{z}) (1 - \frac{x}{z})^{z-2}$ for $z \ge 2$ and $0\le x \le z$.\\
$G(z) =$ geometric random variable with success probability $z$.\\
\end{flushleft}

	Then we have the following result.	
\begin{lemma}  \label{L:RB_limit}
\begin{equation}
\LPLSL \frac{RL}{B}\bigg|_B = \bigg\{
\begin{array} {cc}
V & \text{if }  \kappa = 0, \LPLSL L \to \infty, \\
W(L) & \text{if } \kappa = 0, \LPLSL L < \infty,  \\
\kappa (G(\kappa)+1) & \text{if }  \kappa \ne 0.
\end{array}
\end{equation}
\end{lemma}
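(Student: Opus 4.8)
The plan is to reduce the whole lemma to one exact combinatorial identity for the law of $R$, after which each of the three regimes follows by elementary asymptotics, using only that $\RLB = (L/B)R$ together with $L/B \to \kappa$.

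First I would pin down the conditional law of $R$. Given $\{B_i\}$, and hence given $B = |\Pi_\GKC(0)| = \sum_i B_i$, the quantity $R$ is the number of the $B$ blocks present at $t=0$ that descend from the single block $\Emut$ chosen uniformly among the $L$ blocks at level $L$. Since the G/KC dynamics are those of a Kingman coalescent, the merger history from $B$ blocks down to $L$ blocks depends on the initial configuration only through the count $B$, so the deme decomposition $\{B_i\}$ is irrelevant and conditioning on $\{B_i\}$ is the same as conditioning on $B$. I would then prove
\[
P(R = r \mid B) = \frac{\binom{B-r-1}{L-2}}{\binom{B-1}{L-1}}, \qquad r = 1,2,\dots,B-L+1,
\]
i.e.\ $R$ is distributed as a single part of a uniformly random composition of $B$ into $L$ positive parts. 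The tool is Kingman's family-size formula (see Tavar\'e \cite{Tavare_1984_TPB}): a fixed partition into $L$ blocks of sizes $n_1,\dots,n_L$ has probability $\frac{(B-L)!\,L!\,(L-1)!}{B!\,(B-1)!}\prod_i n_i!$. Passing to ordered partitions, each carrying the probability of its underlying partition divided by $L!$, selecting a uniform block amounts to reading off the size of the first block; grouping ordered partitions by their size composition, the weight $\prod_i n_i!$ cancels the multinomial count $B!/\prod_i n_i!$, leaving the bare count $\binom{B-r-1}{L-2}$ of compositions of $B-r$ into $L-1$ parts, and the displayed pmf.

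Second, with the exact law in product form
\[
P(R=r \mid B) = (L-1)\,\frac{(B-L)(B-L-1)\cdots(B-L-r+2)}{(B-1)(B-2)\cdots(B-r)},
\]
I would take the three limits. If $\kappa \ne 0$, so $L/B \to \kappa \in (0,1)$, each of the $r-1$ numerator factors is $\sim B(1-\kappa)$, each of the $r$ denominator factors is $\sim B$, and $L-1 \sim \kappa B$, giving $P(R=r\mid B) \to \kappa(1-\kappa)^{r-1}$; hence $R$ converges in distribution to $G(\kappa)+1$ and $\RLB \to \kappa(G(\kappa)+1)$. If $\kappa = 0$ and $\LPLSL L = \infty$, I would regard $\RLB$ as continuous with mesh $L/B \to 0$ and show the rescaled mass $(B/L)P(R=r\mid B) \to e^{-x}$ at $x = rL/B$ by a logarithm estimate, the dominant contribution being $-(r-1)L/B \to -x$ while the remaining corrections vanish because $r/B = x/L \to 0$; thus $\RLB \to V$. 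If $\kappa = 0$ and $\LPLSL L = z < \infty$, holding $L$ fixed as $B \to \infty$ gives $R/B \to$ the first coordinate of a Dirichlet$(1,\dots,1)$ law, with density $(L-1)(1-u)^{L-2}$, so $\RLB = L\,(R/B) \to W(L)$ with density $(1-\tfrac1L)(1-\tfrac xL)^{L-2}$, which is exactly the definition of $W(z)$.

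The main obstacle is the first step. The genuine content is identifying the law of the size of a uniformly chosen Kingman block and confirming its insensitivity to $\{B_i\}$; the composition-counting cancellation is clean once the ordered-partition bookkeeping is set up, but one must use the uniform-block rather than the size-biased sampling convention. The one analytic point is the regime $\kappa = 0,\ L \to \infty$, where pointwise convergence of the rescaled mass must be upgraded to convergence in distribution: I would supply a bound on $(B/L)P(R=r\mid B)$ that is uniform in $r$ and summable, ruling out escape of mass, and conclude by Scheff\'e's lemma. The other two regimes are immediate consequences of the exact pmf.
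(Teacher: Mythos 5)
Your proposal is correct and follows essentially the same route as the paper: both arguments rest on the exact conditional pmf $P(R = r \mid B) = \binom{B-r-1}{L-2}\big/\binom{B-1}{L-1}$, obtained from the fact that the descendant counts of the $L$ blocks form a uniform composition of $B$ into $L$ positive parts, followed by elementary asymptotics in the three regimes. The only stylistic difference is that the paper cites the uniform-composition law as a standard Kingman coalescent result (pointing to Durrett), whereas you rederive it from Kingman's family-size formula; your three limit computations (the geometric limit from the product form of the pmf, the exponential limit via the rescaled mass $(B/L)P(R=r\mid B) \to e^{-x}$, and the Dirichlet first-coordinate density $(L-1)(1-u)^{L-2}$ rescaling to the $W(L)$ density) all match what the paper does.

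One small but genuine omission: you take $L/B \to \kappa$ as an input, but $\kappa$ is defined in the paper as $\LPLSL \frac{L}{E[B_1]d}$, not as a limit of $L/B$. The paper opens its proof by closing exactly this gap: by Chebyshev, $V\left[\frac{B}{d} - E[B_1]\right] = \frac{1}{d}V[B_1] \le \frac{E[B_1^2]}{d} \to 0$ under the standing LPLS assumption for G/KC coalescents, so $\frac{B}{E[B_1]d} \to 1$ and hence $\LPLSL \frac{L}{B} = \kappa$. Without this concentration step, your three-way case split---which is indexed by $\kappa$ but applied to the conditional law given $B$---is not connected to the hypothesis of the lemma; with it, your argument is complete.
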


\begin{proof}
Before proving the three cases we show that $\frac{B}{E[B_1] d} \to 1$.  Indeed, by our assumption of $\frac{E[B_1^2]}{d} \to 0$ in the LPLS limit we have 
\begin{equation}  \label{E:collapse_LLN}
V[\frac{B}{d} - E[B_1]] = \frac{1}{d} V[B_1] \le \frac{E[B_1^2]}{d} \to 0.
\end{equation}
We can then conclude
\begin{equation}
\kappa = \LPLSL \frac{L}{E[B_1] d} = \LPLSL \frac{L}{B} \frac{B}{d E[B_1]} = \LPLSL \frac{L}{B}.
\end{equation}

Let $j_1, j_2, \dots, j_L$ be the number of descendants from each block in $\Pi_\GKC(T_L)$.  A standard result, see for instance \cite{Durrett_Book_Probability_Models}, is
\begin{equation}
P(j_1, j_2, \dots, j_L \ | \ B) = \frac{1}{\binom{B-1}{L-1}}
\end{equation}
By symmetry we may set $R=j_1$.  Then elementary combinatorics gives
\begin{equation}
P(R \ | \ B) = \frac{\binom{B-R-1}{L-2}}{\binom{B-1}{L-1}}.
\end{equation}

	Now we consider the three cases stated in the lemma. For simplicity of notation let $Z = \frac{RL}{B}\bigg|_B$.   First take $\kappa = 0, L \to \infty$.  In this case since $\kappa = 0$ we have $\frac{L}{B} \to 0$.  
\begin{align}  \label{E:base_comb}
\LPLSL P(a \le Z \le b) & = \LPLSL \sum_{R=\frac{aB}{L}}^{\frac{bB}{L}} \frac{\binom{B-R-1}{L-2}}{\binom{B-1}{L-1}}.
\\ \notag
	& = \LPLSL \sum_{R=\frac{aB}{L}}^{\frac{bB}{L}} \frac{L-1}{B-1} (1 - \frac{R}{B-1})^{L-2} E(L,R,B).
\end{align}
where
\begin{equation}
E(L,R,B) = \frac{\prod_{j=1}^{L-3} 1 - \frac{j}{B-R-1}}{\prod_{j=1}^{L-2} 1 - \frac{j}{B-1}}.
\end{equation}
A standard argument then shows, since $\frac{L}{B} \to 0$ and $L \to \infty$ that,
\begin{equation}
\LPLSL P(a \le Z \le b) \to \int_a^b dx \exp[-x].
\end{equation}
In the case $\kappa = 0$, $\LPLSL L < \infty$, we can use (\ref{E:base_comb}) to show that $Z \to W(L)$.  Now consider the case $\kappa > 0$.    Taking $\epsilon > 0$, 
\begin{equation}
\LPLSL P(\kappa j - \epsilon \le Z \le \kappa j + \epsilon) = \LPLSL P(R = k) = \frac{\binom{B-R-1}{L-2}}{\binom{B-1}{L-1}}
\end{equation}
Now expanding the binomials directly above and taking the LPLS limit gives that $R-1$ goes to a geometric random variable with success probability $\kappa$.  The lemma follows.

\end{proof}

	Lemma \ref{L:RB_limit} shows that $\RLB$ has three different limits depending on the scaling of $L$ that we choose.  In each case we want to compute the LPLS limit of the mean and variance of $\p$ and $\q$ conditioned on  $\RLB$.  This however is technically cumbersome because prior to taking the LPLS limit, $\RLB$ is discrete.  Furthermore, if $\kappa > 0$, the LPLS limit of $\RLB$ is discrete.  To deal with all three limits of $\RLB$ simultaneously, and to avoid unneeded technical difficulties we condition not on $\RLB$, but on the event $\RLB \in \Ih$ for certain sets $\Ih$.  More precisely, let $\epsilon > 0$, then set
\begin{equation}
\Ih = \bigg\{
\begin{array} {cc}
[h\epsilon, (h+1)\epsilon) \text{ for } h=0,1,2,\dots & \text{if } \kappa = 0, \LPLSL L \to \infty. \\ 
\text{[}h\epsilon,(h+1)\epsilon) \text{ for } h=0,1,2,\frac{L}{\epsilon} & \text{if } \kappa = 0, \LPLSL L < \infty. \\ 
(h \kappa - \epsilon, h \kappa + \epsilon) \text{ for } h=0,1,2,\dots & \text{if } \kappa \ne 0.
\end{array}
\end{equation}

\begin{lemma}  \label{L:r_k_limit}
Let $i$ be a positive integer with $i \le B_k$.  Then,
\begin{equation}  \label{E:r_k_bound}
\LPLSL E(\binom{R_k}{i} \cRallB)  \le  \binom{B_k}{i} (\RB)^i(1 + O(\frac{B_k^2}{B}).
\end{equation}
\begin{equation}  \label{E:r_k_limit_1}
\LPLSL E(\binom{R_k}{i} \cRallB)  =  \binom{B_k}{i} (\RB)^i(1 + O(\frac{B_k^2}{B-R} + \frac{R_k^2}{R})).
\end{equation}
For $k \ne k'$ and $i,i'$ positive fixed integers,
\begin{equation}  \label{E:r_k_limit_2}
\LPLSL E[R_k^i R_{k'}^{i'} \cRallB) = \LPLSL E[R_k^i \cRallB] E[R_{k'}^{i'} \cRallB] ( 1 + 
				  O(\frac{B_k^2 + B_{k'}^2}{B-R} + \frac{R_k^2 + R_{k'}^2}{R}) ).
\end{equation}
\end{lemma}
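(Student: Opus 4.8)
The plan is to reduce the entire lemma to moment computations for the multivariate hypergeometric distribution, using the exchangeability of the Kingman coalescent. Because $\Pi_\GKC$ runs a label-free Kingman coalescent on the $B$ blocks of $\Pi_\GKC(0)$, the block $\Emut$ chosen uniformly at level $L$ has, conditionally on its number of descendants $R$, a descendant set that is a uniformly chosen $R$-subset of the $B$ initial blocks; this follows from the composition fact $P(j_1,\dots,j_L \mid B) = 1/\binom{B-1}{L-1}$ used already in the proof of Lemma \ref{L:RB_limit} together with symmetry. Since the allocation of the $B$ blocks to demes (the counts $B_1,\dots,B_d$) is fixed at $t=0$ and independent of the coalescent dynamics, I conclude that, conditionally on $R,\{B_i\}$, the vector $(R_1,\dots,R_d)$ is multivariate hypergeometric: each marginal satisfies $P(R_k = r \cRallB) = \binom{B_k}{r}\binom{B-B_k}{R-r}/\binom{B}{R}$, and distinct $R_k,R_{k'}$ are negatively correlated draws, without replacement, from one urn of $B$ blocks of which $R$ are mutant.

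Granting this, I would compute factorial moments exactly. The probability that $i$ prescribed deme-$k$ blocks are all mutant is $\binom{B-i}{R-i}/\binom{B}{R} = (R)_i/(B)_i$, where $(x)_i = x(x-1)\cdots(x-i+1)$; summing over the $\binom{B_k}{i}$ choices gives the exact identity $E[\binom{R_k}{i}\cRallB] = \binom{B_k}{i}(R)_i/(B)_i$. Writing
\[
\frac{(R)_i}{(B)_i} = \Bigl(\RB\Bigr)^i \prod_{j=1}^{i-1}\frac{1-j/R}{1-j/B},
\]
each factor is $\le 1$ because $R \le B$, which immediately yields the upper bound (\ref{E:r_k_bound}) with any nonnegative correction, in particular the stated $O(B_k^2/B)$. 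For the sharper equality (\ref{E:r_k_limit_1}) I would Taylor expand this product for fixed $i$ to obtain a multiplicative correction $1 + O\bigl(\tfrac1R - \tfrac1B\bigr)$, and then repackage it into the advertised form by the crude bounds $\tfrac1R \le R_k^2/R$ (valid once $R_k \ge 1$, which holds on the relevant conditioning sets $\Ih$) and $\tfrac1B \le B_k^2/(B-R)$.

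For the cross-moment (\ref{E:r_k_limit_2}) the natural object is the joint factorial moment. Counting configurations in which all prescribed blocks across the two demes are mutant gives
\[
E\Bigl[\tbinom{R_k}{i}\tbinom{R_{k'}}{i'}\cRallB\Bigr] = \binom{B_k}{i}\binom{B_{k'}}{i'}\frac{(R)_{i+i'}}{(B)_{i+i'}},
\]
while the product of the marginal factorial moments carries $\tfrac{(R)_i(R)_{i'}}{(B)_i(B)_{i'}}$; for fixed $i,i'$ the ratio of the two is $1 + O(1/R)$, so the joint and product factorial moments agree to leading order. To reach the raw moments $R_k^i R_{k'}^{i'}$ in the statement I would expand $R_k^i = \sum_m S(i,m)(R_k)_m$ in Stirling numbers (and likewise for $R_{k'}^{i'}$), note that the top-degree factorial moments dominate and factorize by the estimate just obtained, and collect the remainder into $O\bigl(\frac{B_k^2 + B_{k'}^2}{B-R} + \frac{R_k^2+R_{k'}^2}{R}\bigr)$ as in the single-deme case.

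I expect the genuinely delicate part to be the error-term bookkeeping rather than the leading-order computation, which is clean precisely because the hypergeometric moments are exact. Repackaging the true $O(\frac{B-R}{RB})$ and $O(1/R)$ relative errors into the specific combinations $\frac{B_k^2}{B-R}+\frac{R_k^2}{R}$ — uniformly over the conditioning sets $\Ih$, and in a form that survives the later summation over $k$ where $\frac{E[B_1^2]}{d}\to 0$ is invoked — requires care, in particular tracking which inequalities need $R_k \ge 1$ and controlling the boundary of the hypergeometric support when $R$ approaches $B$.
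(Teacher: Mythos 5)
Your proposal is correct and rests on the same probabilistic foundation as the paper --- the observation that, conditionally on $R$ and $\{B_i\}$, the mutant blocks form a uniformly chosen $R$-subset of the $B$ initial blocks, so that $(R_1,\dots,R_d)$ is multivariate hypergeometric --- but you exploit that fact differently. The paper approximates the hypergeometric pmf $P(R_k \cRallB) = \binom{B_k}{R_k}\binom{B-B_k}{R-R_k}/\binom{B}{R}$ by the binomial pmf $\binom{B_k}{R_k}(\RB)^{R_k}(1-\RB)^{B_k-R_k}$ with multiplicative error, and then implicitly uses the exact binomial factorial moments to produce $\binom{B_k}{i}(\RB)^i$. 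You instead compute the hypergeometric factorial moments exactly, $E[\binom{R_k}{i} \cRallB] = \binom{B_k}{i}(R)_i/(B)_i$ with $(R)_i = R(R-1)\cdots(R-i+1)$, and only then approximate the ratio $(R)_i/(B)_i$ by $(\RB)^i$. Your route buys two things: the upper bound (\ref{E:r_k_bound}) is immediate and needs no error term at all, since each factor $(R-j)/(B-j) \le \RB$ when $R\le B$; and the exact joint factorial moment $\binom{B_k}{i}\binom{B_{k'}}{i'}(R)_{i+i'}/(B)_{i+i'}$ turns (\ref{E:r_k_limit_2}) into a genuine computation, where the paper only says ``the proof is similar.''

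One step needs repair, though it is a repair of bookkeeping rather than of substance. You absorb the true relative error of (\ref{E:r_k_limit_1}), which for fixed $i\ge 2$ is of order $\frac{1}{R}-\frac{1}{B}$, into the term $O(R_k^2/R)$ by asserting $R_k \ge 1$ ``on the relevant conditioning sets $\Ih$.'' That is not right: $\Ih$ constrains $\RLB$, not $R_k$, and for any fixed deme $k$ one has $R_k = 0$ with high probability, since most sampled demes contain no mutant blocks. The clean fix is already in your hands: the lemma assumes $i \le B_k$, so the relative error $O(i^2/R)$ is dominated by $O(B_k^2/R)$, and similarly $ii'/R \le (B_k^2 + B_{k'}^2)/R$ for the cross moments. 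This is exactly how the paper itself consumes the lemma downstream --- in (\ref{E:A_k_2_final}) the error is written with $B_k^2/R$, via $R_k \le B_k$ --- and it sidesteps the type mismatch, present in the paper's own statement as well, of an error term containing the variable $R_k$ that the left-hand side has already averaged out.
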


\begin{proof}
We demonstrate (\ref{E:r_k_bound}) and (\ref{E:r_k_limit_1}), the proof of (\ref{E:r_k_limit_2}) is similar.  We choose $R$ mutant blocks out of a total of $B$ possible blocks.  Each collection of $R$ choices is equally likely, so we have
\begin{equation}
P(R_k \cRallB) = \frac{\binom{B_k}{R_k}\binom{B-B_k}{R-R_k}}{\binom{B}{R}}
\end{equation}
From the relation directly above one can show
\begin{equation}
P(R_k \cRallB)	\le  \binom{B_k}{R_k} \left(\RB\right)^{R_k} \left(1 - \RB\right)^{B_k - R_k} (1  
		+ O(\frac{B_k^2}{B})). 
\end{equation}
and
\begin{equation}
P(R_k \cRallB)
			= \binom{B_k}{R_k} \left(\RB\right)^{R_k} \left(1 - \RB\right)^{B_k - R_k} (1 
		+ O(\frac{R_k^2}{R} + \frac{B_k^2}{B-R} + \frac{R_k^2}{B-R})). 
\end{equation}
These two relations give (\ref{E:r_k_bound}) and (\ref{E:r_k_limit_1}) respectively.

\end{proof}

The following lemma will be used to control the error expression produced in Lemma \ref{L:r_k_limit}.

\begin{lemma}  \label{L:control_B_k_error}
If $\kappa = 0$ and $\LPLSL L < \infty$ assume $h \ne \frac{L}{\epsilon}, \frac{L}{\epsilon} - 1$.  
\begin{equation}
\LPLSL E[ \frac{B_k^2}{B-R} + \frac{R_k^2}{R}  \cIh] = 0
\end{equation}
\end{lemma}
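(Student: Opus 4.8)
The plan is to treat the two non-negative contributions $\frac{B_k^2}{B-R}$ and $\frac{R_k^2}{R}$ separately, reducing the control of each to the single estimate $\LPLSL E[\frac{B_k^2}{B}\cIh] = 0$. Since both conditional expectations are non-negative, it suffices to produce, on the event $\{\RLB \in \Ih\}$, upper bounds of the form $C\frac{B_k^2}{B}$ and then to verify the reduced estimate. For the reduced estimate I would use the deterministic lower bound $B = \sum_{k=1}^d B_k \ge d$, valid because each sampled deme contributes at least one block ($B_k \ge 1$); hence $\frac{B_k^2}{B}\le \frac{B_k^2}{d}$ and $E[\frac{B_k^2}{B}] \le \frac{E[B_1^2]}{d}\to 0$ by the standing assumption. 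To pass from this unconditioned bound to the conditional one I would write $E[\frac{B_k^2}{B}\cIh] = E[\frac{B_k^2}{B}\mathbf{1}(\RLB\in\Ih)]/P(\RLB\in\Ih) \le E[\frac{B_k^2}{B}]/P(\RLB\in\Ih)$ and use that $P(\RLB\in\Ih)$ is bounded away from $0$ in the limit. This last fact is exactly what Lemma \ref{L:RB_limit} supplies, since it identifies the limiting law of $\RLB$ and the sets $\Ih$ carry positive limiting mass (for $\kappa \ne 0$ one keeps only the atoms $h \ge 1$ of $\kappa(G(\kappa)+1)$).

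For the term $\frac{R_k^2}{R}$ I would bypass Lemma \ref{L:r_k_limit} and its error terms altogether by using $R_k \le B_k$. Because the $R$ mutant blocks form a uniform size-$R$ subset of the $B$ blocks, independently of the level $L$ given $R$, the conditional law of $R_k$ given $R,\{B_i\}$ is hypergeometric with $E[R_k \cRallB] = R\frac{B_k}{B}$. Hence
\begin{equation}
E\Big[\frac{R_k^2}{R}\cRallB\Big] \le E\Big[\frac{R_k B_k}{R}\cRallB\Big] = \frac{B_k}{R}\,E[R_k \cRallB] = \frac{B_k^2}{B},
\end{equation}
a clean pathwise bound with no error terms. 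Conditioning further on $\{\RLB \in \Ih\}$ via the tower property (the conditional law of $R_k$ depends on $R,\{B_i\}$ but not on $L$) gives $E[\frac{R_k^2}{R}\cIh] \le E[\frac{B_k^2}{B}\cIh]$, which is the reduced estimate.

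For the term $\frac{B_k^2}{B-R}$ everything hinges on a lower bound for $B-R$ valid pathwise on $\Ih$, and this is where the hypothesis on $h$ enters and where the main obstacle lies. In the cases $\kappa = 0,\ \LPLSL L \to \infty$ and $\kappa \ne 0$ one has $\RB = \frac1L\RLB \to 0$ on $\Ih$ (the quantity $\RLB$ is bounded on $\Ih$ while $L \to \infty$), so eventually $B-R \ge \frac12 B$ and $\frac{B_k^2}{B-R}\le \frac{2B_k^2}{B}$. In the delicate case $\kappa = 0,\ \LPLSL L < \infty$, the upper endpoint of $\Ih$ is $(h+1)\epsilon$, and excluding $h = \frac{L}{\epsilon},\frac{L}{\epsilon}-1$ forces $(h+1)\epsilon \le L - \epsilon$; thus on $\Ih$ we have $\RLB \le L-\epsilon$, i.e. $R \le B(1-\frac{\epsilon}{L})$ and $B-R \ge \frac{\epsilon}{L}B$. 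Since $L$ is bounded and $\epsilon$ is fixed, this gives $\frac{B_k^2}{B-R}\le \frac{L}{\epsilon}\frac{B_k^2}{B}$, again the reduced estimate up to a constant. The main obstacle is precisely that if $h$ were allowed to reach $\frac{L}{\epsilon}$ or $\frac{L}{\epsilon}-1$, conditioning would concentrate $R$ near its maximal value $B-L+1$, making $B-R$ of constant order $L$ and $\frac{B_k^2}{B-R}$ comparable to $\frac{B_k^2}{L}$, whose expectation need not vanish when $L$ is bounded; the hypothesis is exactly what rules this configuration out. Combining the three reductions, each term conditioned on $\{\RLB \in \Ih\}$ has expectation $O(E[\frac{B_k^2}{B}\cIh]) \to 0$, which proves the lemma.
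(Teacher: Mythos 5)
Your proof is correct, and it shares the paper's overall skeleton: reduce both terms of $H = \frac{B_k^2}{B-R} + \frac{R_k^2}{R}$ to $O(\frac{B_k^2}{B})$ on the conditioning event, invoking the hypothesis on $h$ exactly where the paper does (to keep $\RLB\frac{1}{L}$ bounded away from $1$, hence $B-R$ of order $B$), and then show $\LPLSL E[\frac{B_k^2}{B} \cIh] = 0$. But you implement two key steps genuinely differently, and more cleanly. First, for $\frac{R_k^2}{R}$ the paper invokes Lemma \ref{L:r_k_limit} via (\ref{E:r_k_bound}), which carries multiplicative error terms; your exact hypergeometric computation $E[\frac{R_k^2}{R} \cRallB] \le \frac{B_k}{R}E[R_k \cRallB] = \frac{B_k^2}{B}$, using $R_k \le B_k$ and $E[R_k \cRallB] = R\frac{B_k}{B}$, gives the same conclusion with no error bookkeeping, and your tower-property step is legitimate because the event $\{\RLB \in \Ih\}$ is measurable with respect to $(R,\{B_i\})$. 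Second, for the terminal estimate the paper writes out the conditional expectation explicitly and cancels $P(\RLB \in \Ih \ | \ B)$ against $P(\RLB \in \Ih)$ using the asymptotic independence of $\RLB$ from $B$ (Lemma \ref{L:RB_limit}), then evaluates $E[\frac{B_1^2}{B}] = \frac{E[B_1^2]}{dE[B_1]}$ via a law-of-large-numbers argument; you instead use the crude bound $E[X \cIh] \le E[X]/P(\RLB \in \Ih)$ together with the deterministic inequality $B \ge d$ (each $B_k \ge 1$), so that $E[\frac{B_k^2}{B}] \le \frac{E[B_1^2]}{d} \to 0$ by the standing assumption. Your route buys elementarity: it needs only $\liminf P(\RLB \in \Ih) > 0$ from Lemma \ref{L:RB_limit}, not the finer cancellation, while the paper's route computes the conditional expectation asymptotically rather than merely bounding it. Both proofs implicitly require the conditioning event to carry positive limiting mass; you are in fact more careful than the paper in flagging the degenerate case $h=0$, $\kappa \ne 0$, where the limit law $\kappa(G(\kappa)+1)$ puts no mass on $\Ih$ and the conditioning becomes vacuous in the limit.
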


\begin{proof}
Let $H =  \frac{B_k^2}{B-R} + \frac{R_k^2}{R}$.
We have,
\begin{align}
H =
	& \frac{B_k^2}{B} \left(\frac{1}{1 - \RLB(\frac{1}{L})} \right) + \frac{R_k^2}{R}. 
\end{align}
From (\ref{E:r_k_bound}) we have 
\begin{equation}
E[\frac{R_k^2}{R} \ | \ R,B] \le \frac{B_k^2}{R} (\frac{R}{B})^2 (1 + O(\frac{B_k^2}{B})) = O(\frac{B_k^2}{B}).
\end{equation}
By our assumptions on $h$ we have $\limsup \RLB \frac{1}{L} < 1$.  So we arrive at,
\begin{equation}
E[H \cIh] \le O(E[\frac{B_k^{2}}{B} \cIh]).
\end{equation}
We now write out the conditional expectation explicitly.  Without loss of generality we take $k=1$.
\begin{align}  \label{E:cond_B_k_B}
E[&\frac{B_1^{2}}{B} \cIh] 
 = \frac{\sum_{B_1} P(B_1) B_1^{2} \sum_{B_2,\dots,B_d} P(B_2,\dots,B_d) \sum_{\RLB \in \Ih} P(\frac{RL}{B} \ | \ B) \frac{1}{B}}{P(\RLB \in \Ih)}
\end{align}
But now we note that by Lemma \ref{L:RB_limit}, $P(\RLB \in \Ih)$ is asymptotically independent of $B$.  So using (\ref{E:cond_B_k_B}) we have
\begin{equation}
\LPLSL E[H \cIh] = \LPLSL E[\frac{B_1^{2}}{B}] = \LPLSL \frac{E[B_1^{2}]}{dE[B_1]}  = 0,
\end{equation}

\end{proof}

	We will need to compute the moments of products of $b_{k,j}$.  The following lemma shows that such moments can be expressed in terms of the scattering probabilities.  In general we will be computing products of $b_{k,j}$ for uniformly selected $j$ over $1,\dots, B_k$.  To make this precise let $I$ be a positive integer and let $\gamma$ be a random element of  $\calA(I,B_k)$ under the uniform distribution.  We have the following lemma.
	
\begin{lemma}  \label{L:scat_prob}
Let $I, j_1, j_2, \dots, j_I$ be fixed positive integers with each $j_i$ unique. Set $J = j_1 + j_2 + \dots + j_I$.  Then for $B_k > I$, $J < n$,
\begin{equation}
\LPLSL I! \binom{B_k}{I} \binom{J}{j_1, j_2, \dots, j_I} 
E[\prod_{i=1}^I b_{k,\gamma(i)}^{j_i} \ | \ B_k] = \SPL(j_1,j_2,\dots,j_I)\bigg|_{B_k}.
\end{equation}
\end{lemma}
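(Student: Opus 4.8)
The plan is to establish the identity first for a fixed realization of the random partition of $\F_k$ (i.e.\ conditional on the block sizes $b_{k,1},\dots,b_{k,B_k}$) and then average over that partition conditional on $B_k$. The first step is to dispose of the randomness in $\gamma$. Since $\gamma$ is uniform on $\calA(I,B_k)$, a set of cardinality $I!\binom{B_k}{I}$, conditioning further on the block sizes gives
\begin{equation}
I!\binom{B_k}{I}\, E\Big[\prod_{i=1}^I b_{k,\gamma(i)}^{j_i} \mid B_k\Big]
= E\Big[\sum_{\gamma \in \calA(I,B_k)} \prod_{i=1}^I b_{k,\gamma(i)}^{j_i} \mid B_k\Big],
\end{equation}
so that it suffices to identify the inner sum, for a fixed partition, with the (conditional) scattering probability.

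Next I would write down the scattering probability for a fixed partition combinatorially. Selecting $J$ individuals from $\F_k$ without replacement gives $\binom{n}{J}$ equally likely outcomes, and a favorable outcome is one meeting exactly $I$ blocks with intersection sizes equal to the set $\{j_1,\dots,j_I\}$. Here the hypothesis that the $j_i$ are \emph{distinct} is essential: each favorable selection then determines a unique ordered assignment $\gamma\in\calA(I,B_k)$ of counts to blocks, namely $\gamma(i)$ is the block met in exactly $j_i$ individuals, and for each such $\gamma$ there are $\prod_i \binom{n\,b_{k,\gamma(i)}}{j_i}$ such selections. Hence, conditional on the partition,
\begin{equation}
\SP(j_1,\dots,j_I) = \frac{1}{\binom{n}{J}} \sum_{\gamma \in \calA(I,B_k)} \prod_{i=1}^I \binom{n\,b_{k,\gamma(i)}}{j_i}.
\end{equation}

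The analytic core is then the asymptotics of the binomial coefficients as $n\to\infty$. For fixed $j$ one has $\binom{m}{j}/n^j = \tfrac{1}{j!}\prod_{l=0}^{j-1}(b-l/n)$ with $b=m/n$, whence $\binom{n b}{j}/n^j \to b^j/j!$ with error $O(1/n)$ uniformly in $b\in[0,1]$, and similarly $\binom{n}{J}/n^J \to 1/J!$. Dividing numerator and denominator by $n^J$ and using the identity $J=\sum_i j_i$, all powers of $n$ cancel exactly, giving for a fixed partition
\begin{equation}
\SP(j_1,\dots,j_I) \longrightarrow \frac{J!}{\prod_i j_i!} \sum_{\gamma \in \calA(I,B_k)} \prod_{i=1}^I b_{k,\gamma(i)}^{j_i}
= \binom{J}{j_1,\dots,j_I} \sum_{\gamma \in \calA(I,B_k)} \prod_{i=1}^I b_{k,\gamma(i)}^{j_i}.
\end{equation}
Taking the expectation over the partition conditional on $B_k$ and combining with the first display then produces exactly the claimed identity, with $\SPL(j_1,\dots,j_I)|_{B_k}$ arising as the limit of $E[\SP \mid B_k]$.

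I expect the main obstacle to be the interchange of the LPLS limit with the partition-expectation: because $\SP$ is itself defined as an average over the random partition, the right-hand side is $\LPLSL E[\,\cdot \mid B_k]$, whereas the pointwise computation above supplies $E[\LPLSL(\cdot) \mid B_k]$. The interchange is justified by bounded convergence once $B_k$ is fixed, since $\sum_{\gamma\in\calA(I,B_k)} \prod_i b_{k,\gamma(i)}^{j_i} \le \big(\sum_j b_{k,j}\big)^I = 1$, and the uniform $O(1/n)$ binomial error summed over the $O(B_k^I)$ injections still vanishes. The remaining points—the distinctness-driven bijection and the cancellation of $n^J$ via $J=\sum_i j_i$—are the structural heart of the argument but are combinatorially routine.
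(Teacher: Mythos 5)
Your proof is correct and follows essentially the same route as the paper's: both express the scattering probability conditional on a fixed partition as a combinatorial sum over injections $\gamma \in \calA(I,B_k)$, take $n \to \infty$ so that the sampling counts collapse to $\binom{J}{j_1,\dots,j_I}\prod_{i=1}^I b_{k,\gamma(i)}^{j_i}$, identify the sum over injections with $I!\binom{B_k}{I}$ times the uniform-$\gamma$ expectation, and then condition on $B_k$. The only differences are cosmetic (you count unordered samples where the paper counts ordered ones, and your two displayed combinatorial formulas are algebraically identical to the paper's) and that you make explicit two points the paper leaves implicit, namely the role of the distinctness of the $j_i$ in preventing double counting over injections, and the bounded-convergence justification for exchanging the LPLS limit with the expectation over the random partition.
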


\begin{proof}
If we sample $J$ individuals from $\F_k$, then $\SP(j_1,j_2,\dots,j_I)$ is the probability that the blocks $E_{k,1}, E_{k,2}, \dots, E_{k,B_k}$ partition the $J$ individuals into $I$ sets of size $j_1, j_2, \dots, j_I$.  Taking ordering into account, there are $J! \binom{n}{J}$ ways to sample $J$ individuals from $\F_k$.  There are $\frac{(nb_{k,h})!}{(nb_{k,h}-j_i)!}$ ways to assign $j_1$ individuals to block $E_{k,h}$.  With this in mind, if we consider all possible combinations, we arrive at 
\begin{equation}
\SP(j_1,j_2,\dots,j_I)\bigg|_{b_{k,1}, b_{k,2},\dots,b_{k,B_k}} = \frac{1}{J! \binom{n}{J}}
	\sum_{\gamma \in \calA(I,B_k)} \binom{J}{j_1, j_2, \dots, j_I} \prod_{i=1}^I
			\frac{(nb_{k,\gamma(i)})!}{(nb_{k,\gamma(i)}-j_i)!},
\end{equation}
Since we fix $J$, taking the LPLS limit gives the following asymptotics
\begin{equation}
\SPL(j_1,j_2,\dots,j_I)\bigg|_{b_{k,1}, b_{k,2},\dots,b_{k,B_k}} = \LPLSL 
	\sum_{\gamma \in \calA(I,B_k)}   \binom{J}{j_1, j_2, \dots, j_I} \prod_{i=1}^I b_{k,\gamma{i}}^{j_i}
\end{equation}
Noting that $\sum_{\gamma \in \mathcal{A}_J} = \frac{B_k!}{((B_k-I)!}$ leads to
\begin{equation}
\SPL(j_1,j_2,\dots,j_I)\bigg|_{b_{k,1}, b_{k,2},\dots,b_{k,B_k}} = \LPLSL I! \binom{B_k}{I}
	\sum_{\gamma \in \calA(I,B_k)}  P(\gamma) \binom{J}{j_1, j_2, \dots, j_I} \prod_{i=1}^I b_{k,\gamma{i}}^{j_i}
\end{equation}
If we now condition $\Theta$ over $B_k$ rather than $b_{k,1}, b_{k,2}, \dots, b_{k,B_k}$ we have,
\begin{equation}
\SP(j_1,j_2,\dots,j_I)\bigg|_{B_k} = \LPLSL 
	I! \binom{B_k}{I} \binom{J}{j_1, j_2, \dots, j_I} E[\prod_{i=1}^I b_{k,\gamma(i)}^{j_i} \ | \ B_k].
\end{equation}

\end{proof}

	Finally, we show that the distribution of $\SPL$ depends very weakly on $B_k$.  

\begin{lemma}  \label{L:control_Theta_error}
With the notation and conditions of Lemma \ref{L:scat_prob},
\begin{equation}
E[\SPL(j_1, \dots, j_I)\bigg|_{B_k} \cIh] = \SPL(j_1, \dots, j_I).
\end{equation}
\end{lemma}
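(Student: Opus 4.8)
The plan is to follow the template of Lemma \ref{L:control_B_k_error}: to show that the conditioning event $\RLB \in \Ih$ is \emph{asymptotically independent} of a single deme's block count, so that conditioning on it does not alter the limiting law of $B_k$, and hence leaves the $B_k$-average of the conditional scattering probability unchanged. By the exchangeability of the demes I take $k=1$ without loss of generality. Since $\SPL(j_1,\dots,j_I)\bigg|_{B_1}$ is, by Lemma \ref{L:scat_prob}, a deterministic function of $B_1$, its conditional expectation is simply a $B_1$-average,
\begin{equation}
E[\SPL(j_1,\dots,j_I)\bigg|_{B_1} \cIh] = \sum_{B_1} \SPL(j_1,\dots,j_I)\bigg|_{B_1}\, P(B_1 \cIh),
\end{equation}
so that everything reduces to controlling the conditional law $P(B_1 \cIh)$.

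First I would apply Bayes' rule to write
\begin{equation}
P(B_1 \cIh) = P(B_1)\,\frac{P(\RLB \in \Ih \ | \ B_1)}{P(\RLB \in \Ih)},
\end{equation}
so the task is to show the ratio tends to $1$, i.e. that $P(\RLB \in \Ih \ | \ B_1) \to P(\RLB \in \Ih)$ in the LPLS limit. This is precisely the asymptotic independence already exploited in Lemma \ref{L:control_B_k_error}. By (\ref{E:collapse_LLN}) we have $\frac{B}{E[B_1]d}\to 1$, and writing $B = B_1 + \sum_{k=2}^d B_k$ shows that conditioning on the single summand $B_1$ changes the total by a factor $1 + O(\frac{B_1}{B})$ with $\frac{B_1}{B}\to 0$. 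Since the limiting distribution of $\RLB$ described in Lemma \ref{L:RB_limit} ($V$, $W(L)$, or $\kappa(G(\kappa)+1)$) depends on the configuration only through $B$ and $L$ and is itself independent of the value of $B$, this $B_1$-perturbation is negligible, and $P(\RLB \in \Ih \ | \ B_1)$ converges to its unconditional value uniformly over the relevant range of $B_1$.

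Granting this, the ratio is $1 + o(1)$, so $P(B_1 \cIh)\to P(B_1)$ and therefore
\begin{equation}
\LPLSL E[\SPL(j_1,\dots,j_I)\bigg|_{B_1} \cIh] = \sum_{B_1} \SPL(j_1,\dots,j_I)\bigg|_{B_1}\, P(B_1) = \SPL(j_1,\dots,j_I),
\end{equation}
the last equality being the definition of the unconditional scattering probability as the $B_1$-average of its conditional version (Lemma \ref{L:scat_prob}). The main obstacle is the middle step: making rigorous that conditioning on one deme's block count $B_1$ does not move the distribution of $\RLB$. This rests on the concentration $\frac{B}{E[B_1]d}\to 1$ coming from $\frac{E[B_1^2]}{d}\to 0$, together with the fact that $B_1$ contributes negligibly to the total $B$; the interchange of the limit with the (possibly infinite) sum over $B_1$ must also be justified, which I expect to follow from the same moment bound $\frac{E[B_1^2]}{d}\to 0$ used throughout.
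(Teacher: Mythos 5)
Your proposal is correct and takes essentially the same route as the paper: the paper's own (very brief) proof likewise reduces the claim to the argument of Lemma \ref{L:control_B_k_error}, namely that by Lemma \ref{L:RB_limit} the law of $\RLB$ given the block configuration converges to a limit not depending on $B$, so the conditioning on $\RLB \in \Ih$ can be removed asymptotically. Your Bayes-rule formulation, showing $P(B_1 \cIh) \to P(B_1)$ and then averaging the (bounded) deterministic function $\SPL(j_1,\dots,j_I)\big|_{B_1}$, is just a more explicit write-up of that same cancellation.
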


\begin{proof}
The proof of this lemma is very similar to that of Lemma \ref{L:control_B_k_error}.  The existence of a limit for $P(\Ih)$ allows us to remove the conditional dependence on $\Ih$.

\end{proof}

\subsection{$\p$}  \label{S:p}

		Now we consider $\p$  conditioned on $\Ih$.  
\begin{lemma}  \label{L:mean_p}
If $\kappa = 0$ and $\LPLSL L < \infty$ assume $h \ne \frac{L}{\epsilon}, \frac{L}{\epsilon} - 1$.  
\begin{equation}
\LPLSL E[L\p  \cIh] \in \Ih
\end{equation}
\end{lemma}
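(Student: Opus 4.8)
The plan is to show that the conditional expectation $\LPLSL E[L\p \cIh]$ lands in the interval $\Ih$, which in the three regimes of Lemma~\ref{L:RB_limit} is essentially the interval capturing $\RLB$. The natural route is to unfold $\p = \frac{1}{d}\sum_k A_k$ with $A_k = \sum_{j=1}^{R_k} b_{k,\sigma(k,j)}$, take expectations first over the random injection $\sigma(k,\cdot)$ for fixed $(R_k,B_k)$, then over $R_k$ for fixed $(R,\{B_i\})$, and finally average over $k$. The key observation is that $\p$ should concentrate, so computing its conditional mean and showing the conditional variance vanishes will let me identify $\LPLSL E[L\p \cIh]$ with $\LPLSL \RLB$, which by construction lies in $\Ih$.

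First I would compute $E[A_k \mid R_k, B_k]$. Since $\sigma(k,\cdot)$ is a uniform injection into $[1,\dots,B_k]$ and $\sum_{j=1}^{B_k} b_{k,j} = 1$, symmetry gives $E[A_k \mid R_k, B_k] = \frac{R_k}{B_k}$. Averaging over $R_k$ using Lemma~\ref{L:r_k_limit}, in particular the $i=1$ case of (\ref{E:r_k_limit_1}) which gives $\LPLSL E[R_k \cRallB] = B_k \RB (1 + o(1))$, yields
\begin{equation}
\LPLSL E[A_k \cRallB] = \RB (1 + o(1)).
\end{equation}
Averaging over $k$, the $\frac{1}{d}\sum_k$ telescopes so that $\LPLSL E[\p \cRallB] = \RB(1+o(1))$, and multiplying by $L$ gives $\LPLSL E[L\p \cRallB] = \RLB(1+o(1))$. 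I would then pass from conditioning on $R,\{B_i\}$ to conditioning on the event $\RLB \in \Ih$ using the same averaging device that appears in the proof of Lemma~\ref{L:control_B_k_error}: since $P(\RLB \in \Ih)$ is asymptotically independent of $B$ by Lemma~\ref{L:RB_limit}, the conditioning can be replaced by restricting $\RLB$ to $\Ih$, which forces the conditional mean into $\Ih$.

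The hard part will be controlling the error terms and the fluctuations so that the $(1+o(1))$ factors genuinely collapse to land inside $\Ih$ rather than merely near it. The error in (\ref{E:r_k_limit_1}) involves $\frac{B_k^2}{B-R} + \frac{R_k^2}{R}$, and Lemma~\ref{L:control_B_k_error} is precisely the tool that drives the conditional expectation of this error to zero (this is why the hypothesis $h \ne \frac{L}{\epsilon}, \frac{L}{\epsilon}-1$ is imposed, guaranteeing $\limsup \RLB \frac{1}{L} < 1$ so that $B - R$ is not degenerate). I would also need to verify that $L\p$ concentrates, i.e. that its conditional variance vanishes in the LPLS limit; this follows from the independence structure across demes together with the cross-moment control (\ref{E:r_k_limit_2}), which decorrelates $A_k$ and $A_{k'}$ up to vanishing error. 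Given concentration and the mean computation, $\LPLSL E[L\p \cIh]$ equals the representative value of $\RLB$ on $\Ih$, which lies in $\Ih$ by definition, completing the proof.
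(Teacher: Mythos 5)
Your proposal is correct and follows essentially the same route as the paper's proof: condition on $R,\{B_i\}$, use the symmetry of the random injection $\sigma$ together with Lemma \ref{L:r_k_limit} to obtain $E[L\p \cRallB] = \RLB(1+o(1))$, then condition on $\RLB \in \Ih$ and eliminate the error terms via Lemma \ref{L:control_B_k_error}. The only differences are cosmetic: your exact computation $E[b_{k,\sigma(k,1)} \mid B_k] = 1/B_k$ replaces the paper's asymptotic appeal to Lemma \ref{L:scat_prob} with $\SP(1)=1$, and your concentration/variance step is superfluous for this statement, which concerns only the conditional mean (the variance bound is the content of the separate Lemma \ref{L:var_p}).
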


\begin{proof}
Using the fact that $R_k, b_{k, \sigma(k,j)}$ are independent when conditioned on $B_k$, we have
\begin{align}  \label{E:mean_p_1}
E[\p \cRallB] & = E[A_k \cRallB] = E[\sum_{j=1}^{R_k} b_{k,\sigma(k,j)} \cRallB]
\\ \notag
	& = E[\sum_{j=1}^{R_k} E[b_{k,\sigma(k,j)} \ | \ B_k] \cRallB]
    = E[R_k E[b_{k,\sigma(k,1)} \ | \ B_k] \cRallB]
\end{align}
Applying Lemma \ref{L:scat_prob} with $J = I = 1$, noting $\SP(1) = 1$, and then applying Lemma \ref{L:r_k_limit} leads to
\begin{align}  \label{E:mean_p_2}
E[L\p \cRallB] = \frac{L}{B_k} E[R_k  \cRallB] = \RLB + \RLB O(\frac{B_k^2}{B-R} + \frac{R_k^2}{R}).
\end{align}
Now if we condition both sides of the above equation with respect to $\Ih$ and
apply Lemma \ref{L:control_B_k_error} we arrive at the statement of the proof.

\end{proof}

	Having computed the conditional mean of $L \p$ on $\Ih$, we now consider the conditional variance.
	
\begin{lemma}  \label{L:var_p}
If $\kappa = 0$ and $\LPLSL L < \infty$ assume $h \ne \frac{L}{\epsilon}, \frac{L}{\epsilon} - 1$.  Then,
\begin{equation}
\LPLSL V[L \p \cIh] \le O(\lambda).
\end{equation}
\end{lemma}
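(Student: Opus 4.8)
The plan is to control $\LPLSL V[L\p \cIh]$ by iterating the law of total variance across the three--level conditioning hierarchy $\{\RLB \in \Ih\} \supset (R,\{B_i\}) \supset (\{R_k\},\{B_i\})$, the point being that at the finest level the per--deme contributions are \emph{independent}. Writing $L\p = \frac{L}{d}\sum_{k=1}^d A_k$, I first record two elementary facts. Since $A_k = \sum_{j=1}^{R_k} b_{k,\sigma(k,j)}$ is a sum of distinct relative block sizes, and all $B_k$ block sizes sum to $1$, we have $A_k \in [0,1]$; hence $A_k^2 \le A_k$ and $V[A_k \ | \ R_k,B_k] \le E[A_k \ | \ R_k, B_k] = \frac{R_k}{B_k}$, where the last equality is Lemma \ref{L:scat_prob} with $I=J=1$ (giving $E[b_{k,\sigma(k,1)} \ | \ B_k] = \frac{1}{B_k}$). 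Second, conditioned on $(\{R_k\},\{B_i\})$ the mutant blocks chosen in distinct demes, together with their sizes, are independent, so
\begin{equation}
V[\textstyle\sum_k A_k \ | \ \{R_k\},\{B_i\}] = \sum_k V[A_k \ | \ R_k, B_k], \qquad E[\textstyle\sum_k A_k \ | \ \{R_k\},\{B_i\}] = \sum_k \frac{R_k}{B_k},
\end{equation}
\emph{with no cross--deme terms}. This conditional independence is what makes the $O(d^2)$ potential covariances tractable.

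Carrying out the two inner variance decompositions splits $\LPLSL V[L\p \cIh]$ into three pieces. The diagonal piece is $\frac{L^2}{d^2} E[\sum_k V[A_k \ | \ R_k,B_k] \cIh] \le \frac{L^2}{d^2} E[\sum_k \frac{R_k}{B_k} \cIh]$; by Lemma \ref{L:r_k_limit} this is asymptotically $\frac{L^2}{d^2}\, d\,\RB = \frac{L}{d}\,\RLB$, and since $\RLB \in \Ih$ is $O(1)$ and $\frac{L}{d}\to\lambda$, this piece is $O(\lambda)$. The allocation piece is $\frac{L^2}{d^2} E[V[\sum_k \frac{R_k}{B_k} \ | \ R,\{B_i\}] \cIh]$, produced by the (multivariate hypergeometric) randomness in how the $R$ mutant blocks are distributed among demes; bounding $V[R_k \ | \ R,\{B_i\}] \le B_k \RB$ and discarding the negative hypergeometric covariances gives a bound $\frac{L^2}{d^2}\,\RB \sum_k \frac{1}{B_k}$, and because $B_k \ge 1$ forces $\frac{1}{d}\sum_k \frac{1}{B_k} \to E[1/B_1] \le 1$, this is at most $\frac{L}{d}\,\RLB\, E[1/B_1] = O(\lambda)$ as well. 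The conditional--mean piece is $V[E[L\p \ | \ R,\{B_i\}] \cIh]$; by the proof of Lemma \ref{L:mean_p}, $E[L\p \ | \ R,\{B_i\}] = \RLB (1 + O(\frac{B_k^2}{B-R} + \frac{R_k^2}{R}))$, so since $\RLB$ is confined to the interval $\Ih$ of width $O(\epsilon)$ the leading part contributes at most $O(\epsilon^2)$, while the multiplicative error contributes a variance that vanishes by Lemma \ref{L:control_B_k_error}; this piece is therefore negligible and vanishes as $\epsilon\to 0$.

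The main obstacle is the cross--deme contribution: a priori $\sum_k A_k$ has $O(d^2)$ covariance terms and, carrying the prefactor $\frac{L^2}{d^2}$, these could be $O(1)$ and swamp the target $O(\lambda)$. The resolution is structural: conditional independence given $(\{R_k\},\{B_i\})$ removes all cross terms at the finest level, so the entire cross--deme dependence is funneled into the single allocation piece above, whose size is pinned down by the hypergeometric moment estimate (\ref{E:r_k_limit_2}) with its residual factorization errors killed by Lemma \ref{L:control_B_k_error} after conditioning on $\Ih$. The remaining delicacy is pure bookkeeping of the $\frac{L^2}{d^2}$ scaling together with $\RLB = O(1)$ and $\frac1d\sum_k \frac1{B_k}\le 1$, which is exactly what converts the naively--$O(1)$ allocation piece into $O(\lambda)$ and yields $\LPLSL V[L\p \cIh] \le O(\lambda)$.
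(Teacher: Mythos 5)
Your proof is correct, but it takes a genuinely different route from the paper's. The paper computes the conditional second moment directly: it expands $E[L^2\p^2 \cRallB] = \frac{L^2}{d^2}\sum_{k,k'} E[A_k A_{k'} \cRallB]$, evaluates the diagonal terms through the scattering-probability moment identities of Lemma \ref{L:scat_prob} (producing the $\SPL(2)$ and $\SPL(1,1)$ contributions) and the off-diagonal terms through the hypergeometric factorization estimate (\ref{E:r_k_limit_2}), and then lets the $O(d^2)$ cross terms cancel against the squared mean from Lemma \ref{L:mean_p}, leaving the asymptotic identity $V[L\p \cRallB] \approx \frac{L}{d}\,\SPL(2)\,\RLB(1 - \RB)$, which is finally conditioned on $\Ih$ via Lemmas \ref{L:control_B_k_error} and \ref{L:control_Theta_error}. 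You instead iterate the law of total variance down the hierarchy $\Ih \supset (R,\{B_i\}) \supset (\{R_k\},\{B_i\})$, exploiting conditional independence of the $A_k$ across demes at the finest level, so that all cross-deme dependence is funneled into the hypergeometric allocation variance, which you bound by discarding the negative covariances. What the paper's route buys is the sharp leading constant $\SPL(2)$ (the actual asymptotic variance, whose computation is recycled for $\q$ in Lemma \ref{L:var_q}); what your route buys is economy and robustness: you need only $\SPL(1)=1$, $A_k \le 1$, and elementary hypergeometric bounds, you avoid the delicate cancellation between $O(d^2)$ cross terms and the squared mean, and you make explicit the conditional-mean piece $V[E[L\p \cRallB] \cIh]$, which the paper passes over silently when it identifies $V[L\p \cIh]$ with the averaged conditional variance. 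One simplification you missed: since $E[R_k \ | \ R,\{B_i\}] = B_k \RB$ and $E[b_{k,\sigma(k,1)} \ | \ B_k] = \frac{1}{B_k}$ are exact identities, one has $E[L\p \cRallB] = \RLB$ exactly, so your third piece is exactly $V[\RLB \cIh] = O(\epsilon^2)$ with no error terms at all; this sidesteps the one soft spot in your write-up, namely that Lemma \ref{L:control_B_k_error} controls only first moments of the error $\frac{B_k^2}{B-R} + \frac{R_k^2}{R}$, which as stated is not quite enough to bound the \emph{variance} of an error contribution.
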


\begin{proof}
We start by considering $E[L^2\p^2 \cRallB]$.  
\begin{equation}  \label{E:p_2_main}
E[L^2\p^2 \cRallB] = \frac{L^2}{d^2} \sum_{k,k'=1}^d E[A_k A_{k'} \cRallB].
\end{equation}
So we need to compute $E[A_k^2 \cRallB]$ and $E[A_k A_{k'} \cRallB]$ for $k \ne k'$.  Starting with $E[A_k^2 \cRallB]$ and expanding out $A_k$ gives
\begin{align}  \label{E:A_k_2}
E[A_k^2 \cRallB] = & E[R_k (R_k-1) E[b_{k,\sigma(k,1)} b_{k,\sigma(k,2)} \ | \ B_k] \cRallB]
\\ \notag
			& + E[R_k E[b_{k,\sigma(k,1)}^2  \ | \ B_k] \cRallB]
\end{align}
Using Lemma \ref{L:scat_prob} gives,
\begin{gather}  \label{E:various_b_moments}
B_k(B_k-1)  E[b_{k,\sigma(k,1)} b_{k, \sigma{k,2}} \ | \ B_k] \to \SPL(1,1)\bigg|_{B_k}. \\ \notag
B_k E[b_{k,\sigma(k,1)}^2 \ | \ B_k] \to \SPL(2)\bigg|_{B_k}.
\end{gather}
Plugging (\ref{E:various_b_moments}) into (\ref{E:A_k_2}) and using Lemma \ref{L:r_k_limit} gives
\begin{align}  \label{E:A_k_2_final}
\LPLSL & E[A_k^2 \cRallB] 
\\ \notag  
	& = \LPLSL \SPL(1,1)\bigg|_{B_k} \frac{1}{B_k(B_k-1)} E[R_k (R_k-1) \cRallB]
			+ \SPL(2)\bigg|_{B_k} \frac{1}{B_k} E[R_k  \cRallB]
\\ \notag
    & = \SPL(1,1)\bigg|_{B_k} (\RB)^2
			+ \SPL(2)\bigg|_{B_k} \RB + (\RB) O(\frac{B_k^2}{B-R} + \frac{B_k^2}{R})
\\ \notag
		& = (\RB)^2 + \SPL(2)\bigg|_{B_k} \RB(1 - \RB) + (\RB)O(\frac{B_k^2}{B-R} + \frac{B_k^2}{R});
\end{align}
where we have used the relation $\SPL(2)\bigg|_{B_k} = 1 - \SPL(1,1)\bigg|_{B_k}$ to arrive at the final equality.  

	Now we turn to $E[A_k A_{k'} \cRallB]$ for $k \ne k'$.  An argument similar to the one just finished for $E[A_k^2 \cRallB]$ gives
\begin{equation}  \label{E:A_k_k_final}
E[A_k A_{k'} \cRallB] = (\RB)^2 + (\RB)^2 O(\frac{B_k^2}{B-R} + \frac{B_k^2}{R})
\end{equation}
Plugging (\ref{E:A_k_2_final}) and (\ref{E:A_k_k_final}) into (\ref{E:p_2_main}) gives
\begin{equation}
E[L^2\p^2 \cRallB] \to (\RLB)^2 + \frac{L}{d} \SPL(2)\bigg|_{B_k} (\RLB)(1 - \RB) + (\RLB)^2 O(\frac{B_k^2}{B-R} + \frac{B_k^2}{R}).
\end{equation}
Using (\ref{E:mean_p_2}) we can express the variance as follows,
\begin{equation}
V[L\p \cRallB] \to \frac{L}{d} \SPL(2)\bigg|_{B_k} (\RLB)(1 - \RB) + (\RLB)^2 O(\frac{B_k^2}{B-R} + \frac{B_k^2}{R}).
\end{equation}
We then condition on $\Ih$ and use Lemmas \ref{L:control_B_k_error} and \ref{L:control_Theta_error} to arrive at
\begin{equation}
\LPLSL V[L \p \cIh] = \LPLSL \frac{L}{d} \SPL(2)\bigg|_{B_k} E[(\RLB)(1 - (\RLB)\frac{1}{L}) \cIh] \le O(\lambda).
\end{equation}

\end{proof}

\subsection{$\q$}  \label{S:q}
	As we did in the previous section for $\p$, in this section we compute the mean and variance of $\q$.  
	
\begin{lemma}
If $\kappa = 0$ and $\LPLSL L < \infty$ assume $h \ne \frac{L}{\epsilon}, \frac{L}{\epsilon} - 1$.  Let $x \in \Ih$.  Then,
\begin{equation}
\LPLSL E[L\q \cIh] = \frac{x^2}{L} + \SPL(2) (x)(1 - \frac{x}{L}) + O(\epsilon).
\end{equation}
\end{lemma}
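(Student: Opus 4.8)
The plan is to reduce this computation to the conditional second-moment calculation already carried out in the proof of Lemma~\ref{L:var_p}. Writing $\q = \frac{1}{d}\sum_{k=1}^d A_k^2$ and conditioning on $R,\{B_i\}$ gives
\begin{equation}
E[L\q \cRallB] = \frac{L}{d}\sum_{k=1}^d E[A_k^2 \cRallB],
\end{equation}
and the summand $E[A_k^2 \cRallB]$ is exactly the quantity evaluated in (\ref{E:A_k_2_final}). So first I would substitute that expression and organize the resulting sum into three pieces.

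For the leading piece, $\frac{L}{d}\sum_k (\RB)^2 = L(\RB)^2 = \frac{(\RLB)^2}{L}$, since $\RB$ does not depend on $k$; this is the source of the $\frac{x^2}{L}$ term. For the middle piece, I would factor out the $k$-independent quantity $L\,\RB(1-\RB) = (\RLB)(1-\frac{\RLB}{L})$, leaving the average $\frac{1}{d}\sum_k \SPL(2)\big|_{B_k}$, whose conditional value on $\Ih$ is $\SPL(2)$ by Lemma~\ref{L:control_Theta_error}; this produces $\SPL(2)(\RLB)(1-\frac{\RLB}{L})$. I would then condition on $\Ih$ and replace $\RLB$ by $x$: both $\frac{(\RLB)^2}{L}$ and $\SPL(2)(\RLB)(1-\frac{\RLB}{L})$ are bounded, smooth functions of $\RLB$, and on $\Ih$ the variable $\RLB$ lies in an interval of width $O(\epsilon)$ about $x$, so this substitution costs only $O(\epsilon)$, which is precisely the error term in the statement.

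The remaining piece is the error $\frac{L}{d}\sum_k (\RB)\,O(\frac{B_k^2}{B-R}+\frac{B_k^2}{R})$, and showing it vanishes in the LPLS limit after conditioning on $\Ih$ is the main obstacle. I expect to handle it exactly as in the proof of Lemma~\ref{L:var_p}: using that $L\,\RB = \RLB$ is bounded on $\Ih$ and that $\frac{B}{E[B_1]d}\to 1$ from (\ref{E:collapse_LLN}), the error contribution is dominated by conditional expectations of the type controlled by Lemma~\ref{L:control_B_k_error}, all of which vanish by virtue of the LPLS assumption $\frac{E[B_1^2]}{d}\to 0$. The delicate bookkeeping here is confirming that the naive factors of $\frac{B_k^2}{R}$ come with enough compensating powers of $\RB$ to reduce to the $\frac{B_k^2}{B}$-type quantities that Lemma~\ref{L:control_B_k_error} kills; the hypothesis $h\ne\frac{L}{\epsilon},\frac{L}{\epsilon}-1$ in the case $\kappa=0,\ \LPLSL L<\infty$ serves the same purpose as in that lemma, guaranteeing $\limsup \RLB\,\frac{1}{L}<1$ so that $B-R$ stays bounded away from zero. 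Assembling the three pieces then yields the stated formula.
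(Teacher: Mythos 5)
Your proposal is correct and follows essentially the same route as the paper: the paper's own proof consists of writing $\q = \frac{1}{d}\sum_{k=1}^d A_k^2$ and invoking (\ref{E:A_k_2_final}), with the conditioning on $\Ih$ and the error control via Lemmas \ref{L:control_B_k_error} and \ref{L:control_Theta_error} handled exactly as you describe. Your write-up simply makes explicit the bookkeeping (the three pieces, the $O(\epsilon)$ from replacing $\RLB$ by $x$, and the compensating factor of $\RB$ in the error term) that the paper leaves implicit.
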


\begin{proof}
We have $\q = \frac{1}{d} \sum_{k=1}^d A_k^2$.  The result then follows from (\ref{E:A_k_2_final}).

\end{proof}

\begin{lemma}  \label{L:var_q}
If $\kappa = 0$ and $\LPLSL L < \infty$ assume $h \ne \frac{L}{\epsilon}, \frac{L}{\epsilon} - 1$.  
\begin{equation}
\LPLSL V[L\q \cIh] = O(\lambda).
\end{equation}
\end{lemma}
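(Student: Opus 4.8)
The plan is to mirror the proof of Lemma \ref{L:var_p}, with the second moment of $\p$ replaced by that of $\q$. Writing $L\q = \frac{L}{d}\sum_{k=1}^d A_k^2$, the starting point is
\begin{equation}
E[L^2\q^2 \cRallB] = \frac{L^2}{d^2}\sum_{k,k'=1}^d E[A_k^2 A_{k'}^2 \cRallB],
\end{equation}
so that, after subtracting $(E[L\q \cRallB])^2$, the conditional variance splits into a diagonal and an off-diagonal piece,
\begin{align}
V[L\q \cRallB] = {} & \frac{L^2}{d^2}\sum_{k=1}^d \left( E[A_k^4 \cRallB] - \left(E[A_k^2 \cRallB]\right)^2 \right) \\ \notag
& \quad + \frac{L^2}{d^2}\sum_{k \ne k'} \left( E[A_k^2 A_{k'}^2 \cRallB] - E[A_k^2 \cRallB]\, E[A_{k'}^2 \cRallB] \right).
\end{align}
The two inputs I need are the diagonal fourth moment $E[A_k^4 \cRallB]$ and the off-diagonal product $E[A_k^2 A_{k'}^2 \cRallB]$ for $k \ne k'$.

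The heart of the computation is $E[A_k^4 \cRallB]$. Expanding $A_k^4 = (\sum_{j=1}^{R_k} b_{k,\sigma(k,j)})^4$ and grouping the four indices according to which coincide, the expectation decomposes into a sum over the set-partition types of a four-element set, namely $(4)$, $(3,1)$, $(2,2)$, $(2,1,1)$, $(1,1,1,1)$, whose block-size patterns carry the scattering probabilities $\SPL(4)$, $\SPL(3,1)$, $\SPL(2,2)$, $\SPL(2,1,1)$, and $\SPL(1,1,1,1) = 1$. For a type using $I$ distinct blocks I would evaluate the $b$-moment with Lemma \ref{L:scat_prob} and the corresponding falling factorial of $R_k$ with Lemma \ref{L:r_k_limit}; since each distinct block produces one factor of $\RB$, the contribution of a type with $I$ parts is of order $(\RB)^I$. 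The crucial point, exactly paralleling the role of $\SPL(2)\RB$ in Lemma \ref{L:var_p}, is that only the single-part type $(4)$ contributes at order $\RB$, namely $\SPL(4)\bigg|_{B_k}\RB$, while every other type is $O((\RB)^2)$. Hence
\begin{equation}
E[A_k^4 \cRallB] = \SPL(4)\bigg|_{B_k} \RB + O\!\left( \left(\RB\right)^2 \right) + \RB\, O\!\left( \frac{B_k^2}{B-R} + \frac{R_k^2}{R} \right).
\end{equation}
Since $E[A_k^2 \cRallB] = O(\RB)$ by (\ref{E:A_k_2_final}), we have $\left(E[A_k^2 \cRallB]\right)^2 = O((\RB)^2)$, so each diagonal summand equals $\SPL(4)\bigg|_{B_k}\RB$ to leading order.

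Combining these estimates, after conditioning on $\Ih$ and applying Lemma \ref{L:control_B_k_error} to kill the error factors from Lemma \ref{L:r_k_limit} and Lemma \ref{L:control_Theta_error} to strip the weak $B_k$-dependence of $\SPL$, the diagonal sum is $\frac{L^2}{d}\SPL(4)\,\RB$ to leading order. Writing $\RB = \RLB/L$, this is $\frac{L}{d}\SPL(4)\,\RLB$; on $\Ih$ the factor $\RLB$ is bounded by a constant, $\SPL(4) \le 1$, and $\frac{L}{d}\to\lambda$, so the diagonal contributes $O(\lambda)$, while the $O((\RB)^2)$ remainder contributes $\frac{1}{d}(\RLB)^2 \to 0$. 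For the off-diagonal sum, the approximate independence (\ref{E:r_k_limit_2}) gives $E[A_k^2 A_{k'}^2 \cRallB] = E[A_k^2 \cRallB]\, E[A_{k'}^2 \cRallB]\,(1 + O(\frac{B_k^2 + B_{k'}^2}{B-R} + \frac{R_k^2 + R_{k'}^2}{R}))$, so each off-diagonal summand is $O((\RB)^2)$ times this independence error. After scaling by $\frac{L^2}{d^2}$ the prefactor $(\RLB)^2$ is bounded on $\Ih$, and averaging the error over $k,k'$ and invoking Lemma \ref{L:control_B_k_error} sends the whole off-diagonal contribution to $0$. Assembling the pieces yields $\LPLSL V[L\q \cIh] = O(\lambda)$. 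I expect the main obstacle to be the bookkeeping of the fourth-moment expansion — correctly matching each partition type to its scattering probability and its power of $\RB$ — and in particular verifying that the single surviving term of order $\RB$ is $\SPL(4)\RB$, with all remaining terms suppressed by the $\frac{L^2}{d}$ scaling.
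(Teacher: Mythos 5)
Your proof is correct and takes essentially the same route as the paper's (the paper only sketches this lemma): off-diagonal terms are disposed of via the approximate independence (\ref{E:r_k_limit_2}), and the diagonal terms via $E[A_k^4 \cRallB] = O(\RB)$ — your partition-type expansion is just an explicit version of the paper's one-line appeal to Lemmas \ref{L:r_k_limit} and \ref{L:scat_prob}, after which both arguments conclude with the same $\frac{L^2}{d^2}$ scaling and Lemmas \ref{L:control_B_k_error}, \ref{L:control_Theta_error}. One harmless slip: $\SPL(1,1,1,1) \ne 1$ in general (compare $\SPL(1,1) = 1 - \SPL(2)$), but since every scattering probability is bounded by $1$ this does not affect your order estimates.
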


\begin{proof}
We sketch the proof as it is very similar in technique to Lemmas \ref{L:mean_p} and \ref{L:var_p}.
Using (\ref{E:r_k_limit_2}) it is not hard to show that for $k \ne k'$,
\begin{equation}
E[A_k^2 A_{k'}^2 \cRallB] = E[A_k^2 \cRallB] E[A_{k'}^2 \cRallB] + O(\frac{B_k^2 + B_{k'}^2}{B-R} + \frac{B_k^2 + B_{k'}^2}{R}).
\end{equation}
Since asymptotically the $A_k$ are uncorrelated, the variance of $L\q$ reduces to the variance of $LA_k^2$.  Ignoring error terms this gives,
\begin{equation}  \label{E:V_q}
V[L\q \cRallB] = \frac{L^2}{d} \sum_{k=1}^d (E[A_k^4 \cRallB] - E[A_k^2 \cRallB]^2).
\end{equation}
From (\ref{E:A_k_2_final}) we have (again ignoring error terms)
\begin{equation}  \label{E:A_k_2_2}
E[A_k^2 \cRallB]^2 = 
	\left((\RB)^2 + \SPL(2)\bigg|_{B_k} \RB(1 - \RB)\right)^2.
\end{equation}
Using Lemmas \ref{L:r_k_limit} and \ref{L:scat_prob} as we did in Lemma \ref{L:var_p} gives
\begin{align}  \label{E:A_k_4}
E[A_k^4 \cRallB] & = O(\frac{R}{B}).
\end{align}
Plugging (\ref{E:A_k_2_2}) and (\ref{E:A_k_4}) into (\ref{E:V_q}) gives
\begin{equation}
V[L\q \cRallB] = O((\RLB) \frac{L}{d}) = O(\lambda).
\end{equation}

\end{proof}

\subsection{Limit of $F_{st}$}
We can now put together the results of sections \ref{S:p} and \ref{S:q} to demonstrate Theorems \ref{T:bottom_tree}-\ref{T:top_tree}.  We start by proving Theorem \ref{T:bottom_tree}.

\begin{proof}[Theorem \ref{T:bottom_tree}]
We will consider $F_{st}$ conditioned on $\RLB \in \Ih$ as $\epsilon \to 0$.  
All the lemmas developed in sections \ref{S:p} and \ref{S:q} include the assumption that if $\kappa = 0$ and $\LPLSL L < \infty$ then also $h \ne \frac{L}{\epsilon}, \frac{L}{\epsilon} - 1$.  But as $\epsilon \to 0$, $P(\RLB \in \Ih) \to 0$ for these values of $h$.  With this in mind, for the rest of this proof we assume that $h$ does not take on these excluded values.

Rewriting (\ref{E:Fst_p}) gives
\begin{equation}
F_{st}\bigg|_{\Ih} = \frac{L\q - (L\p)^2\frac{1}{L}}{L\p - (L\p)^2\frac{1}{L}}\bigg|_{\Ih}.
\end{equation}
Now note that by Lemmas \ref{L:mean_p}-\ref{L:var_q}, since $\lambda = \LPLSL \frac{L}{d} = 0$, the means of $L\p$ and $L\q$ go to non-zero limits while the variance collapses.  If we plug in the mean values for $L\p$ and $L\q$ we arrive at
\begin{equation}
\LPLSL F_{st}\bigg|_{\Ih} = \SPL(2) + O(\epsilon).
\end{equation}
Since the limit is independent of $h$ and since $F_{st}$ is bounded a dominated convergence theorem argument shows $F_{st} \to \SPL(2)$.

\end{proof}

The proofs of Theorems \ref{T:middle_tree} and \ref{T:top_tree} are harder and require some preparation.      The following lemma simplifies the expression for $F_{st}$.

\begin{lemma}  \label{L:p_q_limit}
For $\lambda > 0$,  
\begin{equation}
\LPLSL F_{st} = \LPLSL \frac{\q}{\p}
\end{equation}
\end{lemma}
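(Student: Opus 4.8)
The plan is to reduce $F_{st}$ to $\q/\p$ through an exact algebraic identity and then show that the resulting error term vanishes in the LPLS limit. Starting from the representation (\ref{E:Fst_p}), a direct computation gives
\begin{equation}
\frac{\q}{\p} - F_{st} = \frac{\q}{\p} - \frac{\q - \p^2}{\p - \p^2} = \frac{\q(\p - \p^2) - \p(\q - \p^2)}{\p^2(1 - \p)} = \frac{\p - \q}{1 - \p}.
\end{equation}
Thus it suffices to prove that $\frac{\p - \q}{1 - \p} \to 0$ in probability, and this will follow once I establish that $\p \to 0$ and $\q \to 0$. The pleasant feature of this identity is that the difference depends on $\p, \q$ \emph{linearly} rather than through their ratio, so no cancellation of small quantities in a denominator is required.

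First I would argue that $\p$ and $\q$ both tend to $0$ in probability. The crucial observation is that $\lambda > 0$ forces $L \to \infty$: since $\lambda = \LPLSL \frac{L}{d}$ and $d \to \infty$ in the LPLS limit, we have $L \sim \lambda d \to \infty$. In particular the case $\kappa = 0,\ \LPLSL L < \infty$ of Lemma \ref{L:RB_limit} cannot occur when $\lambda > 0$, so $\RLB$ converges to one of the tight limits $V$ or $\kappa(G(\kappa)+1)$, and the exceptional values $h = \frac{L}{\epsilon}, \frac{L}{\epsilon}-1$ in the subsequent lemmas play no role. By Lemma \ref{L:mean_p} the conditional mean $E[L\p \cIh]$ lies in $\Ih$ and is therefore $O(1)$, while by Lemma \ref{L:var_p} the conditional variance satisfies $V[L\p \cIh] = O(\lambda) = O(1)$. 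The analogous bounds for $L\q$ follow from the lemma computing $E[L\q \cIh]$ together with Lemma \ref{L:var_q}. Consequently $L\p$ and $L\q$ are tight (bounded in probability), and dividing by $L \to \infty$ yields $\p = \frac{L\p}{L} \to 0$ and $\q = \frac{L\q}{L} \to 0$ in probability.

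With $\p \to 0$ and $\q \to 0$ in hand, the numerator $\p - \q \to 0$ while the denominator $1 - \p \to 1$, so $\frac{\p - \q}{1 - \p} \to 0$ in probability. Since $F_{st}$ is bounded in $[0,1]$, an application of Slutsky's theorem — equivalently, the dominated-convergence argument already used in the proof of Theorem \ref{T:bottom_tree} to remove the conditioning on $\Ih$ — shows that $F_{st}$ and $\q/\p$ have the same limit, giving $\LPLSL F_{st} = \LPLSL \frac{\q}{\p}$.

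The main obstacle is the control of $\p$ and $\q$ near $0$, which is exactly what the tightness of $L\p$ and $L\q$ supplies; this in turn rests entirely on the conditional mean and variance estimates of the previous two subsections. One should also record that $\q/\p$ is well defined throughout: since the weak-mutation regime posits a single mutation, we have $R \ge 1$, hence at least one $R_k \ge 1$ and some $A_k > 0$, so that $\p > 0$ almost surely (and $\p < 1$ with limiting probability one because $\p \to 0$). With these well-definedness remarks the identity and the tightness estimates combine to give the claim.
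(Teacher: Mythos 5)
Your proposal is correct and takes essentially the same route as the paper's proof: an elementary algebraic rearrangement of the identity $F_{st} = \frac{\q - \p^2}{\p - \p^2}$ combined with the observation that $\p \to 0$ in probability, which both you and the paper extract from Lemmas \ref{L:RB_limit} and \ref{L:mean_p} together with $L = \lambda d \to \infty$. The only cosmetic differences are that the paper decomposes $F_{st}$ as $\frac{\q}{\p(1-\p)} - \frac{\p}{1-\p}$ rather than exhibiting the difference $\frac{\q}{\p} - F_{st} = \frac{\p - \q}{1-\p}$, and that you explicitly verify $\q \to 0$ (which also follows at once from $\q \le \p$, since each $A_k \le 1$), a point the paper leaves implicit.
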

\begin{proof}
We have
\begin{equation}  \label{E:first_Fst_limit}
F_{st} = \frac{\q - \p^2}{\p - \p^2} = \frac{\q}{\p(1 - \p)} + \frac{\p}{1-\p}.
\end{equation}
Now note that by Lemmas \ref{L:RB_limit} and \ref{L:mean_p}, $E[L\p] \to c > 0$.  Since $L = \lambda d \to \infty$, we have $\p \to 0$.  Using this observation in (\ref{E:first_Fst_limit}) finishes the proof.

\end{proof}

	Before stating the next lemma we define the random variables $\brand(z)$ and $\srand$.  $\srand$ is given by the following distribution.  For $i=1,2,3,\dots$,
\begin{equation}
P(\srand = i) = \frac{i P(B_1=i)}{E[B_1]}.
\end{equation}
Now we define $\brand$.  Let $\eta$ be a uniform random variable on $\{1,2,\dots,z\}$.  Then for $a,b \in [0,1]$
\begin{equation}
P(\brand(z) \in [a,b]) = P(b_{1,\eta} \in [a,b] | B_1 = z).
\end{equation} 
So $\brand(z)$ is the relative size of a block uniformly chosen from $z$ blocks that partition $\F_1$.  The following lemma expresses $F_{st}$ in terms of $\brand(\srand)$.

\begin{lemma}  \label{L:pq_limits}
Assume $\lambda > 0$.  Define
\begin{equation}
Y = \bigg\{
\begin{array}{cc}
\lceil \frac{VE[B_1]}{\lambda}\ \rceil & \text{ if } \kappa = 0\\
G(\kappa) + 1 & \text{ if } \kappa \ne 0.
\end{array}
\end{equation}
Let $\brand_1, \brand_2, \dots$ be independent versions of $\brand$ and $\srand_1, \srand_2, \dots$ be independent versions of $\srand$.  Then,
\begin{equation} \label{E:p_lim}
\LPLSL \p = \LPLSL \sum_{j=1}^Y \brand_j(\srand_j)
\end{equation}
\begin{equation} \label{E:q_lim}
\LPLSL \q = \LPLSL \sum_{j=1}^Y \brand_j^2(\srand_j)
\end{equation}
\begin{equation}  \label{E:Fst_lim}
\LPLSL F_{st} = \LPLSL \frac{\sum_{j=1}^Y \brand_j^2(\srand_j)}
	{\sum_{j=1}^Y \brand_j(\srand_j)}
\end{equation}
\end{lemma}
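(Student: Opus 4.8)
The plan is to reduce both $\p$ and $\q$ to sums over the mutant blocks of $\Pi_\GKC(0)$, and then to identify (i) how many mutant blocks there are, (ii) how their relative sizes are distributed, and (iii) that distinct mutant blocks almost surely occupy distinct demes. Recall that $d\p = \sum_{k=1}^d A_k$ and $d\q = \sum_{k=1}^d A_k^2$, where $A_k = \sum_{j=1}^{R_k} b_{k,\sigma(k,j)}$ is the total relative size of the mutant blocks lying in deme $k$, and that there are $R = \sum_k R_k$ mutant blocks in all. I would work throughout with the rescaled quantities $d\p$ and $d\q$: since Lemma \ref{L:p_q_limit} gives $F_{st} = \q / \p = (d\q)/(d\p)$ in the limit, the common factor $d$ is irrelevant to (\ref{E:Fst_lim}), and (\ref{E:p_lim})--(\ref{E:q_lim}) are to be read with this normalization.

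First I would identify $R$ with $Y$ using Lemma \ref{L:RB_limit}. By (\ref{E:collapse_LLN}) we have $B/(E[B_1]d) \to 1$, so $\frac{RL}{B} = R \cdot \frac{L}{B}$ behaves like $R \cdot \frac{L}{E[B_1]d}$. When $\kappa \ne 0$ this factor tends to $\kappa$, and Lemma \ref{L:RB_limit} gives $\frac{RL}{B} \to \kappa(G(\kappa)+1)$, hence $R \to G(\kappa)+1 = Y$. When $\kappa = 0$ and $\LPLSL L \to \infty$ the factor tends to $0$, Lemma \ref{L:RB_limit} gives $\frac{RL}{B} \to V$, and solving $R \approx \frac{VB}{L} = \frac{VE[B_1]}{\lambda}$ yields $R \to \lceil VE[B_1]/\lambda \rceil = Y$. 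In each case the exponential (or geometric) randomness defining $Y$ enters only through the global count $R$.

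Next I would determine the sizes. A mutant block is a uniformly chosen block among all $B$ blocks, so the deme it occupies is size-biased: it has $z$ blocks with probability $zP(B_1=z)/E[B_1] = P(\srand = z)$, and, conditioned on that deme having $z$ blocks, the chosen block has relative size distributed as $\brand(z)$; hence each mutant block carries a relative size distributed as $\brand(\srand)$. I would then show that the $R$ mutant blocks occupy $R$ distinct demes with probability tending to $1$: the expected number of colliding pairs is $O(\binom{R}{2} \sum_k (B_k/B)^2) = O(R^2 E[B_1^2]/(E[B_1]^2 d))$, which vanishes because $E[B_1^2]/d \to 0$ (using $R = O(E[B_1])$ to absorb the factor $R^2$ in the $\kappa = 0$ regime). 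On this event $R_k \in \{0,1\}$ for every $k$, so each nonzero $A_k$ equals a single $\brand(\srand)$ and $A_k^2$ equals its square with no cross terms, and the sizes across the distinct demes are asymptotically independent. Assembling these facts gives $d\p \to \sum_{j=1}^Y \brand_j(\srand_j)$ and $d\q \to \sum_{j=1}^Y \brand_j^2(\srand_j)$, and dividing yields (\ref{E:Fst_lim}).

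The main obstacle is the distinct-deme step, since it is exactly what removes the cross terms in $A_k^2$ and turns the mutant-block sizes into an i.i.d. family $\{\brand_j(\srand_j)\}$; controlling the collision probability requires both the bound $R = O(E[B_1])$ and the hypothesis $E[B_1^2]/d \to 0$, and in the borderline island-type regime ($\kappa = 0$, $E[B_1]\to\infty$, $Y \to \infty$) one must check carefully that these still force the expected number of collisions to $0$. A secondary point to verify is the asymptotic independence of the count $Y$, which is driven by the global statistic $\frac{RL}{B}$, from the collection of local block sizes $\brand_j(\srand_j)$; I would establish this by conditioning on $R$ and $\{B_i\}$ and invoking Lemmas \ref{L:scat_prob} and \ref{L:control_Theta_error} to express the block-size moments through the scattering probabilities independently of the global count.
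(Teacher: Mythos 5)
Your proposal follows the same skeleton as the paper's proof: reduce $\p$ and $\q$ to sums over the $R$ mutant blocks, identify $R$ with $Y$ via Lemma \ref{L:RB_limit}, and recognize that a uniformly chosen block sits in a size-biased deme, so that its relative size is governed by $\brand(\srand)$. Two of your bookkeeping choices are actually cleaner than the paper's: you make explicit the $d$-normalization that the paper silently drops (its display (\ref{E:end_step_1}) loses the factor $\frac{1}{d}$ present in the definition of $\p$, and the lemma only makes sense with this rescaling), and your distinct-deme (no-collision) estimate $O(R^2 E[B_1^2]/(E[B_1]^2 d)) \to 0$ supplies precisely the step the paper glosses over when it says (\ref{E:q_lim}) is ``demonstrated in an identical way'' --- without it, $\sum_k A_k^2$ has cross terms between mutant blocks sharing a deme, which the paper's block-level with/without-replacement coupling does not address.

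The gap is in the final assembly in the regime $\kappa = 0$, where $E[B_1] \to \infty$ and hence $Y \to \infty$ while each summand has mean of order $1/E[B_1]$. There, ``each mutant block has size distributed as $\brand(\srand)$, the sizes are asymptotically independent, $R \to Y$, assemble'' is not a proof: marginal convergence of each term plus pairwise asymptotic independence does not yield convergence of a sum whose number of terms diverges. Concretely, for finite $d$ the deme of a uniformly chosen block follows the \emph{empirical} size-biased law $\sum_k \chi(B_k = z) z / B$, not $zP(B_1=z)/E[B_1]$; a term-by-term coupling pays this discrepancy once per summand, i.e. $Y$ times, and under the only standing assumption $E[B_1^2]/d \to 0$ one cannot conclude that $Y$ times the per-term error vanishes. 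This is exactly why the paper switches to characteristic functions: writing $\psi(\nu) = E[\exp[i\nu b_g]]^Y$, Taylor-expanding, and using Lemma \ref{L:scat_prob} to express the moments of $b_g$ through $\SP(k)$, the empirical-law error enters only as the averaged quantity $E[\delta(B_1)\SP(k)]$ multiplied by $Y/E[B_1] = V/\lambda = O(1)$, which vanishes since $\delta(j) \to 0$; averaged errors are much weaker than the total-variation errors a direct coupling must control. Your closing remark --- conditioning on $R$ and $\{B_i\}$ and invoking Lemmas \ref{L:scat_prob} and \ref{L:control_Theta_error} --- is the right tool, but once deployed it is doing the real work of the proof (it is essentially the paper's argument), not the ``secondary point'' your proposal frames it as. In the case $\kappa \ne 0$, where $Y$ is finite in the limit, your direct assembly does go through as written.
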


\begin{proof}
We start by considering $\p$ and $\q$ conditioned on $B$.  To simplify our index notation let $b_1, b_2, \dots, b_B$ be  some ordering of the collection $b_{k,j}$ for $k=1,\dots,d$ and $j=1,\dots,B_k$.  Let $\zeta(k)$ be the sample deme associated with $b_k$.   That is, if $b_h$ is the reindexed version of $b_{k,j}$ then $\zeta(h) = k$.  

	If we condition on $B$, $\p$ and $\q$ are specified by choosing $R$ blocks out of the $B$ possible blocks, where each subset of $R$ is equally likely.    Then we can specify $\p$ through (recall the definition of $\calA$ immediately after (\ref{E:Fst_p}))
\begin{equation}
\p = \frac{1}{d} \sum_{h=1}^R b_{f(h)},
\end{equation}
where f is a random element of $\calA(R,B)$ under the uniform distribution.  Now we let $g_1, \dots, g_R$ be uniform r.v. on $[1,2,\dots,B]$.  Then we claim $\LPLSL \p = \LPLSL \frac{1}{d} \sum_{h=1}^R b_{g_h}$.  We do this through a coupling argument.  We select $g_1, g_2, \dots, g_R$.  If each one is different, then we define $f(h) = g_h$.  If some $g_i = g_{i'}$, then we select $f$ according to its (uniform) probability distribution.  We would like to show that the probability of uncoupling goes to zero in the LPLS limit.  
\begin{equation}
P(\text{uncoupling} \ | \ R,B) \le \binom{R}{2} \frac{1}{B^2} \le (\RLB) \frac{1}{L^2}.
\end{equation}
Lemma \ref{L:RB_limit} shows that $\LPLSL \RLB\bigg|_B$ exists and is independent of $B$ and since $L \to \infty$ we have
\begin{equation}
P(\text{uncoupling} \ | \ B) \to 0.
\end{equation}
which implies
\begin{gather}  \label{E:end_step_1}
\LPLSL \p = \LPLSL \sum_{j=1}^R b_{g_j}, \\ \notag
\LPLSL \q = \LPLSL \sum_{j=1}^R b_{g_j}^2.
\end{gather}

	Now we show that we may replace the $R$ by $Y$.  We restrict our attention to the case $\kappa = 0$ and consider $\p$ only.  The case $\kappa \ne 0$ is much simpler since $R$ converges to a geometric distribution, and the analysis of $\q$ is similar to that of $\p$.  We first show that we can replace $R$ by $Y' = \lceil(\RLB) \frac{E[B_1]}{\lambda} \rceil$.  
\begin{align}  \label{E:Y_prime}
E[|\sum_{j=1}^R b_{g_j} - \sum_{j=1}^{Y'} b_{g_j}|] 
& \le E[\lceil |Y'-R| \rceil]E[b_g] \le E[\lceil|Y'-R|\rceil]
	= E[\lceil (\RLB)(\frac{B}{L} - \frac{E[B_1]}{\lambda}) \rceil]
\\ \notag
	& = E[\lceil (\RLB)\frac{1}{\lambda}(\frac{B}{d} - E[B_1]) \rceil] \to 0.
\end{align}
Finally we would like to show that we can replace $Y'$ by $Y$.  To do this we recall that we have split $[0,\infty)$ into intervals $\Ih$.  By Lemma \ref{L:RB_limit}, $P(\RLB \in \Ih) \to P(V \in \Ih)$.  So we have
\begin{equation}
E[\big|\sum_{j=1}^{Y'} b_{g_j} - \sum_{j=1}^Y b_{g_j}\big| \ | \ V, \RLB \in \Ih] \le E[\sum_{j=1}^{\epsilon \frac{E[B_1]}{\lambda}} b_{g_j}] \le \epsilon \frac{E[B_1]}{\lambda} E[b_g].
\end{equation}
Now note that $E[b_g \ | \ B] = E[\frac{1}{B} \sum_{j=1}^B b_j \ | \ B] = \frac{d}{B}$.  Plugging this observation into the inequality directly above gives
\begin{equation}
E[\big|\sum_{j=1}^{Y'} b_{g_j} - \sum_{j=1}^Y b_{g_j}\big| \ | \ V, \RLB \in \Ih] \le E[\frac{\epsilon}{\lambda} \frac{E[B_1]d}{B}] \to \epsilon.
\end{equation}
Now taking $\epsilon$ to zero shows that we can replace $Y'$ by $Y$.  

	Now we would like to show
\begin{equation}
\LPLSL \sum_{j=1}^Y b_{g_j} = \LPLSL \sum_{j=1}^Y \brand_j(\srand_j)
\end{equation}
To do this we compute the LPLS limit of the characteristic function of $\sum_{j=1}^Y b_{g_j}$, $\psi(\nu)$.  Recall that $\zeta(g)$ is the sample deme to which $b_g$ is associated.
\begin{equation}  \label{E:psi}
\psi(\nu) = E[\exp[i \nu b_g]]^Y = \left(\sum_{j=1}^\infty P(B_{\zeta(g)} = j) E[\exp[i \nu b_g] | B_{\zeta(g)} = j]\right)^Y.
\end{equation}
If we condition on $B_1, B_2, \dots, B_d$ then 
\begin{equation}  \label{E:P_B_zeta}
P(B_{\zeta(g)} = j \ | \ \{B_i\}) = \frac{\sum_{k=1}^d \xi(B_k = j)j}{B} = \frac{\frac{1}{d} \sum_{k=1}^d \xi(B_k = j)j}{\frac{1}{d} B}
\end{equation}
Now note that $\xi(B_k = j)$ are i.i.d so by law of large numbers $\frac{1}{d} \sum_{k=1}^d \xi(B_k = j)j \to P(B_1 = j)j$, while $\LPLSL \frac{B}{dE[B_1]} = 1$.  So defining $\delta_j$ through the following relation
\begin{equation}  \label{E:def_delta}
\frac{\sum_{k=1}^d I(B_k = j)j}{\frac{1}{d} B} =  \frac{P(B_1 = j)j}{E[B_1]}(1 + \delta(j)),
\end{equation}
and $\delta(j) \to 0$.  Plugging (\ref{E:def_delta}) into (\ref{E:P_B_zeta}) and then plugging the result into (\ref{E:psi}) gives
\begin{align}  \label{E:psi_2}
\psi(\nu) & = \left(\sum_{j=1}^\infty (\frac{P(B_1 = j)j}{E[B_1]}(1 + \delta(j))) E[\exp[i \nu b_g] | B_{\zeta(g)} = j]\right)^Y
\\ \notag
	& = \left(\frac{1}{E[B_1]} E[B_1 (1 + \delta(B_1)) \exp[i \nu b_g] \ | \ \zeta(g) = 1]\right)^Y
\end{align}
We now expand $\exp[i \nu b_g]$ in Taylor series.  From Lemma \ref{L:scat_prob}, we have the following relation for the moments of $b_g$, for $k>1$.
\begin{equation}  \label{E:moments_b_g}
E[b_g^k \ | \ \zeta(g) = 1, B_1] =  E[\frac{1}{B_1} \SP(k) \ | \ B_1].
\end{equation}
Plugging (\ref{E:moments_b_g}) into (\ref{E:psi_2}) gives
\begin{equation}
\psi(\nu)  
	= (1 + \frac{1}{E[B_1]} \sum_{k=1}^\infty \frac{(i\nu)^k}{k!} E[(1 + \delta(B_1)) \SP(k)])^Y
\end{equation}
Now recall $Y = \frac{VE[B_1]}{\lambda}$ and notice that $E[B_1] \to \infty$ since $\kappa \to 0$.  These facts lead to
\begin{equation}
\LPLSL \psi(\nu) = \LPLSL \exp[\frac{V}{\lambda} \sum_{k=1}^\infty \frac{(i\nu)^k}{k!} \SPL(k)
			+ O(\frac{V}{\lambda} \sum_{k=1}^\infty \frac{(i\nu)^k}{k!} E[\delta(B_1)])
\end{equation}
But since $\delta(j) \to 0$ for all $j$ we have,
\begin{equation}
\LPLSL \psi(\nu) = \LPLSL \exp[\frac{V}{\lambda} \sum_{k=1}^\infty \frac{(i\nu)^k}{k!} \SPL(k))
\end{equation}
An almost identical argument shows that the characteristic function of $\sum_{j=1}^Y \brand_j(\srand_j)$ converges to the same limit.  
	We have demonstrated (\ref{E:p_lim}).  (\ref{E:q_lim}) is demonstrated in an identical way.  To demonstrate (\ref{E:Fst_lim}) we simply compute the characteristic function of the pair $(\p, \q)$.  The arguments are almost identical to those we made in deriving (\ref{E:p_lim}) so we do not include them here.

\end{proof}

	We are finally ready to state and prove Theorems \ref{T:middle_tree} and \ref{T:top_tree}.  Their proofs are very similar so we prove only Theorem \ref{T:middle_tree}.

\begin{proof}[Theorem \ref{T:middle_tree}]
Set
\begin{gather}
\hat{\p} = \sum_{k=1}^Q X_k, \\ \notag
\hat{\q} = \sum_{k=1}^Q X_k^2.
\end{gather}
Let $\nu = (\nu_1, \nu_2)$.  We need to show
\begin{equation}
\LPLSL E[\exp[i\nu \cdot (\p, \q)]] = E[\exp[i \nu \cdot (\hat{\p}, \hat{\q})]].
\end{equation}
We have actually already done most of the work in the proof of Lemma \ref{L:pq_limits}.  The arguments in the proof of Lemma \ref{L:pq_limits} show
\begin{equation}
\LPLSL E[\exp[i\nu \cdot (\p, \q)]] = \exp[\frac{V}{\lambda} \sum_{k=1}^\infty \sum_{j=0}^k 
					\binom{k}{j} \frac{i^k\nu_1^j \nu_2^{k-j}}{k!} \SPL(2k-j)]
\end{equation}
A standard computation shows that this is exactly the value of $E[\exp[i \nu \cdot (\hat{\p}, \hat{\q})]]$.

\end{proof}

\vspace{1cm}

\begin{flushleft}
\textbf{Acknowledgements} \ \ 
I thank M. Hamilton for innumerable conversations about the current state and future direction of statistical testing in population genetics.
\end{flushleft}

\newcommand{\noopsort}[1]{} \newcommand{\printfirst}[2]{#1}
  \newcommand{\singleletter}[1]{#1} \newcommand{\switchargs}[2]{#2#1}


\end{document}